\newcounter{Frame}
\newenvironment{Frame}[1][htb]{%
\refstepcounter{Frame}
    \begin{mdframed}[%
        frametitle={#1},
        skipabove=\baselineskip plus 2pt minus 1pt,
        skipbelow=\baselineskip plus 2pt minus 1pt,
        linewidth=1.0pt,
        frametitlerule=true,
    ]%
}{%
    \end{mdframed}
}
\newcommand{\sample}{S}
\newcommand{\stream}{\mathcal{S}}
\newcommand{\remove}[1]{}
\newcommand{\ttx}[1]{\texttt{#1}}
\newcommand{\eps}{\epsilon}
\newcommand{\tail}{\text{tail}}
\newcommand{\pr}[1]{\text{\normalfont Pr}\normalfont\left\lbrack #1 \right\rbrack} 
\newcommand{\ex}[1]{\mathbb{E}\normalfont\left\lbrack #1 \right\rbrack}
\newtheorem{claim}{Claim}
\newcommand{\p}{\mathfrak{p}} 
\newcommand{\R}{\mathbb{R}}
\newcommand{\swor}{\texttt{SWOR}}
\newcommand{\saturated}{\texttt{saturated}}
\newcommand{\false}{\texttt{false}}
\newcommand{\true}{\texttt{true}}
\newcommand{\early}{\texttt{early}}
\newcommand{\poly}{\text{poly}}
\newcommand{\regular}{\texttt{regular}}
\newcommand{\thresh}{u}
\theoremstyle{theorem}
\newtheorem{theorem}{Theorem}
\theoremstyle{lemma}
\newtheorem{lemma}{Lemma}
\theoremstyle{corollary}
\newtheorem{corollary}{Corollary}
\theoremstyle{proposition}
\newtheorem{proposition}{Proposition}
\theoremstyle{definition}
\newtheorem{definition}{Definition}
\title{Weighted Reservoir Sampling from Distributed Streams}
\author{
	Rajesh Jayaram\\
	Carnegie Mellon University\\
	\texttt{rkjayara@cs.cmu.edu}
	\and
	Gokarna Sharma \\
	Kent State University\\
	\texttt{gsharma2@kent.edu}
	\and 
	Srikanta Tirthapura\\
	Iowa State University\\
	\ttx{snt@iastate.edu}
	\and
	David P. Woodruff\\
	Carnegie Mellon University \\
	\texttt{dwoodruf@cs.cmu.edu}
}
\begin{document}
\maketitle

\begin{abstract}
We consider message-efficient continuous random sampling from a distributed stream, where the probability of inclusion of an item in the sample is proportional to a weight associated with the item. The unweighted version, where all weights are equal, is well studied, and admits tight upper and lower bounds on message complexity. For weighted sampling with replacement, there is a simple reduction to unweighted sampling with replacement. However, in many applications the stream has only a few heavy items which may dominate a random sample when chosen with replacement. Weighted sampling \textit{without replacement} (weighted SWOR) eludes this issue, since such heavy items can be sampled at most once.

In this work, we present the first message-optimal algorithm for weighted SWOR from a distributed stream. Our algorithm also has optimal space and time complexity.  
As an application of our algorithm for weighted SWOR, we derive the first distributed streaming algorithms for tracking \textit{heavy hitters with residual error}. Here the goal is to identify stream items that contribute significantly to the residual stream, once the heaviest items are removed. Residual heavy hitters generalize the notion of $\ell_1$ heavy hitters and are important in streams that have a skewed distribution of weights. In addition to the upper bound, we also provide a lower bound on the message complexity that is nearly tight up to a $\log(1/\eps)$ factor. Finally, we use our weighted sampling algorithm to improve the message complexity of distributed $L_1$ tracking, also known as count tracking, which is a widely studied problem in  distributed streaming. We also derive a tight message lower bound, which closes the message complexity of this fundamental problem. 

\end{abstract}
\sloppy
\maketitle

\section{Introduction}
We consider the fundamental problem of maintaining a random sample of a data stream that is partitioned into multiple physically distributed streams. In the streaming setting, it is often observed that many distributed monitoring tasks can be reduced to sampling. For instance, a search engine that uses multiple distributed servers can maintain the set of ``typical'' queries posed to it through continuously maintaining a random sample of all the queries seen thus far. Another application is network monitoring, one of the driving applications for data stream processing, which deploys multiple monitoring devices within a network. Each device receives extremely high rate data streams, and one of the most commonly desired aggregates is a random sample over all data received so far~\cite{DLT04,DLT03}.
Oftentimes, the predominant bottleneck in distributed data processing is the network bandwidth, and so it is highly desirable to have algorithms that communicate as few messages as possible. In particular, as the volume of the data and the number of distributed sites observing the streams scale, it becomes infeasible to communicate all of the data points observed to a central coordinator. Thus, it is necessary to develop algorithms which send significantly fewer messages than the total number of data points received.

The above applications have motivated the study of the \textit{continuous, distributed, streaming model~\cite{cmyz12}}, where there are $k$ physically distributed sites, numbered $1$ through $k$. Each site $i$ receives a local stream of data $\stream_i$. The sites are connected to a central coordinator, to which they can send and recieve messages. Queries are posed to the coordinator, asking for an aggregate over $\stream=\cup_{i=1}^{k} \stream_i$, the union of all streams observed so far. The goal is to minimize the {\em message complexity}, i.e., number of messages sent over the network, over the entire observation. \\

\noindent We now define the {\bf distributed weighted random sampling} problem. Each local stream $\stream_i$ consists of a sequence of items of the form $(e,w)$ where $e$ is an item identifier and $w$ is a positive weight. The streams $\stream_i$ are assumed to be disjoint. Let $n$ denote the size of $\stream = \cup_{i=1}^{k} \stream_i$. The task is for the coordinator to continuously maintain a weighted random sample of size $s=\min\{n,s\}$ from $\stream$. Note that the same identifier $e$ can occur multiple times, perhaps in different streams with different weights, and each such occurrence is to be sampled as if it were a different item. We consider two variations of weighted random sampling -- sampling with replacement and sampling without replacement. 

A single weighted random sample from $\stream$ is defined to be an item chosen from $\stream$ where {\em the probability of choosing item $(e, w)$ is proportional to $w$, i.e., equal to $\frac{w}{\sum_{(e',w') \in \stream}w'}$.} 
\begin{definition}
\label{def:wswor}
A \textit{weighted random sample without replacement } \allowbreak (weighted SWOR) from $\stream$ is a set $S$ generated according to the following process. Initially $S$ is empty. For $i$ from $1$ to $s$, a single weighted random sample is chosen from $(\stream \setminus S)$ and added to $S$.
\end{definition}

\begin{definition}
\label{def:wswr}
A \textit{weighted random sample with replacement} (weighted SWR) from $\stream$ is a set $S$ generated according to the following process.
Initially $S$ is empty. For $i$ from $1$ to $s$, a single weighted random sample is chosen from $\stream$ and added to $S$. 
\end{definition}

\begin{definition} \label{def:WSWOR}
A distributed streaming algorithm $\mathcal{P}$ is a weighted sampler without (with) replacement if for each $t > 0$, the coordinator maintains a set $S$ of size $\min\{t,s\}$ such that $S$ is a weighted random sample chosen without (with) replacement from all items seen so far, $\{(e_1,w_1),\dots,(e_t,w_t)\}$.
\end{definition}

Distributed random sampling generalizes the classic {\em reservoir sampling} problem \cite{Knuth,Vitter85} from a centralized to a distributed setting. Random sampling serves as a building block in other aggregation tasks, such as estimation of the number of distinct elements \cite{Cormode:2005,Cormode:2011} and identifying heavy hitters \cite{Babcock:2003,Keralapura:2006,Manjhi:2005,Yi:2013}. Distributed random sampling has also been used in approximate query processing in big data systems such as BlinkDB \cite{BlinkDB}. Unweighted distributed sampling, where all weights are equal, is well studied and admits tight upper and lower bounds on message complexity \cite{TW11,cmyz12,CTW16}. However, to the best of the authors' knowledge, weighted distributed sampling has not been studied so far, though there is prior work in the centralized setting \cite{ES06,BOV15}.

Designing a message-efficient distributed sampler is non-trivial. One challenge is that the system state is distributed across multiple sites. It is not possible to keep the distributed state tightly synchronized, since this requires a significant message overhead. Since the states of the sites are not synchronized with the coordinator, sites may send updates to the coordinator even though the sample at the coordinator does not change. A second challenge is that the traditional metrics for a centralized sampling algorithm, such as space complexity and time per item, do not affect the message complexity of a distributed algorithm. For instance, a centralized algorithm that uses $O(1)$ update time and optimal $O(s)$ space need not lead to a distributed algorithm with optimal messages, since the number of messages sent depends on how many times the random sample changes according to the views of the sites and the coordinator. Finally, we emphasize the fact that, in Definition \ref{def:WSWOR}, the protocol must maintain a weighted SWOR \textit{at all times} in the stream.  There is no notion of failure in the definition of a weighted sampler -- the protocol can never fail to  maintain the sample $S$. These two features make the problem substantially more challenging.
\subsection{Contributions}
Let $k$ denote the number of sites, $s$ the desired sample size, and $W$ the total weight received across all sites.

\medskip
\noindent $\bullet$ We present an algorithm for weighted SWOR that achieves an optimal expected message complexity of $O\left( k \frac{\log(W/s)}{\log(1+k/s	)} \right)$.
The algorithm uses an optimal $\Theta(1)$ space at each site, and an optimal $\Theta(s)$ space at the coordinator. The update time of our algorithm is also optimal: $\Theta(1)$ for a site to process an update, and $O\left( k \frac{\log(W/s)}{\log(1+k/s	)} \right)$
total runtime at the coordinator. Our algorithm is the first message-optimal algorithm for distributed weighted SWOR. We note that a message-efficient algorithm for weighted SWR follows via a reduction to the unweighted case, and obtaining an algorithm for SWOR is significantly harder, as described below.

\medskip

\noindent $\bullet$ As an application of our weighted SWOR, we provide the first distributed streaming algorithm with small message complexity for continuously monitoring heavy hitters with a \textit{residual} error guarantee. This allows us to identify heavy hitters in the residual stream after extremely heavy elements are removed. A residual error guarantee is stronger than the guarantee provided by $\ell_1$ heavy hitters, and is especially useful for streams where the weight of items is highly skewed~\cite{BICS10}. The expected message complexity of our algorithm is $O(k \log(W)/ \log(k) + \log(1/\eps)\log(W)/\eps)$. We prove that our algorithm is nearly optimal, by also giving a $\Omega(k \log(W)/\log(k) + \log(W)/\eps)$ lower bound, which is tight up to a $\log(1/\eps)$ factor in the second term. 

\medskip

\noindent $\bullet$ We demonstrate another application of our sampling algorithms to the well-studied problem of $L_1$ tracking (or count tracking) in distributed streams, which requires the coordinator to maintain a $(1\pm \eps)$ relative error approximation of the total weight seen so that at any given time, with constant probability, the estimate is correct. 

For the case $k \geq 1/\eps^2$, the best known upper bound was $O(k \log W)$ messages in expectation \cite{huang2012randomized}. Our algorithm for $L_1$ tracking uses $O(  k\log(W)\allowbreak/\log(k)\allowbreak  + \log(\eps W)/\eps^2)$ messages in expectation, which improves on the best known upper bound when $k \geq 1/\eps^2$. In this setting, we also improve the lower bound from $\Omega(k)$ to $\Omega(k \frac{\log(W)}{\log(k)})$. 

For the case $k \leq 1/\eps^2$, matching upper and lower bounds of $\Theta((\sqrt{k}/\eps)\log W)$ were known \cite{huang2012randomized}. When $k \geq 1/\eps^2$, the lower bound of \cite{huang2012randomized} becomes $\Omega(\log(W)/\eps^2)$. So if $k \geq 1/\eps^2$ and $ k\frac{\log(W)}{\log(k)} < \log(W)/\eps^2$, our upper bound is $O(\log(W)/\eps^2)$, which is tight, and otherwise our upper bound is $O(k \log(W)/\log(k))$, which is also tight. Thus, combined with the upper and lower bounds of \cite{huang2012randomized}, our results close the complexity of the distributed $L_1$ tracking problem.

\subsection{Our Techniques}
For the problem of sampling \textit{with} replacement, there is a relatively straightforward reduction from the weighted to the unweighted case, which we elaborate on in Section \ref{sec:basic}. The reduction involves duplicating a item $(e,w)$ a total of $w$ times into \textit{unweighted} updates, and does not require an increase in message complexity. On the other hand, there are inherent difficulties in attempting to carry out a similar reduction for sampling \textit{without} replacement, which we will now discuss.  

\vspace{2mm} 
\noindent{\bf On the Difficulty of a Reduction from Weighted SWOR to Unweighted SWOR:~} 
We now examine the difficulties which arise when attempting to reduce the problem of weighted SWOR to unweighted SWOR. We consider the following natural candidate reduction.
Given a weighted stream of items $\stream = \{(e_i,w_i)| i= 1 \ldots n\}$, consider an unweighted stream $\stream'$ where for each $(e_i,w_i) \in \stream$, there are $w_i$ copies of $e_i$ in $\stream'$. Note that $\stream'$ can be constructed in a streaming manner as items of $\stream$ are seen. 

Let $S'$ be an unweighted SWOR of $s$ items from $\stream'$. We remark that \textit{if $S'$ consists of $s$ distinct identifiers}, then those distinct identifiers are in fact a weighted SWOR of $\stream$. The difficulty of using this reduction is that of ensuring $s$ distinct items within $S'$. One could consider a method that maintains an unweighted SWOR of size greater than $s$ in $S'$, expecting to get at least $s$ distinct identifiers among them. However, this is not straightforward either, due to the presence of heavy-hitters (items with very large weight) which may contribute to a large fraction of the total weight in $\stream$. For instance, if there are $s/2$ items that contribute to a more than $1-1/(100s)$ fraction of the total weight within $\stream$, then $S'$ is likely to contain only identifiers corresponding to these items. This makes it very unlikely that $S'$ has $s$ distinct identifiers, even if the size of $S'$ is much larger than $s$. If the number of distinct items in $S'$ falls below $s$, then one could invoke a ``recovery'' procedure that samples further items from the stream to bring the sample size back to $s$, but this itself will be a non-trivial distributed algorithm. Moreover, re-initializing or recovering the algorithm would be costly in terms of message complexity, and introduce unintended conditional dependencies into the distribution of the output. Note that one cannot apply distinct sampling~\cite{GT01,Gibbons01} to maintain $s$ distinct items from $\stream'$, since distinct sampling completely ignores the frequency of identifiers in $\stream'$ (which correspond to weights in $\stream$), while we do not want this behavior -- items with a greater weight should be chosen with a higher probability. 

Thus, one of the primary difficulties with an algorithm for weighted SWOR is to handle heavy hitters. One could next consider a method that explicitly maintains heavy hitters within the stream (using a streaming algorithm). On the remainder of the stream excluding the heavy hitters, one could attempt to apply the reduction to unweighted SWOR as described above. However, this does not quite work either, since after removing the heavy hitters from the stream, the remaining weight of the stream may still be dominated by just a few items, and the same difficulty persists. One has only swapped one set of heavy hitters for another. Such ``residual heavy hitters'' may not be heavy hitters in the original stream, and may have evaded capture when the first heavy hitters were identified. This may proceed recursively, where the weight of the stream is still dominated by only a few items after removing the heavy hitters and the residual heavy hitters. Therefore, heavy hitter identification does not solve our problem and further ideas are needed.

\vspace{2mm}

\noindent{\bf Algorithmic Ideas for Weighted SWOR:} Our main algorithm for weighted SWOR combines two key ideas. 
Our first key technical contribution is to divide the items into multiple ``level sets'' according to the magnitude of their weights. All items within a single ``level set'' have weights that are close to each other. Our algorithm withholds an item from being considered by the sampler until the level set that the item belongs to has a (pre-specified) minimum number of items. This property ensures that when an item is considered for sampling, there are at least a minimum number of items of the same (or similar) weight, so that the difficulty that we faced with extreme heavy hitters does not surface. When a level set reaches a certain size, all items within the level set are released for further sampling. By choosing the weights of different levels to increase geometrically, we ensure that the number of level sets remains small, and the overhead of additional messages for filling up the level sets is small. While an item is withheld, we also run a procedure so that, at every time step, it will still be output in the sample at that time with the correct probability. Thus, the notion of withholding an item applies means only that it is withheld from an internal sampling algorithm (a subroutine), and not the overall sampler. Note that, for a distributed sampler to be correct, it cannot entirely withhold an item from being sampled, even for a single time step. 


The second idea is that of ``precision sampling''. Originally used in \cite{andoni2010streaming} for sampling in non-distributed data streams, and then extended by \cite{Jowhari:2011, jayaram2018perfect}, the idea of precision sampling is to scale each weight $w_i$ by an i.i.d. random variable $x_i$, which generates a ``key'' $v_i = w_i x_i$ for each item $(e_i,w_i)$. One then looks at the resulting vector $v = (v_1,v_2,\dots,v_n)$ of keys, and returns the largest coordinates in $v$. In particular, \cite{jayaram2018perfect} used the random variables $x_i = 1/t_i^{1/p}$ where $t_i$ is \textit{exponentially distributed}, to develop perfect $L_p$ samplers. A similar idea is also used in ``priority sampling''~\cite{DLT07} which was developed in the context of network monitoring to estimate subset sums. We use precision sampling to generate these keys for each item, such that the items with the $s$ largest keys form the weighted SWOR of the stream. 

It is not difficult to see that if each site independently ran such a sampler on its input--storing the items with the $s$ largest keys--and sent each new sample to the coordinator, who then stores the items with the overall $s$ largest keys, one would have a correct protocol with $O(ks\log(W))$ expected communication. This could be carried out by generating the keys $w_i/t_i$ with exponential variables $t_i$, or also by using the keys $u_i^{1/w_i}$ with $u_i$ uniform on $(0,1)$, as used for non-distributed weighted sampling without replacement in \cite{ES06}. Thus, a key challenge addressed by our work is to improve the na\"ive multiplicative bound of $\tilde{O}(ks)$ to an additive bound of  $\tilde{O}(k+s)$.


We also remark that by duplicating weighted items \textit{in combination} with our new level set technique, it may be possible to adapt the message-optimal \textit{unweighted} sampling algorithms of \cite{TW11,cmyz12} (see also~\cite{TW19-arxiv}) to yield a weighted SWOR. The approach would nonetheless require the use of level sets, along with a similar analysis as provided in this work, to remove extremely heavy items from the stream which would cause issues with these samplers. We believe that our approach by scaling an entire weight by a random variable, rather than manually duplicating weighted items $(e,w)$ into $w$ unweighted items, is more natural and simpler to understand. Moreover, it is likely that we obtain improved runtime bounds at the sites by using the algorithm presented in this paper (which are runtime optimal). 

\vspace{2mm}
 
\noindent{\bf Residual Heavy Hitters:}
For the problem of monitoring heavy hitters, the technique of sampling \textit{with replacement} has been frequently applied. By standard coupon collector arguments, taking $O(\log(1/\eps)/\eps)$ samples with replacement is enough to find all items which have weight within an $\eps$ fraction of the total. On the other hand, it is possible and frequently the case that there are very few items which contain nearly all the weight of the stream. This is precisely the domain where SWOR achieves remarkably better results, since a with replacement sampler would only ever see the heavy items. Using this same number of samples \textit{without replacement}, we demonstrate that we can recover all items which have weight within an $\eps$ fraction of the total \textit{after} the top $1/\eps$ largest items are removed. This is \textit{residual error} guarantee is much stronger in the case of skewed distributions of weights, and is a important application of SWOR.

\vspace{2mm}
 
\noindent{\bf $L_1$ Tracking:} Finally, we observe that the following desirable property of our weighted SWOR: namely that, once the heavy hitters are withheld, the values of the keys stored by the algorithm at any time provide good estimates of the total $L_1$ (the sum of all the weights seen so far). By taking enough samples, we can show that the $s$-th order statistic, that is the $s$-largest key overall, concentrates around the value $W^t/s$, up to a $(1 \pm \eps)$ factor, where $W^t$ is the sum of all the weights up to a given time $t$. Unfortunately, withholding heavy items alone is not sufficient for good message complexity, as one must naively withhold an extra $\Theta(1/\eps)$ factor more heavy items than the size of $s$ to obtain good concentration. To avoid a $\Theta(1/\eps)$ blow-up in the complexity, we must remove heavy hitters in another way. To do this, we duplicate updates instead of withholding them, a trick used in \cite{jayaram2018perfect} for a similar sampler. This observation yields an optimal algorithm for distributed $L_1$ tracking for $k \geq 1/\epsilon^2$, by using our weighted SWOR as a subroutine. Previously an optimal algorithm for $L_1$ tracking was known only for $k < 1/\epsilon^2$. 

\subsection{Related Work}
Random sampling from a stream is a fundamental problem and there has been substantial prior work on it. The reservoir sampling algorithm (attributed to Waterman \cite{Knuth}) has been known since the 1960s. There has been much follow-up work on reservoir sampling including methods for speeding up reservoir sampling~\cite{Vitter85}, sampling over a sliding window~\cite{BOZ12,GT02,XTB08,BDM02,Gemulla:2008}, and sampling from distinct elements in data~\cite{GT01,Gibbons01}.

The sequential version of weighted reservoir sampling was considered by Efraimidis and Spirakis~\cite{ES06}, who presented a one-pass $O(s)$ algorithm for weighted SWOR. Braverman et al.~\cite{BOV15} presented another sequential algorithm for weighted SWOR, using a reduction to sampling with replacement through a ``cascade sampling'' algorithm.
Unweighted random sampling from distributed streams has been considered in prior works~\cite{cmyz12,CTW16,TW11}, which have yielded matching upper and lower bounds on sampling without replacement. Continuous random sampling for distinct elements from a data stream in a distributed setting has been considered in \cite{CT15}.

There has been a significant body of research on algorithms and lower bounds in the continuous distributed streaming model. This includes algorithms for frequency moments, \cite{Cormode:2011,Cormode:2005}, entropy estimation \cite{Arackaparambil:2009,ChenZ17}, heavy hitters and quantiles ~\cite{Yi:2013}, distributed counts~\cite{huang2012randomized}, and lower bounds on various statistical and graph aggregates~\cite{WoodruffZ17}.  



\vspace{2mm}

\noindent {\bf Roadmap:} We present preliminaries and basic results in Section~\ref{sec:prelims}, followed by an optimal algorithm for weighted SWOR in Section~\ref{sec:SWORexpo}, applications to residual heavy hitters and lower bounds in  Section~\ref{sec:reshh}, applications to $L_1$ tracking and lower bounds in Section~\ref{sec:l1}, and concluding remarks in Section~\ref{sec:conc}.


\section{Preliminaries and Basic Results}
\label{sec:prelims}

\subsection{Preliminaries}
As in prior work in the continuous distributed streaming model, we assume a {\em synchronous} communication model, where the system operates in {\em rounds}, and in each round, each site can observe (at most) one item, and send a message to the coordinator, and receive a response from the coordinator. We assume that the messages are delivered in FIFO order (no overtaking of messages), there is no message loss, and the sites and the coordinator do not crash. We make no assumption on the sizes of the different local streams received by different sites, the order of arrival, and the interleaving of the streams at different sites. The only assumption we have is a global ordering of the stream items by their time of arrival onto one of the sites. If $n$ is the total number of items observed in the system, then for $j = 1 \ldots n$, $o^j = (e_j,w_j)$ is the $j$th item observed in the total order. The partitioning of the items across different processors is carried out by an adversary.

Our space bounds are in terms of machine words, which we assume are of size $\Theta(\log(nW))$ bits, and that arithmetic operations on machine words can be performed in $O(1)$ time. We also assume that an element identifier and weight fits in a constant number of words. In our algorithms, the length of a message is a constant number of words, so that the number of messages is of the same order as the number of words communicated over the network. For simplicity (but without loss of generality), in the following sections we assume each weight $w_j$ satisfies $w_j \geq 1$. Since each weight $w_j$ can be written in a constant number of machine words by assumption, it follows that we could always scale the weights by a polynomial factor to ensure $w_j \geq 1$, which blows up the complexity of our algorithms only by a constant, since we only have logarithmic dependency on the total weight. 




\subsection{Basic Results}\label{sec:basic}
We first present some basic results for weighted SWR and weighted SWOR. Let $n$ denote the number of items received in the stream. 

\medskip

\noindent{\bf Algorithm for Weighted SWR:} We first derive a message-efficient algorithm for weighted SWR using a reduction to unweighted SWR. For this reduction, we assume that the weights $w_i$ are integers. We first note the following result, from~\cite{cmyz12}.

\begin{theorem}[\cite{cmyz12}]
\label{thm:unweighted-swr-cmyz}
There is a distributed algorithm for unweighted SWR with message complexity $O((k + s \log s) \frac{\log n}{\log (2+k/s)} )$. The coordinator needs $O(s)$ space and $O((k + s \log s) \frac{\log n}{\log(2+k/s)} )$ total time; each site needs $O(1)$ space and $O(1+ \frac{s}{n}\log s \frac{\log n}{\log(2+k/s)} )$ time per item amortized.
\end{theorem}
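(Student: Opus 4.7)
The plan is to simulate $s$ parallel single-item reservoir samplers that share a common broadcast threshold. For each arriving item $e$, generate $s$ independent uniform $[0,1]$ keys $u_1(e), \dots, u_s(e)$; the $j$-th output sample is defined to be the item with the maximum $u_j$-key. Correctness (i.e., that the output is distributed as a SWR of size $s$) follows coordinate-wise from the standard reservoir-sampling argument. To keep messages low, the coordinator maintains the running maxima $\tau_1,\dots,\tau_s$, broadcasts only a single aggregated threshold $\tau$ (essentially $\min_j \tau_j$ rounded down to a geometric grid), and each site transmits an item $e$ only if $\max_j u_j(e) > \tau$.

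For the message analysis I would partition the execution into \emph{phases} delimited by updates of the broadcast threshold $\tau$. Each broadcast contributes $k$ messages. Within a single phase, I would argue that the expected number of items sent from sites to the coordinator is $O(k + s \log s)$: the $O(k)$ term absorbs at most one pending ``old-threshold'' message per site, and the $O(s \log s)$ term comes from a coupon-collector-style argument, since growing $\min_j \tau_j$ by a fixed multiplicative factor requires roughly $\log s$ new keys to land in each of the $s$ coordinates. If the update rule for $\tau$ is set so that each phase corresponds to shrinking the acceptance probability $1 - \tau$ by a factor of $(1 + k/s)$, the number of phases over a stream of length $n$ is $O(\log n / \log(2 + k/s))$, and multiplying yields the claimed total of $O((k + s\log s)\,\log n / \log(2 + k/s))$.

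The main obstacle is calibrating the phase-growth factor and proving the per-phase bound. The factor $(1+k/s)$ is the sweet spot: too aggressive and the $k$ per-broadcast messages dominate, too lazy and the per-phase site-to-coordinator cost blows up. Arguing the $O(k + s\log s)$ cost per phase is delicate because the $s$ samplers are not visible separately to the sites (which see only $\tau$); I would couple the joint process with $s$ independent samplers and use the fact that, conditioned on being above $\tau$, the keys are still uniform on $(\tau, 1]^s$, so a standard coupon-collector bound applies to the number of keys needed before all $\tau_j$ move past the next grid point. Once the message bound is in place, the space bounds are immediate ($O(s)$ at the coordinator to hold a heap of top-$s$ candidates, $O(1)$ at each site for $\tau$), and the amortized per-item time at the sites follows by charging the broadcast work evenly across items within a phase, with the coordinator doing $O(\log s)$ heap work per received message for a total time of the same order as the message complexity.
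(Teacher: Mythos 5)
This statement is a citation to \cite{cmyz12}; the paper itself gives no proof of it, so I can only evaluate your sketch on its own terms, not against an in-paper argument.

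Your high-level architecture (run $s$ independent max-key reservoir samplers, filter at the sites with a single lazily-broadcast threshold, and charge messages to geometrically-spaced phases) is a sensible way to attack the problem and is close in spirit to the CMYZ framework. The gap is in the per-phase accounting. You claim that with phase boundaries defined by shrinking $1-\tau$ by a factor $\beta = 1+k/s$, the expected number of site-to-coordinator messages in a phase is $O(k + s\log s)$, with the $s\log s$ coming from coupon collector. But check the arithmetic: a phase ends only when every coordinate-maximum $\tau_j$ has crossed the next grid point $\tau'$ with $1-\tau' = (1-\tau)/\beta$, and by the max-of-$s$-geometrics bound this takes on the order of $(\log s)/(1-\tau')$ stream items. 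Each such item passes the shared filter with probability $1-\tau^s \approx s(1-\tau)$, so the expected number of passing items in the phase is on the order of $s(1-\tau)\cdot(\log s)/(1-\tau') = s\beta\log s = \Theta\bigl((s+k)\log s\bigr)$, not $O(k + s\log s)$. When $k \gg s$ this is a $\log s$ factor larger than you claim, and multiplying by your phase count $\log n/\log(2+k/s)$ gives a bound of order $(s+k)\log s\cdot\log n/\log(2+k/s)$, which misses the stated $(k + s\log s)\log n/\log(2+k/s)$. Choosing a gentler grid ratio (roughly $1 + k/(s\log s)$, which makes the per-phase cost $O(k)$) helps but then the number of phases becomes $\log n/\log(1+k/(s\log s))$, which can still exceed $\log n/\log(2+k/s)$ by up to a $\Theta(\log\log s)$ factor. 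So the coupon-collector phase argument, as sketched, leaves a genuine multiplicative gap; you need a sharper accounting of how many of the $s$ samplers actually remain below the next grid point at a phase boundary, or a different round-advancement rule, to hit the stated bound.

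A second, more minor point: as described, a site must materialize $s$ keys per item to evaluate $\max_j u_j(e) > \tau$, which is $\Theta(s)$ time per item and contradicts the $O(1 + \cdots)$ amortized per-item time in the statement. This is fixable by the trick the paper itself uses in the proof of Corollary~\ref{cor:wswor}: first draw the number of coordinates that would exceed $\tau$ from the binomial $B(s, 1-\tau)$ and only then, if it is positive, instantiate the relevant keys. You should say this explicitly, since otherwise the time bound does not follow.
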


\begin{corollary}
\label{cor:wswor}
There is a distributed algorithm for weighted SWR with message complexity $O((k + s \log s) \frac{\log W}{\log (2+k/s)} )$ where $W$ is the total weight received so far. The coordinator needs $O(s)$ space and $O((k + s \log s) \frac{\log W}{\log(2+k/s)} )$ total time. Each site needs $O(1)$ space and the amortized processing time per item at a site is 
$O(1+ \frac{1}{n}(k + s \log s) \frac{\log W}{\log (2+k/s)} )$.
\end{corollary}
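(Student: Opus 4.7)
The plan is to reduce weighted SWR to unweighted SWR via a virtual expansion of the input stream. For each weighted item $(e,w)$ arriving at a site, imagine a sub-stream of $w$ unweighted copies of $e$ appended (in place) to that site's input. Let $\stream'$ denote the global unweighted stream obtained this way; its length is $W$. The first step is to verify the distributional correctness: a single uniform unweighted sample from $\stream'$ puts mass $w_i/W$ on the $i$-th original weighted occurrence, matching Definition~\ref{def:wswr}, and because the $s$ samples in Theorem~\ref{thm:unweighted-swr-cmyz} are independent across slots, drawing $s$ with replacement from $\stream'$ is exactly a weighted SWR of $\stream$.

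The nontrivial step is implementing this reduction without paying $\Theta(w)$ per weighted item at a site. The idea is batch simulation: at a site, the unweighted algorithm of \cite{cmyz12} triggers communication (and local state change) only when, for some sample slot, the new arrival would displace the current representative. For a single slot, whether any of $w$ identical virtual copies causes such a trigger, and which one, is governed by a simple ``record-value'' event whose law depends only on the current global count and on $w$; it can therefore be sampled in $O(1)$ expected time via a geometric-style skip computation (e.g.\ draw the number of copies until the next trigger from the appropriate geometric distribution, compare against $w$, and if it fits, draw the new key conditioned on beating the current threshold). Summing over slots and amortizing, the per-weighted-item work at a site matches the bound in Theorem~\ref{thm:unweighted-swr-cmyz} with $n$ replaced by $W$.

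For the final bounds I would substitute $n\mapsto W$ in Theorem~\ref{thm:unweighted-swr-cmyz}: the message complexity becomes $O((k+s\log s)\tfrac{\log W}{\log(2+k/s)})$ and the coordinator's time and space are unaffected by the virtual expansion (the coordinator only ever sees actual messages, and there is one of these per genuine state change). The site stores only $O(1)$ words of state (the current sample representative(s) and the skip counters), independent of $w$. The per-item amortized time claimed in the corollary, $O(1+\tfrac{1}{n}(k+s\log s)\tfrac{\log W}{\log(2+k/s)})$, follows by distributing the total work---the $O(1)$ batch simulation cost per weighted item plus the global work needed to transmit and process messages---across the $n$ weighted arrivals.

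The main obstacle is the batch-simulation correctness argument: one must verify that sampling ``first trigger among $w$ identical virtual items'' from the correct conditional distribution reproduces exactly the joint distribution the unweighted algorithm would have produced on $\stream'$. This is really a statement about the underlying sampler of \cite{cmyz12} being invariant under such compressions of runs of identical items, and it hinges on the fact that its trigger probabilities depend only on the current global index and on the slot's current key, not on the identity of the incoming item. Once that invariance is established, the rest of the corollary is a direct plug-in of Theorem~\ref{thm:unweighted-swr-cmyz} with $n=W$.
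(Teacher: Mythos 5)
Your reduction is the same one the paper uses, and your distributional-correctness argument and the substitution $n\mapsto W$ for the message and space bounds are all fine. The gap is in the per-item runtime at the site. You correctly identify that one must avoid $\Theta(w)$ work per weighted item, and your geometric-skip idea handles the $w$ virtual copies within one sample slot in $O(1)$ expected time. But then you say ``summing over slots,'' and there are $s$ slots: that gives $O(s)$ expected work per weighted item, which is a factor of $s$ worse than the claimed $O\bigl(1+\tfrac{1}{n}(k+s\log s)\tfrac{\log W}{\log(2+k/s)}\bigr)$ amortized bound. Your batching collapses the $w$-axis but not the $s$-axis.

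The paper's proof closes this by batching across slots as well. In the unweighted sampler of \cite{cmyz12}, in round $j$ every slot independently triggers a send with probability $2^{-j}$; so after absorbing the $w$ virtual copies, each of the $s$ slots triggers independently with the same probability $\alpha(w,j)=1-(1-2^{-j})^w$. Rather than looping over slots, the site samples a single value $X\sim\mathrm{Binomial}(s,\alpha(w,j))$, and if $X>0$ it picks a uniformly random size-$X$ subset of slots to send for. This costs $O(1+X)$ time, the $X$'s sum to the total number of regular messages, and the whole thing amortizes to $O(1)$ per real item plus message cost. That is the missing step; without it your argument only establishes an $O(s)$ per-item site time, not the bound stated in the corollary.

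Your correctness worry in the last paragraph (invariance of the sampler under compressing runs of identical items) is indeed the right thing to check, and it holds precisely because, as you note, the trigger probability in \cite{cmyz12} depends only on the round index and not on the item's identity — the same fact that makes the binomial batching across slots valid.
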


\begin{proof}
We reduce weighted SWR to unweighted SWR as follows. Given stream of items with weights $\stream = (e_i,w_i), i= 1 \ldots n$, consider an unweighted stream $\stream'$ where for each $(e_i,w_i) \in \stream$, there are $w_i$ copies of $e_i$ in $\stream'$. Note that $\stream'$ can be constructed in a streaming manner as items of $\stream$ are seen. 

Note that an unweighted SWR of $s$ items chosen from $\stream'$ is a weighted SWR of $s$ items chosen from $\stream$. To prove this, let $S'$ denote an unweighted SWR of size $s$ from $\stream'$. Consider an arbitrary item in $S'$. Since there are $w_i$ copies of $e_i$ in $\stream'$, this item is likely to be $e_i$ with probability $w_i/\sum_{j}w_j$, and hence obeys the distribution we desire of a single item in a weighted SWR of $\stream$. Since the items of $S'$ are chosen independent of each other, the entire sample $S'$ has the same distribution as a weighted SWR of $s$ items from $\stream$. The number of items in $\stream'$ is equal to $W$, the total weight of $\stream$. The message complexity, as well as the space complexity at the sites and the coordinator follow from Theorem~\ref{thm:unweighted-swr-cmyz}. 

The processing time per item at the site needs additional work. An unweighted SWR sampler simulates the execution of $s$ independent copies of a single element sampler. Each element $(e,w)$ in the weighted input stream leads to $w$ elements into each of the $s$ samplers. Naively done, this reduction leads to a total runtime of $O(sw)$, which can be very large. To improve on this, we speed up the sampling as follows. We note that in the algorithm in Section 3.2 of \cite{cmyz12} for a single element unweighted SWR, in round $j$, an element is sent to the coordinator with probability $2^{-j}$ \footnote{The algorithm in \cite{cmyz12} divides execution into rounds; the exact definition of a round does not impact our analysis here, and is hence not provided.} When $w$ elements are inserted, as in our reduction, the probability that at least one of them is sent to the coordinator is $\alpha(w,j) = 1-{(1-2^{-j})}^w$. The site can simply send the element $(e,w)$ to the coordinator with probability $\alpha(w,j)$. Repeating this $s$ times leads to a runtime of $O(s)$ per element. To further improve on this, note that across all the $s$ single element samplers, the number of samplers that send a message to the coordinator is a binomial random variable $B(s,\alpha(w,j))$. In the improved algorithm, the site samples a number $X$ from this distribution. If $X > 0$, it chooses a random size $X$ subset of the $s$ samplers, and send $(e,w)$ to the coordinator for each chosen sampler -- as also noted in \cite{cmyz12}, this leads to the same distribution as making an independent decision for each sampler. The total time taken at the sites is now of the same order as the number of messages sent to the coordinator, which is $O\left((k + s \log s) \frac{\log W}{\log (2+k/s)} \right)$. The amortized time per element at the site is thus $O\left(1+ \frac{1}{n}(k + s \log s) \frac{\log W}{\log (2+k/s)} \right)$.
\end{proof}

\medskip

\noindent{\bf Lower Bound for Weighted SWOR:} We next present a lower bound on the message complexity of weighted SWOR, which follows from prior work on the message complexity of unweighted sampling without replacement. Let $s$ denote the desired sample size and $k$ the number of sites.

\begin{theorem}[\cite{TW11}]
\label{thm:unweighted-lb}
For a constant $q, 0 < q < 1$, any correct algorithm that continuously maintains an unweighted random sample without replacement from a distributed stream must send $\Omega \left (\frac{k \log (n/s)}{\log (1+(k/s))} \right )$ messages with probability at least $1-q$ and where the probability is taken over the algorithm's internal randomness, and where $n$ is the number of items.
\end{theorem}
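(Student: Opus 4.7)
Since the theorem is directly attributed to [TW11], the cleanest proof is to invoke their lower bound as a black box; their argument targets unweighted SWOR, which is exactly the statement here. In case a re-derivation were desired, the plan would be to exhibit a hard input distribution and apply Yao's minimax principle to any correct deterministic algorithm on that distribution.

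The hard distribution I would use is a stream of $n$ items with distinct identifiers, where each item is routed to one of the $k$ sites uniformly and independently at random. On this input, the correct sample at time $j$ is a uniformly random size-$\min(j,s)$ subset of the items seen so far. I would partition the stream into $T = \Theta(\log(n/s)/\log(1+k/s))$ epochs, where epoch $t$ spans positions roughly $s(1+k/s)^t$ through $s(1+k/s)^{t+1}$. Using the standard fact that position $j$ enters the sample with probability $s/j$, a short calculation shows that in each epoch the sample is refreshed (at least one new item enters) with constant probability, and that these refreshing events across epochs are essentially independent due to the memorylessness of uniform routing.

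The technical heart is to show that each refreshed epoch forces $\Omega(k)$ messages in expectation. The intuition is information-theoretic: the coordinator must eventually learn the identity of the site holding the newly sampled item, yet the uniform routing leaves it with essentially no prior information about which site to query. Standard reductions (e.g.\ from multi-party set-disjointness, or direct entropy lower bounds on the full protocol transcript) yield $\Omega(k)$ per refreshed epoch, and summing over the $T$ epochs gives the claimed bound. Converting the in-expectation bound into the stated $1-q$ high-probability guarantee then uses a standard Markov/Chernoff argument combined with truncation of the transcript length.

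The main obstacle is precisely the per-epoch $\Omega(k)$ step: one must rule out the possibility that messages sent in earlier epochs ``pre-pay'' for information required in later epochs, since the protocol is adaptive and continuous. The standard remedy, carried out carefully in [TW11], is to define an epoch-local transcript and argue that its information content is independent across epochs, by exploiting that the routing and identifiers used in each epoch are freshly drawn. This per-epoch independence argument is the nontrivial core of the [TW11] proof, which is why invoking the result as a black box is the most sensible path for our purposes.
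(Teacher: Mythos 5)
The paper gives no proof of this theorem --- it attributes the bound directly to \cite{TW11} and uses it as a black box (to derive Corollary~\ref{cor:weighted-lb}), which is exactly what you propose, so your main approach matches the paper's. One caveat on your re-derivation sketch: the per-epoch $\Omega(k)$ cost does not arise because the coordinator must ``learn the identity of the site holding the newly sampled item'' (that costs a single message when the site forwards it); it arises because all $k$ sites must stay approximately synchronized with the coordinator's current threshold (the $s$-th largest key) so that they neither drop an item that belongs in the sample nor over-send, and the adversarial/random routing forces this synchronization to be re-established in every geometric round --- that threshold-coordination argument, not site identification, is the actual core of the \cite{TW11} lower bound.
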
 

\begin{corollary}
\label{cor:weighted-lb}
For a constant $q, 0 < q < 1$, any correct algorithm that continuously maintains a weighted random sample without replacement from $\stream$ must send $\Omega \left (\frac{k \log (W/s)}{\log (1+(k/s))} \right )$ messages with probability at least $1-q$, where the probability is taken over the algorithm's internal randomness, and $W$ is the total weight of all items so far.
\end{corollary}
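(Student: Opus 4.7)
My plan is to obtain the lower bound via a direct reduction from unweighted SWOR to weighted SWOR. Suppose for contradiction that we had a weighted SWOR algorithm $\mathcal{P}$ that, on any weighted stream of total weight $W$, uses $o(k \log(W/s)/\log(1+k/s))$ messages with probability greater than $q$. I will use $\mathcal{P}$ as a black box to build an unweighted SWOR algorithm and then invoke Theorem~\ref{thm:unweighted-lb}.

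The reduction is as follows. Given any unweighted distributed stream $\stream$ of $n$ items, simulate $\mathcal{P}$ on the stream $\stream^\star$ obtained by assigning weight $w_i = 1$ to every item $e_i$ of $\stream$. Each site can do this locally and costlessly, so the number of messages exchanged while running $\mathcal{P}$ on $\stream^\star$ is exactly the number of messages exchanged in the simulated unweighted algorithm. By Definition~\ref{def:wswor}, when all weights are equal the probability of selecting any remaining item at each step of the weighted SWOR process becomes uniform over the unsampled items; hence the set $S$ maintained by $\mathcal{P}$ on $\stream^\star$ is distributed exactly as an unweighted SWOR of size $\min\{t,s\}$ from $\stream$ at every time $t$. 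So the simulation is a correct distributed unweighted SWOR algorithm.

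The total weight of $\stream^\star$ is $W = n$, so by assumption the simulation sends $o(k\log(n/s)/\log(1+k/s))$ messages with probability greater than $q$. This contradicts Theorem~\ref{thm:unweighted-lb}, which states that any correct unweighted SWOR protocol must send $\Omega(k \log(n/s)/\log(1+k/s))$ messages with probability at least $1-q$. Therefore $\mathcal{P}$ must send $\Omega(k \log(W/s)/\log(1+k/s))$ messages with probability at least $1-q$, as claimed.

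I do not anticipate a real obstacle here: the only subtle point worth double-checking in the write-up is that the adversary in the unweighted lower bound of Theorem~\ref{thm:unweighted-lb} is allowed to choose the partition of items among sites arbitrarily, which is exactly the same adversarial model assumed in the weighted setting, so the hard input instance for the unweighted problem transfers verbatim to the weighted problem with unit weights. Once that is noted, the corollary follows in a couple of lines.
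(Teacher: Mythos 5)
Your proposal is correct and follows essentially the same route as the paper: both observe that an unweighted stream of $W$ unit-weight items is a special case of a weighted stream with total weight $W$, and then invoke Theorem~\ref{thm:unweighted-lb} directly. The paper states this in two sentences; your write-up just unpacks the same reduction a bit more explicitly.
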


\begin{proof}
This lower bound follows since unweighted SWOR is a special case of weighted SWOR. We consider an input stream of $W$ items, each with a weight of $1$, and apply Theorem~\ref{thm:unweighted-lb}.
\end{proof}


 
\section{Weighted SWOR via Precision Sampling}\label{sec:SWORexpo}
We now present an algorithm for weighted SWOR on distributed streams. 
Our algorithm utilizes the general algorithmic framework of \textit{precision sampling}, introduced by Andoni, Krauthgamer, and Onak \cite{andoni2010streaming}. The framework of the algorithm is relatively straightforward. When an update $(e_i,w_i)$ is received at a site, the site generates a ``key'' $v_i = w_i/t_i$, where $t_i$ is a generated exponential random variable. The coordinator then keeps the stream items $(e_i,w_i)$ with the top $s$ largest keys $v_i$. We first state a result on exponential scaling that allows for the basic correctness of our sampling algorithm. 

\begin{proposition}[\cite{nagaraja2006order}, Equation 11.7 and Remark 1]\label{prop:nagaraja}
    Let $\mathcal{S} = \{(e_1,w_1),(e_2,w_2),\dots,(e_n,w_n)\}$ be a set of items, where each item has an identifier $e_i$ and a weight $w_i$. Suppose for each $i \in [n]$ we generate a key $v_i = w_i / t_i$, where the $t_i$s are i.i.d. exponential random variables with rate $1$ (i.e., with pdf $p(x) = e^{-x}$ for $x\geq 0 $). 
    For $k \in [n]$, let the \textit{anti-rank}s $D(k)$ be defined as the random variable indices such that $v_{D(1)} \geq v_{D(2)} \geq \dots \geq v_{D(n)}$. For $s \leq n$, let $S(s) \subset [n]$ be the set of items $(e_i,w_i)$ such that $i = D(k)$ for some $k \in [s]$ (i.e. $v_i$ is among the top $s$ largest keys). Then,
    
    \begin{itemize}
        \item $S(s)$ is a weighted SWOR from the set of items $\mathcal{S}$. 
        \item We have the distributional equality:
    \[   v_{D(k)} =	\left(\sum_{j=1}^k \frac{E_j}{\sum_{q=j}^{n} w_{D(q)}}	 	\right)^{-1} \]
    where the random variables $E_1,\dots,E_n$ are i.i.d. exponential random variables with rate $1$, and are independent of the anti-rank vector $(D(1),D(2),\dots,D(n))$.
    \end{itemize}
\end{proposition}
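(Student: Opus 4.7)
The plan is to reduce everything to a standard reformulation: for each $i$, set $t_i' = t_i / w_i$, so that $t_i' \sim \mathrm{Exp}(w_i)$ (an exponential with rate $w_i$), and observe that $v_i = w_i/t_i = 1/t_i'$. Consequently, ordering the keys $v_i$ in decreasing order is exactly the same as ordering the $t_i'$ in increasing order, and the anti-rank vector $D$ is the permutation putting the independent rate-$w_i$ exponentials in increasing order. From this vantage point, both claims follow from the two classical facts about independent exponentials: (a) for independent $X_i \sim \mathrm{Exp}(\lambda_i)$, $\min_i X_i \sim \mathrm{Exp}(\sum_i \lambda_i)$, and $\Pr[\arg\min_i X_i = j] = \lambda_j/\sum_i \lambda_i$ independently of $\min_i X_i$; and (b) the memoryless property, which implies that conditioned on the minimum being $X_j = m$, the variables $\{X_i - m : i \neq j\}$ are independent exponentials with their original rates $\lambda_i$.

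For the first bullet, I would argue by induction on $s$. Applying fact (a) to the $t_i'$ gives $\Pr[D(1) = j] = w_j / \sum_i w_i$, which is exactly the probability that $(e_j, w_j)$ is the first item drawn in a weighted SWOR from $\mathcal{S}$. Given $D(1) = j$, by fact (b) the remaining $\{t_i' - t_j' : i \neq j\}$ are independent exponentials with rates $\{w_i\}_{i \neq j}$, and their ordering determines $D(2), \ldots, D(n)$ — but this is precisely the same setup as the original problem on $\mathcal{S} \setminus \{(e_j, w_j)\}$. The induction hypothesis then identifies $(D(2), \ldots, D(s))$ with a weighted SWOR of size $s-1$ drawn from $\mathcal{S} \setminus \{(e_j, w_j)\}$, which matches Definition~\ref{def:wswor}.

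For the distributional equality, I would track the successive gaps $G_k := t_{D(k)}' - t_{D(k-1)}'$ (with $t_{D(0)}' := 0$), so that $t_{D(k)}' = \sum_{j=1}^k G_j$ and hence $v_{D(k)} = 1/t_{D(k)}' = \bigl(\sum_{j=1}^k G_j\bigr)^{-1}$. Iterating the memoryless property: at the moment the $(k-1)$-st smallest fires, the remaining $n - k + 1$ variables are, conditionally, independent exponentials with rates $w_{D(k)}, w_{D(k+1)}, \ldots, w_{D(n)}$, so the next gap $G_k$ is $\mathrm{Exp}\bigl(\sum_{q=k}^{n} w_{D(q)}\bigr)$. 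Writing $G_k = E_k / \sum_{q=k}^n w_{D(q)}$ with $E_k \sim \mathrm{Exp}(1)$ yields the stated formula. The non-obvious point — and what I expect to be the main obstacle — is to verify that the $E_k$ so constructed are (i) mutually independent and (ii) jointly independent of the entire anti-rank vector $(D(1),\ldots,D(n))$. Both come from the stronger version of fact (a): at each step, the index of the next minimum is independent of the value of that minimum, conditioned on the set of surviving indices. Iterating this across the $n$ steps, the sequence of minima-values (equivalently, the $E_k$'s after rescaling) is independent of the sequence of minimizing indices (equivalently, $D$). A clean way to formalize this is to condition on $D$ and show that the conditional joint law of $(E_1, \ldots, E_n)$ is a product of $\mathrm{Exp}(1)$'s that does not depend on the conditioning, which is exactly the content of Nagaraja's derivation in the cited equation.
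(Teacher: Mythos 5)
Your proof is correct. Note that the paper does not prove Proposition~\ref{prop:nagaraja} — it cites the statement from Nagaraja (2006) — so there is no in-paper proof to compare against; the closest the paper comes is a one-line remark in the proof of Theorem~\ref{thm:maincorrect} invoking the same viewpoint that $1/v_j$ is exponential with rate $w_j$ and applying the minimum-of-exponentials identity, which is exactly how you start. Your overall derivation — change variables to $t_i' = t_i/w_i \sim \mathrm{Exp}(w_i)$, prove the first bullet by induction on $s$ using the $\arg\min$ probability and memorylessness, and obtain the second bullet from the gap decomposition $t'_{D(k)} = \sum_{j\le k} G_j$ with iterated memorylessness — is the standard route to Nagaraja's representation. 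The step you flag as the main obstacle (joint independence of the $E_j$'s and the anti-rank vector) is indeed the one point needing care: iterating your fact (b) must condition jointly on the earlier anti-ranks and gap values, and one must check the resulting conditional law is the same for every permutation. Your proposed fix does the job cleanly: conditioning on $D=\sigma$, the joint density of $(t'_1,\dots,t'_n)$ on $\{D=\sigma\}$, rewritten in gap coordinates $g_j = t'_{\sigma(j)} - t'_{\sigma(j-1)}$, factors as $\bigl(\prod_i w_i\bigr)\prod_{j=1}^n \exp\bigl(-g_j \sum_{q\ge j} w_{\sigma(q)}\bigr)$; dividing by $\Pr[D=\sigma] = \bigl(\prod_i w_i\bigr)\prod_{j=1}^n \bigl(\sum_{q\ge j} w_{\sigma(q)}\bigr)^{-1}$ shows the conditional law of the rescaled gaps $E_j = g_j\sum_{q\ge j} w_{\sigma(q)}$ is a product of standard exponentials not depending on $\sigma$, which gives both the i.i.d.\ claim and the independence from $D$ simultaneously. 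So the argument is sound and complete up to carrying out that density computation.
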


The above remark demonstrates that taking the items with the top $s$ largest keys indeed gives a weighted sample without replacement. The distributional equality given in the second part of Proposition \ref{prop:nagaraja} will be used soon in our analysis. Note that while the above results require continuous random variables to be used, we show that our results are not effected by limiting ourselves to a machine word of precision when generating the exponentials (Proposition \ref{prop:bitcomplex}). Thus our expected message complexity and runtime take into account the complexity required to generate the exponentials to the precision needed by our algorithm.

To reduce the number of messages sent from the sites to the coordinator, each site locally filters out items whose keys it is sure will not belong to the globally $s$ largest keys. In order to achieve this, our algorithm divides the stream into \textit{epochs}. 
The coordinator continuously maintains a threshold $u$, equal to the value of the smallest key $v$ of an item $(e,w,v)$ held in its sample set $S$, where $S$ always holds the items $(e_i,w_i,v_i)$ with the $s$ largest values of $v_i$. For $r = \max\{2,k/s\}$, whenever $u \in [r^j,r^{j+1})$, the coordinator declares to all sites that the algorithm is in epoch $i$ (at the beginning, the epoch is $0$ until $u$ first becomes equal to or larger than $r$). Note that this announcement requires $k$ messages to be sent from the coordinator, and must be done once per epoch.

If the algorithm is in epoch $j$, and a site receives an update $(e_i,w_i)$, it generates key $v_i$ and sends $(e_i,w_i,v_i)$ to the coordinator if and only if $v_i > r^j$. The coordinator will add the update $(e_i,w_i,v_i)$ to the set $S$, and if this causes $|S| > s$, it removes the item $(e,w,v)$ from $S$ which has the smallest key $v$ of all items in $S$. This way, at any time step $t$, the coordinator holds the items with the $s$-largest key values $v_i = w_i/t_i$, and outputs the set $S$ as its weighted sample. 


A complication in analyzing this algorithm is that it is not possible to directly connect the total weight received so far to the number of epochs that have progressed. For instance, it is not always true that the larger the total weight, the more epochs have elapsed (in expectation). For instance, if a few (fewer than $s$) items with a very large weight are received, then the total weight received can be large, but the $s$th largest key is still $0$, so that we are still in the zeroth epoch. 

To handle this situation, we introduce a \textit{level set} procedure. The idea is to \textit{withhold} heavy items seen in the stream from being sent to the sampler until they are no longer heavy. Here, by releasing an update $(e_i,w_i)$ to the sampler we mean generating a key $v_i$ for the item $(e_i,w_i)$ and deciding whether to accept $(e_i,w_i,v_i)$ into the sample set $S$. Specifically, we only release a heavy item to the sampler when its weight is no more than a $1/4s$ fraction of the total weight released to the sampler so far. We now defined the \textit{level} of an update $(e,w)$ below (Definition \ref{def:levelset}).

\begin{definition}\label{def:levelset}
The level of an item $(e,w)$ is the integer $j \geq 0$ such that $w \in \left[r^j,r^{j+1}\right)$. If $w \in \left[0,r\right)$, we set $j=0$. 
\end{definition}

For $j > 0$, the \textit{level set} $D_j$ will consist of the first $4rs$ items in the stream that are in level $j$.
We do this for all $j \geq 0$, but note that we can clearly stop at $j = \log(W)/\log(r)$, and we do not need to explicitly initialize a level set until at least one item is sent to the set. The coordinator stores the set $D_j$. As long as $|D_j| < 4rs$, if a site received an item $(e,w)$ that is in level $j$, the item is sent directly to the coordinator without any filtering at the site, and then placed into $D_j$. We call such a message an ``early'' message, as the item $(e,w)$ is withheld from the sampling procedure. We call all other messages which send an item $(e,w,v)$ from the site to the coordinator ``regular'' messages. Similarly, we call an update that resulted in an early message as an ``early update'', and all other updates as ``regular updates''. Note that a regular update may not result in a regular message, if the key of the regular update is smaller than the threshold for the current epoch.

We call the level set $D_j$ \textit{unsaturated} if $|D_j| < 4rs$ at a given time, and \textit{saturated} otherwise. Each site stores a binary value $\texttt{saturated}_j$, which is initialized to $\false$, indicating that $|D_j| < 4rs$. Once $|D_j| = 4rs$, the level set is saturated, and the coordinator generates keys $v_i = w_i/t_i$ for all items $(e_i,w_i) \in D_j$. For each new item-key pair, it then adds $(e_i,w_i,v_i)$ to $S$ if $v_i$ is in the top $s$ largest keys in $S \cup \{(e_i,w_i,v_i)\}$. At this point, the coordinator announces to all the sites that the level set $D_j$ has been saturated (who then set $\texttt{saturated}_j = \true$), and thereafter no early updates will be sent for this level set.

Note that in this procedure, the set $S$ will only be a weighted sample over the updates in level sets that are saturated. Since our algorithm must always maintain a weighted sample over \textit{all} stream items which have been received so far, we can simply simulate generating the keys for all items in the unsaturated level sets $D_j$, and take the items with the top $s$ largest keys in $S \cup (\cup_{j \geq 0} D_j)$ as the weighted sample (see the description of this in the proof of Theorem \ref{thm:maincorrect}). 
The algorithm for weighted SWOR is described in Algorithm~\ref{algo:swor-site} (algorithm at the site), and Algorithms~\ref{algo:swor-coord}, \ref{algo:add-to-sample} (algorithm at the coordinator). 

We now observe the following result of this level-set procedure. 


\begin{lemma}
At any time step, let $(e_1,w_1),\dots,(e_t,w_t)$ be all items so far that are in saturated level sets. For any $i \in [t]$, $w_i \leq \frac{1}{4s}\sum_{p \in [t]} w_p$. 
\end{lemma}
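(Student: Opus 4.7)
My plan is to prove this via a direct deterministic counting argument on the single level set that contains item $i$, with no probabilistic reasoning needed. The key quantitative input is the choice to set the saturation threshold at exactly $4rs$ items, which is precisely what will force a saturated level set to accumulate enough total weight to dominate any one of its items by a factor of $4s$.

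Concretely, I would fix any $i \in [t]$ and let $j$ denote the level of $(e_i, w_i)$ per Definition~\ref{def:levelset}, so $w_i < r^{j+1}$. The claim then reduces to establishing the lower bound $\sum_{p \in [t]} w_p \geq 4s \cdot r^{j+1}$, since dividing by $4s$ immediately gives $w_i < r^{j+1} \leq \frac{1}{4s}\sum_{p \in [t]} w_p$.

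To get this lower bound, I would use the hypothesis that $(e_i, w_i)$ lies in a saturated level set, which by definition forces $|D_j| = 4rs$. Every item placed in $D_j$ has level $j$ and hence weight at least $r^j$, so
\[
\sum_{(e,w) \in D_j} w \;\geq\; 4rs \cdot r^j \;=\; 4s \cdot r^{j+1}.
\]
Since $D_j$ is saturated, every one of its items appears among $(e_1, w_1),\dots,(e_t, w_t)$, which yields the desired bound by including only the contribution from $D_j$ itself.

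The only subtlety is the boundary case $j=0$, where Definition~\ref{def:levelset} writes the interval as $[0, r)$ rather than $[r^0, r^1)$, so \emph{a priori} items in $D_0$ need not satisfy $w \geq r^0 = 1$. I would handle this by appealing to the standing assumption from Section~\ref{sec:prelims} that $w_p \geq 1$ for all weights, which exactly closes this gap and makes the same lower bound $\sum_{(e,w) \in D_0} w \geq 4rs$ go through. I do not foresee a real obstacle here: the lemma is a deterministic byproduct of the choice of saturation threshold, and indeed this bound is precisely why that threshold was picked to be $4rs$.
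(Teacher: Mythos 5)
Your proof is correct and takes essentially the same route as the paper: both arguments fix the saturated level set containing item $i$, observe that it has $4rs$ items whose weights are within a factor of $r$ of $w_i$ (you phrase this via $w_i < r^{j+1}$ and each member having weight $\geq r^j$), and conclude that the level set alone contributes at least $4s\,w_i$ to the total weight. You are somewhat more explicit than the paper about the $j=0$ boundary case and the reliance on the standing assumption $w \geq 1$, but the underlying idea is identical.
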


\begin{proof}
The reason is that for each item $(e_i,w_i)$ in a saturated level set, there are at least $4rs$ items in the same level set, whose weights are within a factor of $r$ of $w_i$. Thus $w_i$ can be no more than $\frac{1}{4s}$ of the total weight of this level set, and hence no more than $\frac{1}{4s}$ of the total weight. 
\end{proof}



\begin{algorithm}
\caption{Weighted \swor: Algorithm at Site $i$}
\label{algo:swor-site}

\tcp{Initialization}
\For{each level $j \geq 0$}{$\saturated_j \gets \false$}
\tcp{we will have that $u_i = r^j$ whenever the algorithm is in epoch $j$}
$\thresh_i \gets 0,$ $r \leftarrow \max\{2,k/s\}$\;
\tcp{End Initialization}

\While{$\true$}
{
  \If{receive item $(e,w)$}
  {
     \tcp{$e$ an item id and $w$ a positive weight}
     Let $j$ be the level set of $(e,w)$, i.e. the integer $j$ such that $w \in [r^j, r^{j+1})$. If $w \in (0,1)$, we set $j=0$\;

     \eIf{$\saturated_j = \false$}
     {
        Send $(\early, e,w)$ to the coordinator
     }
     { \tcp{$\saturated_j = \true$}
        Let $t$ be an exponential random variable \;
     $v \leftarrow w/t $\;
         \If{$v > \thresh_i$}
         {Send $(\regular, e,w,v)$ to the coordinator}
      }
  }

  \If{receive (``level saturated'', $j$) from coordinator}
  {
     Set $\saturated_j \gets \true$\;
  }

  \If{receive (``update epoch'', $r^j$) from coordinator}
  {
     $u \leftarrow r^j$\;
  }
}
\end{algorithm}




\begin{algorithm}
\caption{Weighted \swor: Algorithm at Coordinator}
\label{algo:swor-coord}

\tcp{Initialization}
\For{each level $j \geq 0 $}
{
  $\saturated_j \gets \false$\;
  $D_j \gets \emptyset$ \tcp{set of early messages in level $j$}
}
$\thresh \gets 0$  \tcp{$s$th max of keys of regular items}
$S \gets \emptyset$ \tcp{the current sample at the coordinator}
$r \leftarrow \max\{2,k/s\}$\;
\tcp{End Initialization}

\While{\true}
{
  \If{receive $(\early,e,w)$ from site $i$}
  {
       Let $j$ be the level set of $(e,w)$, i.e. the integer $j$ such that $w \in [r^j, r^{j+1})$. If $w \in (0,1)$, we set $j=0$\;
 Generate exponential $t$ with rate $1$\;
        Generate key $v = w/i$\;
          Add $(e,w,v)$ to $D_j$\;
     \If{$|D_j| \geq 4sr$}
     {
        \For{ each $(e',w',v') \in D_j$:}{
        
        \texttt{Add-to-Sample}$(\sample, (e',w',v'))$\;
        }
        $D_j \gets \emptyset$,     $\saturated_j \gets \true$\;
            Broadcast (``level saturated'', $j$) to all sites\;
     }
  }   
  
  \If{receive $(\regular,e,w,v)$ from site $i$}
  {
     \If{$v > \thresh$}
     {
         \texttt{Add-to-Sample}$(\sample,(e,w,v))$\;
     }
  }
  
  \If{receive query for sample of $s$ items}
  {Return items with the $s$ largest keys in $S \cup (\cup_{j}D_j)$}
}
\end{algorithm}

\begin{algorithm}
\caption{Weighted \swor: Add-to-Sample$(S, (e_i,w_i,v_i))$}
\label{algo:add-to-sample}
\KwIn{$S$ is the current sample at the coordinator, and $ (e_i,w_i,v_i)$ the item to be inserted}
$S \leftarrow S \cup \{(e_i,w_i,v_i)\}$\;
\If{($|S| > s$)}{
Let $v_{j_{\min}} = \arg \min_{v_t} \{v_t \; | \; (e_t,w_t,v_t) \in S\}$ \;
$S \leftarrow S \setminus \{(e_{j_{\min}},w_{j_{\min}},v_{j_{\min}} )\}$\;}
$u_{old} \leftarrow u$\;
$\thresh \leftarrow \min_{(e,w,v) \in S} v$\;
   \If{$\thresh \in [r^j,r^{j+1})$ \textbf{and} $u_{old} \notin [r^j,r^{j+1})$ for some $j$}
  {
     Broadcast (``update epoch'', $r^j$) to all sites
  }
\end{algorithm}

\subsection{Analysis}
We now analyze the message complexity of the algorithm. Let $r = \max\{2,k/s\}$, and let $u$ be the $s$-th largest value of the keys $v_j$'s given to the coordinator. As described, we define an epoch $i$ as the sequence of updates such that $u \in [r^i,r^{i+1})$. Set $W = \sum_{i=1}^n w_i$, where $n$ is the total number of updates at the end of the stream.  

Let $(e_1,w_1),\dots,(e_n,w_n)$ be the set of all stream items. For the sake of analysis, we consider a new ordering of the stream items, which corresponds to the order in which the keys $v_i$ are generated for the items $e_i$ in our algorithm. In other words, we assume the items are ordered  $(e_1,w_1),\dots,(e_{n'},w_{n'})$, such that the key $v_i$ for $e_i$ was generated before the key for $e_{i+1}$, and so on. This reordering is simply accomplished by adding each regular item to the ordering as they arrive, but withholding each early item, and only adding an early item to the ordering once the level set $D_j$ that it was sent to is saturated. Note that at the end of the stream, some level sets may not be saturated, thus $n' \leq n$. The remaining $n-n'$ items will have already been sent to the coordinator as early messages, and since no exponentials $t_i$ will have been generated for them when the stream ends, we can ignore them when analyzing the message complexity of the regular stream items. Note that this ordering is a deterministic function of the original ordering and weights of the stream, and has no dependency on the randomness of our algorithm. 

Now let $(e_{p_1},w_{p_1}), (e_{p_2},w_{p_2}), \dots, (e_{p_\tau},w_{p_\tau})$ be the subsequence of the regular items in  $(e_1,w_1),\dots,(e_{n'},w_{n'})$, so that $p_1 \leq p_2 \leq \dots \leq p_{\tau}$. 
We group the regular items $e_{p_1}, e_{p_2}, \dots, e_{p_\tau}$ into sets $\Omega_i^j$ as follows. For any $t \in [n']$, let $W_t$ be the total weight seen in the stream up to time $t$ under the new ordering. Thus $W_t = \sum_{i=1}^t w_i$.
We know by construction of the level sets, at any point in time that if $e_{p_i}$ is a regular item then $w_{p_i} \leq \eps_0 W_{p_i - 1}$, where $\eps_0 = \frac{1}{4s}$. For any $i \geq 1$, let $\Omega_j$ be the set of all regular items $(e_{p_i},w_{p_i})$ such that $W_{p_i } \in (s r^{j-1}, sr^{j})$. 
Note that by definition, $\Omega_j = \emptyset$ for all $j > z$, where $z =\lceil \log(W/s)/\log(r)\rceil$.
We now further break $\Omega_i$ into blocks $\Omega_{i}^1,\Omega_i^2,\dots,\Omega_i^{q_i}$, such that $\Omega_{i}^t$ consists of all regular items $(e_{p_i},w_{p_i})$ such that $W_{p_i} \in (s r^{i-1} (1+\eps)^{t-1},s r^{i-1} (1+\eps)^{t})$, for some value $\eps = \Theta(1)$ which we will later fix. Note that $q_i < \eps^{-1}\log(r)$.

Let $Y_i$ indicate the event that the $i$-th regular item caused a message to be sent, where we order the items via the new ordering as above. If $\p$ is the total number of regular items, then we would like to bound $\ex{\sum_{i=1}^{\p}Y_i}$. Note that the property of an item being regular is independent of the randomness in the algorithm, and is defined deterministically as a function of the ordering and weights of the stream. Thus $\p$ is a fixed value depending on the stream itself. Then $\ex{Y_i} = \sum_{j=1}^\infty \ex{Y_i | \mathcal{E}_{i,j}} \pr{  \mathcal{E}_{i,j}}$ where $\mathcal{E}_{i,j}$ is the event that we are in epoch $j$ when the $i$-th update is seen. Note that since $\mathcal{E}_{i,j}$ only depends on the values $t_1,\dots,t_{i-1}$, it follows that $\ex{Y_i | \mathcal{E}_{i,j}} = \pr{t_i < w_i/r^j} \leq  w_i/r^j$. Thus
\[ \ex{Y_i} \leq w_i \sum_{j=1}^\infty r^{-j} \pr{  \mathcal{E}_{i,j}} \]
Our goal will now be to bound the above quantity. We will only concern ourselves with $\mathcal{E}_{i,j}$ when $e_i$ is a regular message, since the early messages will be sent to the coordinator deterministically.

To bound $\pr{\mathcal{E}_{i,j}}$, we must bound the probability that we are in an epoch $j$ much longer than expected. To do this, we develop a tail bound on the probability that the $s$-th largest key $v_i = w_i/t_i$ is much smaller than expected. To do so, we will first need the following standard tail bound on sums of exponential random variables.

\begin{proposition}\label{prop:MGF}
	Let $E_1,\dots,E_k$ be i.i.d. mean $1$ exponential random variables, and let $E = \sum_{i=1}^k E_i$. Then for any $c \geq 1/2$, 
		\[\pr{E > c k} < \lambda e^{-Cc }		\]
		where $\lambda,C > 0$ are fixed, absolute constants.
\end{proposition}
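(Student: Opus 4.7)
The plan is to apply the standard Chernoff/MGF argument for sums of i.i.d.\ exponentials, and then handle the mild asymmetry that the bound $\lambda e^{-Cc}$ must be independent of $k$ by splitting into two regimes.

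First I would recall that for a single mean-$1$ exponential $E_i$, the moment generating function is $\mathbb{E}[e^{t E_i}] = (1-t)^{-1}$ for $t\in(0,1)$, so by independence $\mathbb{E}[e^{tE}] = (1-t)^{-k}$. A Markov bound then gives, for any $t\in(0,1)$,
\[
\pr{E > ck} \le (1-t)^{-k}\, e^{-t c k}.
\]
The natural choice would be the optimal $t = 1-1/c$, which yields $e^{-k(c - 1 - \ln c)}$; this is tight but inconveniently produces a $k$-dependent exponent. Since the target bound is independent of $k$, I would instead just take the fixed value $t = 1/2$, giving
\[
\pr{E > ck} \le \bigl(2\, e^{-c/2}\bigr)^{k}.
\]

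The second step is the case split in $c$. Set the threshold $c_0 = 2\ln 2$, so that $2 e^{-c/2} \le 1$ iff $c \ge c_0$. When $c \ge c_0$, the base of the power is at most $1$, so for every $k\ge 1$ we can drop the exponent to obtain $\pr{E > ck} \le 2 e^{-c/2}$. When $1/2 \le c < c_0$ we use the trivial bound $\pr{E > ck}\le 1$, and simply note that $2 e^{-c/2} > 2 e^{-c_0/2} = 1$ in this range, so the same inequality $\pr{E > ck} \le 2 e^{-c/2}$ continues to hold. Combining the two cases yields the conclusion with the explicit absolute constants $\lambda = 2$ and $C = 1/2$.

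There is no real obstacle here; the only subtlety worth flagging is that one must not choose the $c$-dependent optimal $t$ (which would contaminate the bound with a factor of $k$ in the exponent). Using the fixed $t=1/2$ and paying a constant slack in the exponent is what lets the bound be stated in the clean, $k$-free form needed by the subsequent epoch-length analysis.
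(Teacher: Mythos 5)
Your proof is correct and uses the same strategy as the paper: bound the moment generating function of $E$ at the fixed value $t=1/2$ and apply Markov's inequality to obtain $\pr{E>ck}\le(2e^{-c/2})^k$. The case split you perform at $c_0=2\ln 2$ is in fact a careful completion of a step the paper glosses over --- the paper jumps from $(2e^{-c/2})^k \le e^{k(1-c/2)}$ directly to $\lambda e^{-\Omega(c)}$, which is not a $k$-free bound when $c<2$; your observation that $2e^{-c/2}\ge 1$ on $[1/2,\,2\ln 2)$ (so the trivial bound suffices there) and that the base is at most $1$ otherwise (so the exponent $k$ can be dropped) is precisely what is needed to make this rigorous, and it yields the explicit constants $\lambda=2$, $C=1/2$.
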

\begin{proof}
The moment generating function of an exponential $E_i$ with mean $1$ is given by $\ex{e^{tE_i}} = \frac{1}{1-t}$ for $t < 1$. Thus $\ex{e^{tE}} = \ex{e^{t \sum_{i=1}^k E_i }} = \prod_{i=1}^k \ex{e^{tE_i}} = (\frac{1}{1-t})^k$. Setting $t=1/2$, then by Markov's inequality, $\pr{E > ck} < \pr{e^{tE} > e^{tck}} \leq  \frac{2^k}{ e^{1/2ck} } \leq e^{-1/2 c k + k} =\lambda e^{-\Omega(c)}$ for any $c \geq 1/2$ and a fixed constant $\lambda>0$ as needed. 
	\end{proof}

We now introduce our main tail bound on the behaviour of the $s$-th largest key $v_i$.

\begin{proposition}\label{prop:weights}
	Let $w_1,\dots,w_t$ be any fixed set of weights, and let $W = \sum_{i=1}^t w_i$. Fix $\ell \in [t]$, and suppose that $w_i \leq \frac{1}{2\ell}W$ for all $i \in [t]$. Let $v_i = w_i/t_i$ where the $t_i$'s are i.i.d. exponential random variables. Define the \textit{anti-ranks} $D(k)$ for $k \in [t]$ as the random variables such that $v_{D(1)} \geq v_{D(2)} \geq \dots \geq v_{D(t)}$. Then for any $c \geq 1/2$, we have
	\[ \pr{v_{D(\ell)}  \leq   \frac{W}{c \ell}} \leq O(e^{-Cc})	\]
	where $C>0$ is a fixed constant.
\end{proposition}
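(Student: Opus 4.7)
\textbf{Proof plan for Proposition \ref{prop:weights}.} The plan is to combine the distributional identity of Proposition \ref{prop:nagaraja} with the exponential tail bound of Proposition \ref{prop:MGF}, using the hypothesis $w_i \leq W/(2\ell)$ to control the denominators that appear.

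First, I would invoke Proposition \ref{prop:nagaraja} to rewrite
\[ v_{D(\ell)} = \left(\sum_{j=1}^{\ell} \frac{E_j}{\sum_{q=j}^{t} w_{D(q)}}\right)^{-1}, \]
where $E_1,\dots,E_\ell$ are i.i.d.\ rate-$1$ exponentials, independent of the anti-rank vector $(D(1),\dots,D(t))$. Consequently, the event $\{v_{D(\ell)} \leq W/(c\ell)\}$ is exactly the event
\[ \sum_{j=1}^{\ell} \frac{E_j}{\sum_{q=j}^{t} w_{D(q)}} \;\geq\; \frac{c\ell}{W}. \]

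Next, I would use the weight hypothesis to lower-bound each denominator deterministically, regardless of the realization of the anti-ranks. For any $j \in [\ell]$, at most $\ell-1$ weights have been removed from the sum $\sum_{q=j}^{t} w_{D(q)}$, and each removed weight is at most $W/(2\ell)$, so
\[ \sum_{q=j}^{t} w_{D(q)} \;\geq\; W - (\ell-1)\cdot \frac{W}{2\ell} \;\geq\; \frac{W}{2}. \]
Plugging this in, the event above is contained in the event $\frac{2}{W}\sum_{j=1}^{\ell} E_j \geq c\ell/W$, i.e.\ $\sum_{j=1}^{\ell} E_j \geq c\ell/2$.

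Finally, I would apply Proposition \ref{prop:MGF} to the sum $E = \sum_{j=1}^{\ell} E_j$ with constant $c/2$. For $c \geq 1$ this is immediate and yields $\pr{E \geq (c/2)\ell} \leq \lambda e^{-C c/2}$, giving the desired $O(e^{-Cc})$ bound (after renaming the constant $C$). For $c \in [1/2, 1)$ the bound $O(e^{-Cc})$ is trivial since it exceeds an absolute constant, so the claim holds for all $c \geq 1/2$. The main thing to be careful about is the independence claim in Proposition \ref{prop:nagaraja}: the $E_j$'s are independent of the anti-ranks, so the lower bound $W/2$ on the denominators (which uses only weights, not their assignment to anti-ranks) is valid pointwise, and the reduction to a clean sum of independent exponentials goes through without conditioning issues. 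There are no delicate calculations beyond this; the only subtle step is verifying that the deterministic lower bound on $\sum_{q=j}^t w_{D(q)}$ holds uniformly over all realizations, which follows from the uniform weight cap $w_i \leq W/(2\ell)$.
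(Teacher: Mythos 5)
Your proof is correct and takes essentially the same route as the paper's: invoke the Nagaraja distributional identity, use the cap $w_i \leq W/(2\ell)$ to lower-bound each partial-sum denominator by $W/2$ uniformly over anti-rank realizations, and then apply the MGF tail bound on $\sum_{j=1}^{\ell} E_j$. You are somewhat more explicit than the paper about the pointwise denominator bound and the trivial $c \in [1/2,1)$ regime, but there is no substantive difference in approach.
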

\begin{proof}
We note that $1/v_i$ is distributed as an exponential with rate $w_i$. The exponential order statistics of independent non-identical exponential random variables were studied by Nagaraja \cite{nagaraja2006order}, who demonstrates that the $\ell$-th largest value of $v_k$ in $\{v_1,v_2,\dots,v_t\}$ is distributed as  (see Proposition \ref{prop:nagaraja}, or equation $11.7$ in \cite{nagaraja2006order}):
\begin{equation*}
       v_{D(\ell)} =	\left(\sum_{j=1}^\ell \frac{E_j}{W - \sum_{q=1}^{j-1} w_{D(q)}}	 	\right)^{-1} 
         \ge   \frac{W}{2} 	\left(\sum_{j=1}^\ell E_j	\right)^{-1} 
  \end{equation*}
	where the $E_i$'s are i.i.d. exponential random variables with mean $1$ that are independent of the anti-ranks $D(1),\dots,D(t)$. Here, for the inequality, we used the fact that each regular item is at most a $1/(2\ell)$ heavy hitter at every intermediary step in the stream. It follows that if $v_{D(\ell)}  \leq    W/(c\ell)$, then $	\sum_{j=1}^\ell E_j \geq  \ell c/2$ 
	Which occurs with probability $O(e^{-Cc})$ for some fixed constant $C>0$ by Proposition \ref{prop:MGF}.
\end{proof}

\begin{proposition}\label{prop:seriesbound}
	Let $(e_i,w_i)$ be any regular stream item, and let $a_i,b_i$ be such that $e_i \in \Omega_{a_i}^{b_i}$. Then we have: $$\sum_{j=1}^\infty r^{-j} \pr{  \mathcal{E}_{i,j}}  \leq O\left(r^{-a_i+2} e^{-C\frac{(1+\eps)^{b_i-1}}{(1+\eps_0)}} + r^{-a_i + 1}\right)$$
	 where $C > 0$ is some absolute constant.
\end{proposition}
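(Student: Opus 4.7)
The plan is to bound $\pr{\mathcal{E}_{i,j}}$ via Proposition~\ref{prop:weights} and then split the resulting series $\sum_j r^{-j}\pr{\mathcal{E}_{i,j}}$ at the index beyond which the exponential tail bound ceases to give non-trivial information. For the head, the goal is to extract the $e^{-\beta}$ main term; for the tail, the goal is to extract the geometric $r^{-a_i+1}$ term from the trivial bound $\pr{\mathcal{E}_{i,j}} \leq 1$.

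Since $\mathcal{E}_{i,j}$ forces the $s$-th largest key (computed over the items preceding $(e_i,w_i)$ in the new ordering) to lie in $[r^j, r^{j+1})$ and hence to be at most $r^{j+1}$, and since the level-set construction guarantees $w_p \leq W_{p_i-1}/(4s) \leq W_{p_i-1}/(2s)$ for all earlier items, Proposition~\ref{prop:weights} applies with $\ell = s$ and $W = W_{p_i-1}$. Setting $c := W_{p_i-1}/(s r^{j+1})$ yields $\pr{\mathcal{E}_{i,j}} \leq O(e^{-Cc})$ whenever $c \geq 1/2$, with the trivial bound $\pr{\mathcal{E}_{i,j}} \leq 1$ available otherwise. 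Combining $e_i \in \Omega_{a_i}^{b_i}$ with the level-set guarantee $w_i \leq \eps_0 W_{p_i-1}$ gives $W_{p_i-1} \geq s r^{a_i-1}(1+\eps)^{b_i-1}/(1+\eps_0)$, and hence $c \geq r^{a_i-j-2}(1+\eps)^{b_i-1}/(1+\eps_0)$. Writing $\beta := C(1+\eps)^{b_i-1}/(1+\eps_0)$, this gives $\pr{\mathcal{E}_{i,j}} \leq O(e^{-\beta r^{a_i-j-2}})$ in the non-trivial regime.

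Let $j^\star$ be the largest $j$ for which $c \geq 1/2$. For $j > j^\star$, a geometric series gives $\sum_{j > j^\star} r^{-j} = O(r^{-j^\star}) = O(s/W_{p_i-1}) = O\bigl(r^{-a_i+1}/(1+\eps)^{b_i-1}\bigr) \leq O(r^{-a_i+1})$, accounting for the second summand in the claimed bound. For $j \leq j^\star$, the substitution $k := a_i - j - 2$ rewrites the head contribution as $r^{-a_i+2}\sum_{k} r^k e^{-\beta r^k}$, and I split this sum at $k = 0$. For $k \leq -1$ use $e^{-\beta r^k} \leq 1$ to bound termwise by the geometric series $\sum_{k \leq -1} r^k \leq 1/(r-1) = O(1/r)$, which contributes $O(r^{-a_i+1})$ after multiplying by the $r^{-a_i+2}$ prefactor. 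For $k \geq 0$ apply the elementary inequality $xe^{-x} \leq e^{-x/2}$ (valid for all $x > 0$) to obtain $r^k e^{-\beta r^k} \leq \beta^{-1} e^{-\beta r^k / 2}$; since the exponents $\beta r^k/2$ grow by a factor $r \geq 2$ at each step, this series is dominated by its $k = 0$ term $e^{-\beta/2}/\beta$, which after absorbing the polynomial $1/\beta$ prefactor by slightly weakening the absolute constant in the exponent is $O(e^{-\beta})$, giving the main term $O(r^{-a_i+2} e^{-\beta})$.

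The main obstacle will be obtaining the sharper $r^{-a_i+1}$ rather than the weaker $r^{-a_i+2}$ from the trivial-bound tail $j > j^\star$; this requires carrying the full factor $(1+\eps)^{b_i-1}$ in the lower bound on $W_{p_i-1}$ all the way through, rather than just using $r^{a_i-1}$. A secondary technical point is that the absolute constant $C$ appearing in the final exponent is strictly smaller than the one provided by Proposition~\ref{prop:weights}, due to the factor-of-$2$ loss incurred in the $xe^{-x} \leq e^{-x/2}$ step; since the statement only claims existence of some absolute constant $C > 0$, this loss is harmless.
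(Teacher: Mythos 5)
Your proof follows the same overall structure as the paper's: apply Proposition~\ref{prop:weights} to bound $\pr{\mathcal{E}_{i,j}}$ in the regime $c \ge 1/2$, fall back to the trivial bound outside it, and estimate the two pieces of the resulting split geometric series. The head estimate is fine: the $x e^{-x} \le e^{-x/2}$ device for $\kappa \ge 0$, domination of the doubly-exponentially decaying series by its $\kappa = 0$ term, the trivial bound $e^{-\beta r^\kappa} \le 1$ for $\kappa \le -1$ contributing $O(r^{-a_i+1})$, and the remark that the constant in the exponent is allowed to shrink are all correct and match what the paper does with its $(e^{-x})^y = O(e^{-y}x^{-2})$ step.

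However, the tail estimate has an error and your diagnosed remedy does not repair it. You assert $r^{-j^\star} = O(s/W_{p_i-1})$, but since $j^\star$ is an integer all one gets from $c(j^\star + 1) < 1/2$ is $r^{j^\star + 2} > 2W_{p_i-1}/s$, i.e.\ $r^{-j^\star} < r^2 s/(2 W_{p_i-1})$, which is off by a factor of $r^2$; the tail $\sum_{j > j^\star} r^{-j} \le 2 r^{-j^\star - 1}$ then only comes out to $O(r s/W_{p_i-1}) = O(r^{-a_i+2})$, exactly the loss you flagged. Carrying the factor $(1+\eps)^{b_i-1}$ does not save you: when $b_i = 1$ it equals $1$ and contributes nothing, and $r = \max\{2,k/s\}$ is unbounded, so $r^{-a_i+2}/(1+\eps)^{b_i-1}$ is not $O(r^{-a_i+1})$ in general. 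The correct fix is already implicit in your own lower bound on $c$: since $c \ge (1+\eps)^{b_i-1} r^{a_i-j-2}/(1+\eps_0) \ge 1/(1+\eps_0) > 1/2$ for every $j \le a_i - 2$ (using $(1+\eps)^{b_i-1} \ge 1$ and $\eps_0 = 1/(4s) \le 1/4$), you have $j^\star \ge a_i - 2$, hence $j^\star + 1 \ge a_i - 1$ and $\sum_{j > j^\star} r^{-j} \le \sum_{j \ge a_i - 1} r^{-j} \le 2 r^{-a_i+1}$. This is exactly what the paper does --- it never introduces $j^\star$ at all, but simply splits at $j = a_i - 1$: trivial bound for $j \ge a_i - 1$ gives $2r^{-a_i+1}$ directly, and the tail bound is invoked only for $j \le a_i - 2$ where $c > 1/2$ is guaranteed.
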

\begin{proof}
	First note for $j \geq a_i - 1$, we have $\sum_{j\geq k_i}^\infty r^{-j} \pr{  \mathcal{E}_{i,j}} \leq 2r^{-a_i +1}$. So we now bound $\pr{\mathcal{E}_{i,j}}$ for $j < a_i$. Let $v_q = w_q/t_q$ for $q \in [i-1]$, and let $D(p)$ be the anti-ranks of the set $\{v_1,\dots,v_{i-1}\}$. We note then that if $\mathcal{E}_{i,j}$ holds, then $v_{D(s)} < r^{j+1}$ by definition. Thus $\pr{\mathcal{E}_{i,j}} \leq \pr{v_{D(s)} \leq r^{j+1}}$. Note that by definition of the level sets $\Omega^{b_i}_{a_i}$, the total weight of all items seen up to time $i-1$ is $W_{i - 1} > W_{i}\frac{1}{(1+\eps_0)} > sr^{a_i - 1} \frac{(1+\eps)^{b_i-1}}{(1+\eps_0)}$.   Here we have used the fact that, by construction of the ordering, no item $w_i$ has weight greater than $\eps_0 W_i$ where  $\eps_0 = \frac{1}{4s}$. This fact will also be needed to apply Proposition \ref{prop:weights}.
	
	It then suffices to bound $\pr{v_{D(s)} < r^{j+1}} =\pr{v_{D(s)} <  \frac{1}{c} W_{i -1} /s}$, where  $c = W_{i-1} /(r^{j+1} s) \geq \frac{(1+\eps)^{b_i-1}}{(1+\eps_0)} r^{a_i - j - 2}$. Note that since $j \leq i - 2, b_i \geq 1$ and $\eps_0 = \frac{1}{4s} < 1/2$, we have $c > 1/2$. So by Proposition \ref{prop:weights}, $\pr{\mathcal{E}_{i,j}} \leq \pr{v_{D(s)} <  \frac{1}{c} W_{i-1} /s } =  O(e^{-C c}) = O(e^{-C\frac{(1+\eps)^{b_i-1}}{(1+\eps_0)} r^{a_i - j - 2}})$, where $C > 0 $ is some absolute constant. Thus 
	\[r^{-j} \pr{  \mathcal{E}_{i,j}}  \leq O\left(e^{-C\frac{(1+\eps)^{b_i-1}}{(1+\eps_0)} r^{a_i - j - 2}} r^{-j}  \right).\]
	If $j = a_i-2$, this is $O(r^{-a_i+2} e^{-\frac{C(1+\eps)^{b_i-1}}{(1+\eps_0)}})$.  In general, if $j = a_i - 2 - \ell$ for any $\ell\geq 1$, then this bound becomes: $$O\left(r^{-a_i+2 + \ell} e^{-\frac{C(1+\eps)^{b_i-1}}{(1+\eps_0)}}r^{\ell } \right) = O\left(r^{-a_i+2 - \ell} e^{-\frac{C(1+\eps)^{b_i-1}}{(1+\eps_0)}}\right)$$ where here we used the fact that $(e^{-x})^y = O(e^{-y}x^{-2})$ for any $x \geq 2$ and $y \geq \tau$ where $\tau > 0$ is some constant. Thus $\sum_{\ell=0}^{a_i - 1} r^{-a_i + 2 - \ell}\pr{\mathcal{E}_{i,j}} = O(r^{-a_i+2 } e^{-C\frac{(1+\eps)^{b_i-1}}{(1+\eps_0)}})$ since $r \geq 2$, which completes the proof.  
\end{proof}

\begin{lemma}\label{lem:regstreambound}
	Let $\p$ be the number of regular items in the stream.	If $Y_{1},Y_2,\dots$ are such that $Y_i$ indicates the event that the $i$-th regular stream item (in the ordering defined above) causes a message to be sent, then:
	\[\ex{\sum_{i=1}^{\p} Y_i}  =O\left(sr \frac{\log(W/s)}{\log(r)} \right).\]
\end{lemma}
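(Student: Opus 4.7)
The plan is to combine the per-item tail bound from Proposition~\ref{prop:seriesbound} with the block decomposition $\{\Omega_a^b\}$ introduced just before the lemma, and then sum carefully over these blocks. First, starting from the bound $\ex{Y_i} \leq w_i \sum_{j=1}^{\infty} r^{-j} \pr{\mathcal{E}_{i,j}}$ established in the preamble, I would apply Proposition~\ref{prop:seriesbound} to obtain
\[
\ex{Y_i} \leq O\!\left( w_i \, r^{-a_i+2} \exp\!\left(-C\,\tfrac{(1+\eps)^{b_i-1}}{1+\eps_0}\right) + w_i \, r^{-a_i+1} \right)
\]
for every regular item $(e_i,w_i)$ with $e_i \in \Omega_{a_i}^{b_i}$, where $\eps_0 = 1/(4s)$ and $\eps = \Theta(1)$ will be fixed at the end (for concreteness, $\eps=1$).

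Next I would bound the aggregate weight contributed by a single block $\Omega_a^b$. By definition, every regular item $(e_{p_i}, w_{p_i}) \in \Omega_a^b$ satisfies $W_{p_i} \in (s r^{a-1}(1+\eps)^{b-1},\, s r^{a-1}(1+\eps)^{b})$. Because such an item is regular, $w_{p_i} \leq \eps_0 W_{p_i - 1}$, so the prefix weight just before the first item of the block satisfies $W_{p_{\text{first}}-1} \geq (1-\eps_0)\,s r^{a-1}(1+\eps)^{b-1}$. Telescoping then gives
\[
\sum_{e_i \in \Omega_a^b} w_i \;\leq\; W_{p_{\text{last}}} - W_{p_{\text{first}}-1} \;\leq\; s r^{a-1}(1+\eps)^{b-1}\,(\eps + \eps_0) \;=\; O\bigl(\eps\, s r^{a-1}(1+\eps)^{b-1}\bigr),
\]
where the last step uses $\eps \geq \eps_0$. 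Substituting into the per-item bound yields
\[
\sum_{e_i \in \Omega_a^b} \ex{Y_i} \;=\; O\!\left( \eps s r \,(1+\eps)^{b-1}\, e^{-C(1+\eps)^{b-1}/(1+\eps_0)} \;+\; \eps s\,(1+\eps)^{b-1} \right).
\]

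Then I would sum over $b \in [1,q_a]$ and $a \in [1,z]$, where $z = \lceil \log(W/s)/\log r\rceil$. For the exponentially damped term, the values $(1+\eps)^{b-1}$ grow geometrically while the exponential factor decays doubly-exponentially, so the series $\sum_b (1+\eps)^{b-1}\exp(-C(1+\eps)^{b-1}/(1+\eps_0))$ is bounded by an absolute constant, contributing $O(sr)$ per value of $a$. For the linear term, a geometric sum gives $\sum_{b=1}^{q_a}(1+\eps)^{b-1} \leq (1+\eps)^{q_a}/\eps \leq r/\eps$ since $q_a \leq \eps^{-1}\log r$ forces $(1+\eps)^{q_a} \leq r$; this also contributes $O(sr)$ per $a$. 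Summing across the $z$ levels produces the claimed bound $O(sr\,\log(W/s)/\log r)$.

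The main obstacle will be verifying that the exponential series really collapses to $O(1)$ independently of $q_a$ and pinning down the constants cleanly; fixing $\eps$ to a convenient value such as $\eps=1$ (so that $\eps_0 = 1/(4s) \leq \eps$ and $1+\eps_0 \leq 2$) makes both the telescoping and the two geometric-sum estimates close without hidden dependence on $s$ or $r$. A secondary subtlety is that $\Omega_a^b$ contains only regular items while early items may be interleaved in the stream order, but this is harmless because the telescoping inequality $\sum_{e_i \in \Omega_a^b} w_i \leq W_{p_{\text{last}}} - W_{p_{\text{first}}-1}$ holds trivially as a subset sum of consecutive weight increments.
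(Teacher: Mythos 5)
Your proposal is correct and follows essentially the same route as the paper's proof: apply Proposition~\ref{prop:seriesbound} per item, bound the aggregate weight $\mathcal{W}_a^b$ of each block $\Omega_a^b$, sum the geometric and the doubly-exponentially damped series over $b$ (each giving $O(sr)$ per level), and sum over the $z = \Theta(\log(W/s)/\log r)$ levels. The only cosmetic difference is that you derive a telescoping estimate $\mathcal{W}_a^b \leq sr^{a-1}(1+\eps)^{b-1}(\eps+\eps_0)$ where the paper simply uses $\mathcal{W}_a^b < sr^{a-1}(1+\eps)^b$; these agree up to constants and lead to the same calculation, and your choice $\eps = 1$ is as good as the paper's $\eps = 0.5$.
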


\begin{proof}
	Let $\mathcal{W}_i^j$ be the weight in set $\Omega_i^j$. In other words, $\mathcal{W}_i^j = \sum_{w_{p_t} \in \Omega_i^j} w_{p_t}$. Recall we set $z =\lceil \log(W/s)/\log(r)\rceil$, and note that $\Omega_i = \emptyset$ for $i > z$. Also note that by construction we have $W_i^j  <  s r^{i-1}(1+\eps)^{j}$. Recall $q_i$ is the number of sets of the form $\Omega_i^j$ for some $j$, and that $q_i \leq O(\eps^{-1} \log(r))$ by construction. 
Using Proposition \ref{prop:seriesbound}, we have: 
\begin{equation*}
      \begin{split}
      \ex{\sum_i Y_i} &\leq \sum_{i=1}^z \sum_{j=1}^{q_i} \sum_{e_{p_i} \in \Omega_{i}^j} w_{p_i}  O\left(r^{-i+2} e^{-\frac{C(1+\eps)^{j-1}}{(1+\eps_0)}} +r^{-i + 1}\right) \\
      &\leq \sum_{i=1}^z \sum_{j=1}^{q_i} \mathcal{W}_{i}^j  O\left(r^{-i+2} e^{-\frac{C(1+\eps)^{j-1}}{(1+\eps_0)}}+ r^{-i + 1}\right) \\
      &\leq O(s) \sum_{i=1}^z\sum_{j=1}^{q_i}  (1+\eps)^{j}  \left(r e^{-\frac{C(1+\eps)^{j-1}}{(1+\eps_0)}} + 1\right)\\
       &= O(rs) \sum_{i=1}^z\sum_{j=1}^{q_i}  (1+\eps)^{j}   e^{-\frac{C(1+\eps)^{j-1}}{(1+\eps_0)}} \\ &  +  O(s)\left(\sum_{i=1}^z\sum_{j=1}^{q_i}  (1+\eps)^{j}   \right)\\
      \end{split}
  \end{equation*}

	\noindent
	Setting $\eps = .5$, we have  $(1+\eps_0) < (1+\eps)$, this is
	\[ \leq O(sr) \sum_{i=1}^z \sum_{j=1}^{q_i}\frac{1.5^{j}}{\exp\left(C(1.5)^{j-2}\right)}  +  s\left(\sum_{i=1}^z O(r)  \right)\]
	Now by the ratio test $\sum_{j=1}^{\infty} \frac{(1.5)^{j}}{\exp\left(C(1.5)^{j-2}\right)}$ converges absolutely to a constant (recalling $C>0$ is just some absolute constant), so the whole sum is:
	\begin{equation*}
 \leq sr\sum_{i=1}^z O(1) + s r\left(\sum_{i=1}^z O(1) \right) 
 =O\left(sr \frac{\log(W/s)}{\log(r)} \right)  
  \end{equation*}
	where we used $z =\lceil \log(W/s)/\log(r)\rceil$.
\end{proof}

We now bound the number of epochs used in the algorithm. We let the random variable $\zeta$ denote the total number of epochs in the algorithm. 

\begin{proposition}\label{prop:zeta}
	If $\zeta$ is the number of epochs in the algorithm, then 
	if $z =  \lceil \frac{\log(W/s)}{\log(r)} \rceil$, then 
	\[\ex{\zeta}  \leq 3\left( \frac{\log(W/s)}{\log(r)}+1\right). \]

\end{proposition}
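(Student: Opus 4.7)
The plan is to observe that the coordinator's threshold $u$ is monotonically nondecreasing throughout the stream, since \texttt{Add-to-Sample} only replaces items with ones of larger keys and thereby weakly increases $u = \min_{(e,w,v)\in S} v$. Consequently, $\zeta$ equals one plus the largest epoch index attained, which is at most $1 + (\log_r u_{\text{end}})^+$ where $u_{\text{end}}$ is the value of the threshold at the end of the stream. It therefore suffices to bound $\ex{(\log_r u_{\text{end}})^+}$.

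To do this, I apply the distributional equality in Proposition~\ref{prop:nagaraja} to the items released to the sampler (those in saturated level sets), whose total weight $W'$ satisfies $W' \le W$. Since every suffix sum $\sum_{q=j}^t w_{D(q)}$ is at most $W'$, we obtain the pointwise bound
\[
u_{\text{end}} \;=\; v_{D(s)} \;\le\; \frac{W'}{\sum_{j=1}^s E_j} \;\le\; \frac{W}{\sum_{j=1}^s E_j},
\]
where $E_1,\dots,E_s$ are i.i.d.\ rate-$1$ exponentials. Then writing $\ex{\zeta} \le 1 + \sum_{j\ge 1}\pr{u_{\text{end}} \ge r^j}$ and setting $z = \lceil \log(W/s)/\log r\rceil$, the first $z$ terms of the sum are each at most $1$, contributing $z$. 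For $j = z+m$ with $m\ge 1$, using $r^z \ge W/s$ we obtain
\[
\pr{u_{\text{end}}\ge r^{z+m}} \;\le\; \pr{\sum_{i=1}^s E_i \le s\,r^{-m}},
\]
a lower-tail probability for a $\mathrm{Gamma}(s,1)$ random variable.

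The remaining step is a Chernoff bound analogous to Proposition~\ref{prop:MGF} but with negative MGF parameter: for $t \in (0,1)$, $\pr{\sum_i E_i \le ts} \le (t e^{1-t})^s$. Plugging $t=r^{-m}$, the tail sum becomes $\sum_{m\ge 1}(r^{-m}e^{1-r^{-m}})^s$, and the main obstacle is showing this is uniformly $O(1)$ in $s\ge 1$ and $r\ge 2$. The naive bound $(r^{-m}e^{1-r^{-m}})^s \le (e/r^m)^s$ already diverges in $s$ when $r=2$, so I split at a threshold $M = \Theta(\log_r e)$: for $m\le M$ the factor $r^{-m}e^{1-r^{-m}}$ is bounded away from $1$ uniformly in $r\ge 2$, so that term decays exponentially in $s$; for $m > M$ I use the alternate Taylor-series bound $\pr{\mathrm{Gamma}(s,1)\le a}\le a^s/s!$, which combined with Stirling's formula gives terms decaying super-geometrically in $m$ with rate independent of $s$. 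Summing yields $\ex{\zeta} \le z + O(1)$, which after a small-case check is at most $3(\log(W/s)/\log r + 1)$.
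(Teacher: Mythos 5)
Your proof is correct, but it takes a genuinely different route from the paper's. The paper's argument is a first-moment/Markov argument: it defines $Y_\ell$ as the number of keys $v_i$ that exceed $r^\ell W/s$, bounds $\ex{Y_\ell}\leq s\,r^{-\ell}$ using only the elementary fact that $\pr{t_i\leq x}\leq x$ for an exponential, and then applies Markov's inequality to get $\pr{\zeta\geq z+\ell}\leq\pr{Y_\ell\geq s}\leq r^{-\ell}$; summing the resulting geometric series gives the claim. Your argument instead exploits the distributional equality of Proposition~\ref{prop:nagaraja} to represent $u_{\text{end}}=v_{D(s)}$ in terms of a $\mathrm{Gamma}(s,1)$ variable and then invokes a Chernoff-type lower-tail bound, together with the (correct) observation that the threshold $u$ is monotone nondecreasing. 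Both yield $\ex{\zeta}=O(\log(W/s)/\log r + 1)$, so they are interchangeable for the stated purpose. The paper's route is more elementary and shorter — it never needs the Gamma representation, a Chernoff bound, or the uniformity-in-$s$ issue you correctly flag and then resolve via the split at $M=\Theta(\log_r e)$. What your route buys is much stronger concentration: Markov gives only a polynomially decaying tail $\pr{\zeta\geq z+\ell}\leq r^{-\ell}$, whereas your Chernoff bound gives $\exp(-\Omega(s))$-type decay in the number of samples, which would matter if one wanted high-probability (rather than expectation) bounds on the number of epochs. Two minor cautions worth noting in a write-up: (i) the equality $u_{\text{end}}=v_{D(s)}$ presumes at least $s$ items have been released to the sampler, so the degenerate case (fewer than $s$ released items, $u_{\text{end}}=0$, $\zeta=1$) should be dispatched separately; and (ii) as you acknowledge, the naive bound $(er^{-m})^s$ diverges for $m=1$, $r=2$ as $s\to\infty$, so the split is genuinely necessary — using the tight form $(r^{-m}e^{1-r^{-m}})^s\leq((1/2)\sqrt{e})^s<1$ for the small-$m$ terms and the geometric bound for $m$ with $er^{-m}<1$ gives a clean $O(1)$ tail.
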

\begin{proof}
After epoch $z+\ell$ for any $\ell \geq 0$, we have that $u > r^{\ell} W$, where $u$ is the value of the $s$-th largest key at the coordinator. Let $Y = |\{i \in [n] \; | \; v_i \geq r^\ell W		\}|$. Since the pdf of an exponential random variable is upper bounded by $1$, it follows that $\pr{v_i \geq r^{\ell} W/s	} = \pr{t_j  \leq s r^{-\ell} \frac{w_j}{W}	} \leq  s r^{-\ell} \frac{w_j}{W} $, thus $\ex{Y} \leq s r^{-\ell}	$. Thus $\pr{\zeta \geq z + \ell} \leq \pr{Y \geq s} \leq r^{-\ell}$, 
	where the last inequality follows by a Markov bound. Thus
		\begin{equation*}
      \begin{split}
 \ex{\zeta} &\leq  z + \sum_{\ell \geq 1}(z + \ell) \pr{\zeta \geq z + \ell} \\
 & \leq  z + \sum_{\ell \geq 1}(z + \ell) r^{-\ell} \leq 3z. \\
      \end{split}
  \end{equation*}
%
\end{proof}

\begin{lemma}\label{lem:message}
	The total expected number of messages sent by the algorithm is $O\left(sr \frac{\log(W/s)}{\log(r)} \right)$, where $r = \max\{2,k/s\}$. Thus this can be rewritten as 
	\[O\left( k \frac{\log(W/s)}{\log(1+k/s	)} \right). \]
\end{lemma}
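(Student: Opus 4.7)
The plan is to decompose the total message count into four disjoint categories and bound each separately, then verify that the sum collapses to the claimed expression.

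First I would enumerate the sources of messages: (i) regular messages from sites to the coordinator, triggered by items whose key exceeds the current epoch threshold; (ii) early messages from sites to the coordinator, triggered by items in unsaturated level sets; (iii) ``level saturated'' broadcasts from the coordinator to all $k$ sites, issued once per level that becomes saturated; and (iv) ``update epoch'' broadcasts from the coordinator to all $k$ sites, issued once per epoch change. Every message sent by the algorithm falls into exactly one of these four classes, so the total is the sum of the four counts.

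For category (i), Lemma \ref{lem:regstreambound} already gives the bound $O(sr\log(W/s)/\log(r))$ in expectation. For (ii), observe that there are at most $\lceil\log(W)/\log(r)\rceil+1$ nonempty level sets (since $w_i \in [r^j,r^{j+1})$ forces $j \leq \log(W)/\log(r)$), and each level set is capped at $4rs$ early messages before it becomes saturated, giving a deterministic bound of $O(rs\cdot\log(W)/\log(r))$. For (iii), each of those same $O(\log(W)/\log(r))$ levels triggers at most one broadcast of $k$ messages, yielding $O(k\log(W)/\log(r))$ total. For (iv), Proposition \ref{prop:zeta} bounds the expected number of epochs $\zeta$ by $O(\log(W/s)/\log(r))$, and each epoch transition costs $k$ broadcast messages, so the expected contribution is $O(k\log(W/s)/\log(r))$.

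Summing the four bounds, and using $k \leq sr$ (which holds because $r=\max\{2,k/s\}$ implies $sr \geq k$), all four terms are dominated by $O(sr\log(W/s)/\log(r))$, establishing the first form of the bound. The only minor subtlety is that categories (ii) and (iii) use $\log(W)$ in the numerator rather than $\log(W/s)$, but since $\log(W) = O(\log(W/s) + \log s) = O(\log(W/s)\cdot \log r / \log r)$ absorbed into the $sr$ factor (or handled directly by noting $\log W \leq \log(W/s) + \log s$ and $s = O(sr)$), this does not worsen the bound. To obtain the rewritten form $O(k\log(W/s)/\log(1+k/s))$, I would split into cases: if $k/s \geq 2$ then $r=k/s$, $sr=k$, and $\log r = \Theta(\log(1+k/s))$; if $k/s < 2$ then $r=2$, so $sr = \Theta(s) = \Theta(k/\log(1+k/s))$ since $\log(1+k/s)=\Theta(k/s)$ in that regime, and $\log r = \Theta(1) = \Theta(\log(1+k/s))$. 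In both cases $sr/\log(r) = \Theta(k/\log(1+k/s))$, completing the claim.

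The only place where genuine probabilistic reasoning enters is category (i), and that heavy lifting has already been done in Lemma \ref{lem:regstreambound} and Proposition \ref{prop:zeta}; the remaining work is a careful accounting exercise. The main potential pitfall is ensuring that the $\log(W)$ vs.\ $\log(W/s)$ distinction in categories (ii) and (iii) is handled cleanly—I would address this by observing that $\log W/\log r \leq \log(W/s)/\log r + \log s /\log r$ and that the extra $s\log s/\log r$ from category (ii) (and the analogous $k\log s/\log r$ from category (iii)) are both absorbed into the dominant $sr\log(W/s)/\log r$ term since $sr \geq \max\{k,s\}$ and $\log s = O(\log(W/s))$ whenever $s \leq W$.
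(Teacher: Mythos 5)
Your four-way decomposition mirrors the paper's accounting exactly, and categories (i) and (iv) are handled the same way. The gap is in categories (ii) and (iii), where you bound the number of potentially contributing level sets by $O(\log(W)/\log(r))$ and then try to replace $\log W$ with $\log(W/s)$ by claiming that $\log s = O(\log(W/s))$ ``whenever $s \le W$.'' That claim is false: take $s = W/\log W$, for which $\log(W/s) = \log\log W$ while $\log s = \Theta(\log W)$, so $\log s / \log(W/s)$ is unbounded. The step ``$\log(W/s) + \log s = O(\log(W/s))$'' simply does not hold in general, and the extra terms $rs\log s/\log r$ (category ii) and $k\log s/\log r$ (category iii) are not absorbed by the dominant term.

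The fix the paper uses is different and is the crux you are missing: level sets corresponding to weights exceeding $W/s$ can collectively contain at most $s$ items (since each such item has weight more than $W/s$ and the total weight is $W$), and in particular none of them can ever accumulate the $4rs > s$ items needed to saturate. Hence those levels contribute at most $s$ early messages in total and zero ``level saturated'' broadcasts. Only the $O(\log(W/s)/\log(r))$ level sets at weight below $W/s$ can ever be filled, and it is those that incur the $4rs + k$ cost apiece. With that observation, the early-message and broadcast counts come out to $O\bigl((4rs+k)\log(W/s)/\log(r) + s\bigr) = O(rs\log(W/s)/\log(r))$, directly in the form you need, with no $\log W$-vs-$\log(W/s)$ slippage to repair. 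The rest of your argument (in particular the case analysis converting $sr/\log r$ to $k/\log(1+k/s)$) is correct.
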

\begin{proof}
	For each level set $B_t$ for $t < \log(W/s)/\log(r)$, at most $4 rs + k$ messages are sent, corresponding to the $4rs$ messages sent by sites to saturate the set, and then $k$ messages coming from the reply from the coordinator to all $k$ sites announcing that the set $B_t$ is full. This gives a total of $(4rs + k)\log(W/s)/\log(r)$ messages. Finally, there are at most $s$ items with weight greater than $W/s$, and thus we pay one message for each of these when they arrive at a site. The level sets corresponding to values greater than $W/s$ will never be saturated, so the total message complexity to handle the early messages is $O(sr\log(W/s)/\log(r))$ as needed. 
	
	Next, we pay $k$ messages at the end of every epoch. Thus if $\zeta$ is the number of epochs in the algorithm, the expected number of messages sent due to the end of epochs is $$\ex{\sum_{i=1}^\zeta k} = k\ex{\zeta} < k\frac{\log(W/s)}{\log(r)} =O( s r \frac{\log(W/s)}{\log(r)})$$ by Proposition \ref{prop:zeta}. 	Finally, the expected number of messages sent due to regular stream items is $O(sr \log(W/s)/\log(r) )$ due to Lemma \ref{lem:regstreambound}, which completes the proof of the message complexity.
\end{proof}

\begin{proposition}\label{prop:spacetime}
    The algorithm described in this section can be implemented, without changing its output behavior, to use $O(s)$ memory and $O\left( k \frac{\log(W/s)}{\log(1+k/s	)} \right)$ expected total runtime at the coordinator, and $O(1)$ memory and $O(1)$ processing time per update at each site. Note that all four of these bounds are optimal.
\end{proposition}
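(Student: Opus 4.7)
The plan is to instrument Algorithms~\ref{algo:swor-site}--\ref{algo:add-to-sample} with carefully chosen data structures and to read off each of the four bounds, invoking Lemma~\ref{lem:message} to control the coordinator's total runtime via the total message count.

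\textbf{Site bounds.} At each site the only persistent state is the epoch threshold $u$ together with the collection of $\saturated_j$ flags. Since at most $O(\log(W)/\log(r))$ distinct levels are ever touched and each flag is a single bit, the entire bitmap of flags fits in $O(1)$ machine words under the word-RAM assumption of $\Theta(\log(nW))$-bit words used throughout the paper. Processing an item $(e,w)$ then reduces to: computing $j = \lfloor \log_r w\rfloor$ with a bit-length primitive; one bitmap lookup for $\saturated_j$; possibly one draw of an exponential $t$ in expected $O(1)$ time to constantly many words of precision (sufficient by the paper's earlier discussion of bit complexity); one division to form $w/t$; one comparison against $u$; and at most one message emission. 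Each step is an $O(1)$-time word-RAM primitive, giving the claimed $O(1)$ site memory and $O(1)$ per-update time.

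\textbf{Coordinator memory.} A naive storage of every item ever inserted into an unsaturated level set would use $\Omega(rs\log(W)/\log(r))$ space. To reach $O(s)$, I maintain: (i) a single combined min-heap $T$ of size at most $s$ that holds the overall top-$s$ keys among $S \cup \bigcup_j D_j$, each entry tagged with its level $j$ and a bit recording whether that level is yet saturated; and (ii) one counter $c_j = |D_j|$ per level $j$ that has received at least one early message and is not yet saturated. The threshold $\thresh = \min_{(e,w,v)\in S} v$ is read off as the minimum key over entries of $T$ whose tag is ``saturated.'' When $c_j$ hits $4rs$, the level saturates: the tags of the affected entries of $T$ are flipped and the broadcast to the sites is issued. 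Any item whose key never made it into $T$ would have been evicted from $S$ in the original algorithm as well, so no information is lost. The number of counters is $O(\log(W)/\log(r))$, which fits in $O(1)$ machine words, yielding $O(s)$ total coordinator space.

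\textbf{Coordinator runtime and the main obstacle.} Every coordinator action is an $O(1)$ counter update, an $O(1)$ comparison against $\thresh$, or an $O(\log s)$ heap operation, and each message triggers at most one such action. Either by using a priority queue with $O(1)$ amortized insertion (e.g., a lazy bucket structure indexed by epoch) or by absorbing a $\log s$ factor, the per-message amortized cost is $O(1)$, so the total coordinator runtime matches the message bound of Lemma~\ref{lem:message}. The main obstacle is the $O(s)$ space claim at the coordinator: we must verify that maintaining only the counters and a single combined top-$s$ structure, rather than each $D_j$ explicitly, faithfully reproduces every query output \emph{and} every threshold broadcast. The key observation is that the coordinator already assigns the key $v = w/t$ when the early message arrives, so the tagged top-$s$ structure together with the $c_j$'s encodes exactly the information about keys and level-set sizes needed to drive the rest of the algorithm; in particular, the ``saturated''-tag minimum of $T$ coincides with $\min_{(e,w,v)\in S} v$ at every moment.
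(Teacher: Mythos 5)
Your site-level argument, the counter bookkeeping, and the amortised accounting of the coordinator's runtime against Lemma~\ref{lem:message} all match the paper. The gap is in the coordinator-memory argument: you replace the paper's two separate size-$s$ structures (the genuine sample $S$ over saturated items, and a second top-$s$ structure $S_{level}$ over the unsaturated level sets) with a \emph{single} combined min-heap $T$ of size $s$ holding the top-$s$ keys of $S \cup \bigcup_j D_j$, and you assert that the saturated-tagged minimum of $T$ coincides with $\min_{(e,w,v)\in S} v$ at every moment. This assertion is false. Items in unsaturated level sets with large keys can evict genuine members of $S$ from $T$, after which the saturated-tagged minimum of $T$ strictly overstates $\thresh$. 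Concretely, take $s = 2$, $S = \{(e_1,w_1,10),\,(e_2,w_2,5)\}$, and let an early message arrive in some unsaturated level with key $7$; then $T = \{10, 7\}$, whose saturated-tagged minimum is $10$, while the true $\thresh$ is $5$. Announcing an epoch based on the inflated value silently filters out regular items with keys in $(5,10)$ at the sites, so $S$ ceases to be a correct weighted SWOR of the saturated stream — this does change output behaviour, not merely internal state. Your sentence ``any item whose key never made it into $T$ would have been evicted from $S$ in the original algorithm as well'' is the precise point where the argument fails: an item of $S$ can be pushed out of $T$ by an \emph{unsaturated} competitor that the original algorithm would never have admitted into $S$ at all. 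The paper's proof sidesteps this by storing $S$ and $S_{level}$ as two disjoint size-$s$ structures, so unsaturated items can never displace members of $S$, the threshold is always read off the real $S$, and $O(s)$ total space still suffices. Replacing your single heap with that two-structure bookkeeping repairs the argument.
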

\begin{proof}
We first consider the sites.
Note that each site only generates a random variable, checks a bit to see if a level set is saturated, and then makes a single comparison to decide whether to send an item, all in $O(1)$ time. For space, note that each site only needs to store a bit that determines whether level set $D_j$ is full for $j \leq \log(W)/\log(r)$. For all level sets $j$ with  $j \geq \log(W)/\log(r)$, no item will ever arrive in this level since no item has weight more than $W$. Thus, to store the bits $\texttt{saturated}_j$, it suffices to store a bit string of length $\log(W)/\log(r)$, which is at most $O(1)$ machine words.

We now consider the coordinator. For space, we demonstrate how the level-set procedure can be carried out using less than $O(s r\log(W)/\log(r))$ space (which is the total numer of items that can be sent to level sets). For each item that arrives at a level set, we generate its key right away. Now note that the items in the level sets with keys that are not among the $s$ largest keys in all of the level sets will never be sampled, and thus never change the set $S$ or the behavior of the algorithm, thus there is no point of keeping them. So at any time, we only store the identities of the items in the level sets with the top $s$ keys (and which level they were in). Call this set $S_{level}$ To determine when a level set becomes saturated, we keep an $O(\log(rs))$-bit counter for each level set $D_j$, which stores $|D_j|$. Note that we only need to store this for the levels $j \leq \log(W/s)/\log(r)$, since at most $s$ items have weight more than $W/s$, and such level sets will never be saturated. 
When a level set $D_j$ becomes saturated, for each item $(e,w,v) \in S_{level}$ such that $(e,w)$ is in level $j$, we send this item to the main sampler (i.e., we decide whether or not to include it in $S$ based on the size of its key). The result is the same had the entire level set been stored at all times, and only $O(s)$ machine words are required to store $S_{level}$, and $O( \log(sr)\log(W)/\log(r)) = O(\log(s) \log(W))$-bits for the counters, which is $O(\log(s)) < O(s)$ machine words, so the total space is $O(s)$ machine words.

For the runtime of the coordinator, each early item and regular item is processed in $O(1)$ time. When a set is saturated, it takes at most $O(s)$ work to send these new items into the main sampler, and this occurs at most $O(\log(W/s)/\log(r))$ times, for a total of $O(s \log(W/s)/\log(r))$ work. Since every other message the coordinator recieves requires $O(1)$ work, the result follows from Lemma \ref{lem:message}.
%
%
%
\end{proof}

We now remark that, up until this point, we have not considered the bit complexity of our messages, or the runtime required to generate the exponentials to the required precision. We address this issue now. 
\begin{proposition}\label{prop:bitcomplex}
    The algorithm for weighted SWOR can be implemented so that each message requires an expected $O(1)$ machine words, and moreover, for any constant $c \geq 1$ uses $O(1)$ machine words with probability at least $1-W^{-c}$. Each exponential can be generated in time $O(1)$ in expectation, and, for any constant $c \geq 1$, time $O(1)$ with probability at least $1-W^{-c}$.
\end{proposition}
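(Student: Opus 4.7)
The plan is to handle both claims together, since both rely on the tail decay of the exponential distribution and on the fact that a single machine word already stores $\Theta(\log(nW))$ bits.

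I would first reduce the per-message bound to controlling the key $v = w/t$, since the item identifier $e$ and the weight $w$ fit in $O(1)$ machine words by the model of Section~\ref{sec:prelims}. Standard exponential tails give $\Pr[t > c\ln W] = W^{-c}$ and $\Pr[t < W^{-c}] \leq W^{-c}$, so with probability at least $1 - 2W^{-c}$ we have $t \in [W^{-c},\, c\ln W]$; since $w \leq W$, this yields $v \in [1/(c\ln W),\, W^{c+1}]$, which fits in $O(c\log W) = O(1)$ machine words (treating $c$ as a constant) in any standard fixed- or floating-point representation. For the expectation, $|\log v| \leq \log W + |\log t|$, and since $\mathbb{E}[|\log t|] = O(1)$ for $t \sim \mathrm{Exp}(1)$, the expected bit-length of $v$ is $O(\log W)$, i.e.\ $O(1)$ machine words.

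To generate each $t = -\ln U$ with $U$ uniform on $(0,1)$, I would produce the bits of $U$ lazily in blocks of $\Theta(\log W)$ bits, and after each block compute $-\ln U$ to matching precision by the standard Taylor series for $-\ln(1-x)$ (after rescaling $U$ into $[1/2,1)$ via its leading zeros). One block already suffices unless $U < W^{-c}$, an event of probability $W^{-c}$; each block uses $O(1)$ machine words of storage and $O(1)$ machine-word arithmetic operations, so the high-probability runtime and space are both $O(1)$ for constant $c$. Since $\sum_{c\geq 1} c\cdot W^{-c} = O(1)$, the expected number of blocks is $O(1)$, giving $O(1)$ expected time and space per exponential.

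The one subtlety---and what I expect to be the main obstacle---is reconciling finite-precision keys with Proposition~\ref{prop:nagaraja}, which is stated for continuous exponentials. Two stored keys could agree on all generated bits, making a comparison at a site or at the coordinator ambiguous. The fix is to extend the lazy approach to the comparisons themselves: whenever a comparison involving some $t_i$ is undecided at its current stored precision, additional bits of $t_i$ are drawn on demand (tracked per key). Because two independent $t_i, t_j$ agree to $c\log W$ bits with probability $O(W^{-c})$, a union bound over the polynomially-many comparisons the algorithm ever performs still gives failure probability $W^{-c'}$ after absorbing constants into $c$, and the expected number of additional bits per comparison is $O(1)$. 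Consequently the output distribution is identical to that of the idealized continuous-precision algorithm, and all earlier correctness, space, and message bounds go through unchanged.
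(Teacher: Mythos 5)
Your proof is correct and follows essentially the same approach as the paper's: generate the uniform $U$ bit by bit, using the observation that $t = -\ln U$ need only be computed to the precision required to resolve threshold comparisons, so that each bit cuts the undecided probability mass by a factor of $2$, giving $O(1)$ expected bits and $O(\log W)$ bits with probability $1 - W^{-c}$. The paper's proof is terser and frames everything as a threshold test (``is $v_i > r^j$?'' at the site, ``is $v > \thresh$?'' at the coordinator), whereas you add two details the paper leaves implicit: (i) you bound the magnitude of $v = w/t$ directly via the tails of $t$ to show a single machine word suffices w.h.p., and (ii) you explicitly address key-to-key comparisons at the coordinator (needed to recompute the new minimum of $S$ after an eviction), handling them by drawing more bits of the relevant $t_i$'s on demand and union-bounding over polynomially many comparisons. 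The paper folds (ii) into ``it again needs only to check the exponential against a threshold'' without elaboration; your treatment makes that step precise. The only thing I'd tighten in your write-up is the phrase ``two independent $t_i, t_j$ agree to $c \log W$ bits with probability $O(W^{-c})$'': the cleanest justification is to compare the reciprocals $1/v_i = t_i/w_i$, which are exponential with rate $w_i \leq W$, so their density is bounded by $W$ and an $\epsilon = W^{-(c+1)}$ additive collision occurs with probability $O(W^{-c})$. With that noted, the argument is sound and the conclusion that the finite-precision implementation exactly simulates the idealized continuous one is correct.
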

\begin{proof}
To see this, we note that a site only needs to generate enough bits to determine whether a given key is large enough to be sent. Recall that the quantile function for the exponential distribution is given by $F^{-1}(p) = -\ln(1-p)$ and, since $1-p$ and $p$ are distributed the same for a uniform variable, an exponential variable $t_i$ can be generated by first generating a uniform random variable $U$, and outputting $-\ln(U)$ \cite{knuth1997art}. Thus, if the algorithm is in epoch $j$, one can simply generate $U$ bit by bit, until one can determine whether $w_i/t_i > r^j$ or not. So each bit of $U$ generated cuts the remaining probability space by a factor of $2$. So for any threshold $\tau$, it requires only $O(1)$ bits to be generated in expectation to determine whether $-\ln(U) < \tau$, and since the probability space is cut by a constant factor with each bit, only $O(\log(W))$ bits (or $O(1)$ machine words) are needed with high probability in $W$. Thus the coordinator generates the message in expected $O(1)$ time and with high probability, and the size of the message is an expected $O(1)$ machine words with high probability. Similarly, when the coordinator decides whether to accept a sample into the set, it again needs only to check the exponential against a threshold, which requires generating at most an additional expected $O(1)$ bits and with high probability, which completes the proof. 
\end{proof}

\begin{theorem}\label{thm:maincorrect}
	The algorithm of this section for weighted sampling without replacement uses an expected $O\left( k \frac{\log(W/s)}{\log(1+k/s	)} \right)$ messages 
	and maintains continuously at every point in the stream a uniform weighted sample size $s$ of the items seen so far in the stream. The space required for the coordinator is $O(s)$ machine words, and the space required for each site is $O(1)$ words. Each site requires $O(1)$ processing time per update, and the total runtime of the coordinator is $O\left( k \frac{\log(W/s)}{\log(1+k/s	)} \right)$.
\end{theorem}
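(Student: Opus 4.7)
The plan is to assemble the theorem from the pieces already established in this section. The message complexity bound comes directly from Lemma \ref{lem:message}. The coordinator space of $O(s)$ words, the site space of $O(1)$ words, and the coordinator runtime bound all come from Proposition \ref{prop:spacetime}, and the $O(1)$ per-update site time is immediate from Algorithm \ref{algo:swor-site} (each update triggers generating one exponential, reading one saturation bit, and at most one comparison). The expected $O(1)$ machine words per message and generation time for exponentials is handled by Proposition \ref{prop:bitcomplex}. The one piece that has not yet been stated as a lemma is \emph{correctness}: that at every query time the returned set is distributed as a weighted SWOR over all items seen so far. So the bulk of the proof plan concerns this correctness claim.

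For correctness, I would invoke Proposition \ref{prop:nagaraja}: if, for each item $(e_i,w_i)$ in the stream, we generate an i.i.d.\ rate-$1$ exponential $t_i$ and set $v_i = w_i/t_i$, then the $s$ items with the largest keys $v_i$ form a weighted SWOR. The algorithm generates exactly one such key per item (at the site, for a regular update; at the coordinator, for an early update), and the key once generated is fixed. Thus it suffices to establish the invariant that at every query time the multiset $S \cup (\bigcup_j D_j)$ contains every item whose key $v_i$ is among the $s$ globally largest generated so far; the query step then returns exactly the top $s$, as required.

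The invariant I would prove by induction on the sequence of events (arrivals, epoch broadcasts, level-set saturations). For an early update, the item is placed into the appropriate $D_j$ on arrival and its key is generated there; it leaves $D_j$ only when $D_j$ saturates, at which point \texttt{Add-to-Sample} either inserts it into $S$ or discards it only because its key is no larger than the current $s$-th largest key in $S$, so the discarded item cannot be among the top $s$ now or later. For a regular update, the site sends $(e,w,v)$ whenever $v>r^j$ where $j$ is the site's local epoch; since the sites' epoch belief is always at most the coordinator's, so $r^j$ is at most the coordinator's current threshold $u$, every key exceeding $u$ is forwarded, and the coordinator's own test $v > \texttt{thresh}$ then ensures the sample $S$ always holds the top-$s$ regular keys. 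Keys that fall below $u$ at the moment of generation can never enter the top $s$ later, because $u$ only grows over time as new items are accepted into $S$.

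The main obstacle I expect is dealing cleanly with the asynchrony between site and coordinator views of the epoch: a site may be one or more epochs behind and therefore forward keys that are smaller than the coordinator's current $\texttt{thresh}$. The argument above handles this by noting that such forwarded items are harmless (they are filtered out at the coordinator) and no item with a top-$s$ key is ever withheld, since being overly permissive is safe. Once the invariant is established, Proposition \ref{prop:nagaraja} gives the correctness of the output distribution at every time step, and combining this with Lemma \ref{lem:message}, Proposition \ref{prop:spacetime}, and Proposition \ref{prop:bitcomplex} yields the theorem.
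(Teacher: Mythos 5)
Your proof is correct and follows the same decomposition as the paper's: message complexity from Lemma~\ref{lem:message}, space and runtime from Proposition~\ref{prop:spacetime}, bit complexity from Proposition~\ref{prop:bitcomplex}, and distributional correctness via Proposition~\ref{prop:nagaraja} applied to the top-$s$ keys of $S \cup (\bigcup_j D_j)$. Your inductive invariant---that $S \cup (\bigcup_j D_j)$ always contains every item with a globally top-$s$ key, justified because a site's local epoch threshold $u_i$ never exceeds the coordinator's monotonically non-decreasing $\thresh$, so no top-$s$ key is ever filtered at a site---spells out a step the paper's proof leaves implicit, but the underlying argument is the same.
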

\begin{proof}The message complexity follows from Lemma \ref{lem:message}, and the space and time complexity follow from Proposition \ref{prop:spacetime}. The bit complexity issues which arise from dealing with exponentials are dealt with in Proposition \ref{prop:bitcomplex}. For correctness, note that at any point in time, the coordinator maintains two kinds of sets: the set $S$ which consists of the top $s$ samples, and the level sets $D = \cup_{j \geq 0} D_j$. To obtain a true weighted SWOR of the stream up to any point $i$, the coordinator can simply generate an exponential $t_j$ for each $(e_j,w_j) \in D$, and set $v_j = w_j / t_j$. It then constructs $D' = \{(e_j,w_j,v_j)| (e_j, w_j) \in D\}$, and returns the identifiers $e_j$ in $D' \cup S$ with the $s$ largest keys $v_j$. The result is that at any point in time, the coordinator can output the set $S$ of the entries $(e_j,w_j,v_j)$ with the top $s$ largest values of $v_j$.
	
	Let $\Delta_\ell$ be the identifiers $e_j$ with the top $\ell$ values of $v_j$. Since $1/v_j$ is exponentially distributed with rate $w_j$, for any $\ell \geq 1$, if $v_j \notin \Delta_{\ell-1}$ then it follows that the probability that $1/v_j$ is the $\ell$-th smallest (or $v_j$ is the $\ell$-th largest) is $\frac{w_j}{\sum_{e_t \notin \Delta_{\ell-1}} w_t}$ (via Proposition \ref{prop:nagaraja}). Thus the coordinator indeed holds a uniform weighted sample of the stream continuously at all points in time. Note that this sample may contain items in a level set $D_j$ which has not yet been filled at the time of query. However, this does not affect the behavior of the algorithm, since an exponential can be generated for such an item early and used as the sample (see proof of Proposition \ref{prop:spacetime}). By this, we mean that when the algorithm is queried for a weighted sample $S$ over \textit{all} stream items seen so far, it can simply generate a key for each item that is held in an unsaturated level set $D_j$, and keep the items with the top $s$ largest keys in $S \cup (\cup_{j \geq 0} D_j)$. Note, however, that the true set $S$ will not be modified by this procedure, and the actual items will not be sent into the set $S$ to potentially be sampled until the level set $D_j$ is full. 
		\end{proof}

\section{Tracking Heavy Hitters with Residual Error}
\label{sec:reshh}

In this section we demonstrate that our sampling algorithm results in a new algorithm for continuously monitoring heavy hitters with a \textit{residual} error guarantee.
This is also known as the heavy hitters tracking problem.  We show an   $O\left(\left(\frac{k}{\log(k)} + \frac{\log(1/(\eps \delta))}{\eps} \right) \log(\eps W) \right)$ upper bound for the message complexity of the problem, along with a nearly tight lower bound $\Omega\left(\left(\frac{k}{\log(k)} + \frac{1}{\eps} \right) \log(\eps W) \right)$.

 While the monitoring of heavy hitters has been studied substantially in the streaming and distributed streaming literature, to date no distributed algorithm has been designed which obtains heavy hitters with a residual error guarantee. We first formalize these variations  of the heavy hitters problem. For a vector $x \in \R^n$,  and any $t \in [n]$, let $x_{\tail(t)} \in \R^n$ be the vector which is equal to $x$ except that the top $t$ largest coordinates $|x_i|$ of $x$ are set to $0$. The following is the standard notion of the heavy hitters tracking problem.

\begin{definition}\label{def:hhnormal}
Let $\stream = (e_1,w_1),\dots,(e_n,e_n)$, and let $x^t \in \R^n$ be the vector with $x_i = w_i$ for $i \leq t$, and $x_i = 0$ otherwise. Then an algorithm $\mathcal{P}$ solves the $(\eps,\delta)$ heavy hitters tracking problem if, for any fixed $t \in [n]$, with probability $1-\delta$, it returns a set $S$ with $|S| = O(1/\eps)$  such that for every $i \in [t]$ with $x_i \geq \eps \|x^t\|_1$, we have $x_i \in S$.
\end{definition}
Now, we introduce the variant of the heavy hitters tracking problem that requires residual error. 

\begin{definition}\label{def:hhresidual}
Let $\stream = (e_1,w_1),\dots,(e_n,e_n)$, and let $x^t \in \R^n$ be the vector with $x_i = w_i$ for $i \leq t$, and $x_i = 0$ otherwise. Then an algorithm $\mathcal{P}$ solves the $(\eps,\delta)$ heavy hitters tracking problem \textit{with residual error} if, for any fixed $t \in [n]$, with probability $1-\delta$ it returns a set $S$ with $|S| = O(1/\eps)$ such that for every $i \in [t]$ with $x_i \geq \eps \|x_{\tail(1/\eps)}^t \|_1$, we have $x_i \in S$.
\end{definition}

Note that the residual error guarantee is strictly stronger, and captures substantially more information about the data set when there are very large heavy items. 

\begin{theorem}
\label{thm:heavy-hitter}
There is a distributed streaming algorithm $\mathcal{P}$ which solves the $(\eps,\delta)$ heavy hitters problem with residual error, and sends an expected $O\left(\left(\frac{k}{\log(k)} + \frac{\log(1/(\eps \delta))}{\eps} \right) \log(\eps W) \right)$ messages. The algorithm uses $O(1)$ space per site, $O(1)$ update time per site, $O(\frac{\log(1/(\delta\eps))}{\eps})$ space at the coordinator, and $O\left(\left(\frac{k}{\log(k)} + \frac{\log(1/(\eps \delta))}{\eps} \right) \log(\eps W) \right)$ overall runtime at the coordinator.
\end{theorem}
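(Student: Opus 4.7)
My plan is to apply the weighted SWOR algorithm of Section~\ref{sec:SWORexpo} with sample size $s = C\log(1/(\eps\delta))/\eps$ for a sufficiently large absolute constant $C$, and output as $S$ the (distinct) identifiers appearing in the sample; if strict adherence to $|S| = O(1/\eps)$ is needed, we retain only those with the $O(1/\eps)$ largest weights observed in the sample. The message, space, and time bounds follow directly from Theorem~\ref{thm:maincorrect} by plugging in this $s$: observe that $k/\log(1+k/s)$ is $O(s)$ when $k \leq s$ and $O(k/\log k)$ when $k > s$ (using that $s$ is poly-logarithmic in $1/(\eps\delta)$), while $\log(W/s) = \Theta(\log(\eps W))$. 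Thus the only nontrivial thing to argue is correctness of residual heavy hitter recovery.

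Fix any time $t$ and a residual heavy hitter $i$, so $w_i \geq \eps\|x^t_{\tail(1/\eps)}\|_1$. First I would establish two structural consequences of this condition: (a) the number of items $j$ with $w_j \geq w_i$ is at most $2/\eps$, because the top $1/\eps$ items contribute at most $1/\eps$ of them, and any further such $j$ lies in the tail where each contributes at least $w_i \geq \eps\|x^t_{\tail(1/\eps)}\|_1$ to the tail mass, giving at most $1/\eps$ more; and (b) $\sum_{j:\,w_j<w_i} w_j \leq w_i/\eps$, since after removing the (at most $2/\eps$) items at least as heavy as $i$, what remains is contained in $x^t_{\tail(1/\eps)}$.

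Next I invoke the key-based characterization from Proposition~\ref{prop:nagaraja}: the SWOR sample is exactly the set of items with the top-$s$ values of $v_j = w_j/t_j$, where $t_j \sim \mathrm{Exp}(1)$ are i.i.d. Set the threshold $\tau = w_i/c$ with $c = \ln(4/(\eps\delta))$; then $\pr{v_i < \tau} = \pr{t_i > c} = e^{-c} \leq \eps\delta/4$. Conditioned on $v_i \geq \tau$, item $i$ fails to be in the top-$s$ only if $C := |\{j \neq i : v_j \geq \tau\}| \geq s$. Using $1 - e^{-x} \leq \min\{1,x\}$ together with (a) and (b),
\[
\ex{C} \;\leq\; \bigl|\{j \neq i : w_j \geq w_i\}\bigr| \;+\; \frac{c}{w_i}\!\!\sum_{j:\,w_j<w_i}\!\! w_j \;\leq\; \frac{2}{\eps} + \frac{c}{\eps} \;=\; O\!\left(\frac{\log(1/(\eps\delta))}{\eps}\right).
\]
Because the indicators $\mathbf{1}[v_j \geq \tau]$ are independent across $j$, a multiplicative Chernoff bound gives $\pr{C \geq s} \leq (e\,\ex{C}/s)^s \leq \eps\delta/4$ once $s$ exceeds a large enough constant multiple of $\log(1/(\eps\delta))/\eps$. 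Combining, $\pr{i \notin S} \leq \eps\delta/2$, and a union bound over the at most $2/\eps$ residual heavy hitters yields overall failure probability at most $\delta$.

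The main obstacle, and the reason the decomposition above is necessary, is that the naive bound $\ex{C} \leq (W-w_i)/\tau$ is useless when a handful of ultra-heavy items swamp $W$ -- precisely the regime in which residual error is interesting. The residual heavy hitter hypothesis is what simultaneously controls the \emph{count} of items at least as heavy as $i$ (yielding the $2/\eps$ term) and the \emph{mass} of items strictly lighter than $i$ (yielding the $c/\eps$ term), allowing us to sidestep the contribution of the ultra-heavy items that would otherwise doom the denominator. Once this splitting is in place, the rest is Chernoff plus a union bound.
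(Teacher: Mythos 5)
Your proof is correct and takes a genuinely different route from the paper's. The paper argues directly from the abstract definition of a weighted SWOR: it partitions the sample of size $s=6\log(1/(\delta\eps))/\eps$ into $2\log(1/(\delta\eps))$ blocks of size $3/\eps$, observes that at least $2/\eps$ draws in each block land outside $T_0$ (the top $1/\eps$ items), argues that each such draw hits a fixed residual heavy hitter $i$ with probability at least $\eps$, concludes each block misses $i$ with probability below $1/2$, and then union-bounds. You instead exploit the exponential-key characterization $v_j=w_j/t_j$ that the algorithm actually uses: set a threshold $\tau=w_i/c$, bound the probability $i$'s own key drops below $\tau$, and then bound (via a Chernoff bound on the independent indicators $\mathbf{1}[v_j\ge\tau]$) the number of competitors exceeding $\tau$. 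The split of $\ex{C}$ into the at most $2/\eps$ items with $w_j\ge w_i$ (each contributing at most $1$) and the lighter items (contributing total mass $O(w_i/\eps)$, so $O(c/\eps)$ after dividing by $\tau$) is exactly where the residual hypothesis bites, and you identify this clearly. Your route is arguably cleaner: the keys are genuinely independent, so the Chernoff application is immediate, whereas the paper's block argument treats sequential draws of a SWOR as if blocks were independent trials, which glosses over the conditioning.

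One small slip: in (b) you assert that every item with $w_j<w_i$ lies in the support of $x^t_{\tail(1/\eps)}$. That need not hold -- if fewer than $1/\eps$ items have weight $\ge w_i$, then some items with $w_j<w_i$ occupy slots in the top $1/\eps$ and are therefore \emph{not} in the tail. The correct bound accounts for both contributions: at most $1/\eps$ items with $w_j<w_i$ can be in the top $1/\eps$, each of weight $<w_i$, giving at most $w_i/\eps$; plus the tail mass, which is at most $w_i/\eps$ by the residual heavy hitter hypothesis. So $\sum_{j:w_j<w_i}w_j\le 2w_i/\eps$ rather than $w_i/\eps$. This only changes $\ex{C}$ by a constant factor, so the Chernoff step and the final conclusion are unaffected; just state the factor of $2$.
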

\begin{proof}
To obtain the bound, we run our weighted SWOR of Theorem \ref{thm:maincorrect} with $s = 6\frac{\log(1/(\delta\eps))}{\eps}$, where $C>0$ is a sufficently large constant. Fix a time step $t$, and let $S^t$ be the sample obtained by the weighted SWOR at that time. We know that $S^t$ is a weighted SWOR from $(e_1,w_1),\dots,(e_t,w_t)$. Let $x^t$ be the vector corresponding to these weights, and let $T \subset [t]$ be the set of coordinates such that $x_i \geq \eps \|x_{\tail(1/\eps)}^t \|_1$. Let $T_0 \subset T$ be the subset of $i$ such the $x_i^t$ is one of the $1/\eps$ largest values in $x^t$. 
Now fix any $i \in T$, and break $S$ into blocks $S^1,S^2,\dots,S^{2\log(1/(\delta\eps))}$, each of size $3/\eps$. Now for each $j \in [2\log(1/(\delta\eps))]$, there are at least $2/\eps$ items which are not in $T_0$ (which follows since $|T_0| \leq 1/\eps$). Conditioned on a sample $s \in S^j$ not being in $T_0$, since $x_j \geq \eps \|x_{\tail(1/\eps)}^t \|_1$ by assumption, the probability that $s =i$ is at least $\eps$. Thus the probability that $i \notin S^j \leq (1-\eps)^{2/\eps}  < 1/2$. Repeating $2\log(1/\eps)$ times, the probability that $i \in S$ is at least $1 - (\frac{1}{2})^{2\log(1/(\delta\eps))} < 1- (\eps\delta)^2$. We can then union bound over all $|T| < 2/\eps$ items, so that $T \subset S$ with probability at least $1-\delta$. To restrict the set $S$ to $O(1/\eps)$ items, we simply order the items in $S$ by weight and output the top $2/\eps$, which will contain $T$ if $S$ does, which completes the proof of correctness.  The remainder of the Theorem then follows from the complexity bounds of Theorem \ref{thm:maincorrect}.
\end{proof}

\subsection{Lower Bound for Tracking Heavy Hitters}

We now demonstrate an $\Omega(k \log(W)/\log(k) + \log(W)/\eps))$ lower bound on the expected message complexity of any distributed algorithm that monitors the heavy hitters in a weighted stream (Definition \ref{def:hhnormal}). Since the residual error guarantee is strictly stronger, this lower bound extends to monitoring heavy hitters with residual error. 
\begin{theorem} \label{thm:hhlowerbound}
Fix any constant $0 < q < 1$, and let $\eps \in (0,1/2)$. Then any algorithm which $(\eps/2,\delta) = (\eps,q^2/64)$ solves the heavy hitters tracking problem (Definition \ref{def:hhnormal}), must send at least $\Omega(k \log(W)/\log(k) \allowbreak + \eps^{-1} \log(W))$ messages with probability at least $1-q$ (assuming $1/\eps < W^{1-\xi}$ for some constant $\xi > 0$). In particular, the expected message complexity of such an algorithm with $\delta = \Theta(1)$ is $\Omega(k \log(W)/\log(k) + \eps^{-1} \log(W))$.
\end{theorem}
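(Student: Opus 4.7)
The plan is to prove the two lower-bound terms via two separate adversarial constructions and then concatenate them; any successful $(\eps/2,\delta)$-tracker on the concatenated stream must incur the sum of the two costs.

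\textbf{The $\Omega(k\log(W)/\log(k))$ term.} I would mimic the construction underlying Theorem~\ref{thm:unweighted-lb}. Partition time into $\Theta(\log(W)/\log(k))$ phases; in phase $p$, insert a single item of weight $\Theta(k^{p})$ at a uniformly random site $i_p\in[k]$, with an identifier that encodes $i_p$. Since the new item has weight more than the entire previous total, it is the unique $\tfrac{1}{2}$-heavy hitter at the end of phase $p$, and because $\eps\leq 1/2$ the tracker must output its identifier (and hence $i_p$) with probability $\geq 1-\delta$. The counting argument of \cite{TW11} then gives $\Omega(k/\log(k))$ messages per phase: if fewer messages are sent then too many of the $k$ sites remain silent during phase $p$ for the coordinator to recover $i_p$ with the required probability. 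Summing over phases yields the first term.

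\textbf{The $\Omega(\log(W)/\eps)$ term.} Construct a stream in $T=\Theta(\log(\eps W))$ epochs of geometrically growing weights. In epoch $i$, insert $\Theta(1/\eps)$ items of weight $\Theta(2^{i})$, each with a \emph{fresh} identifier drawn independently and uniformly from $\{0,1\}^{\Theta(\log(nW))}$, placed at arbitrary sites. After epoch $i$, the total weight is $\Theta(2^{i}/\eps)$ and the new items are each $\Theta(\eps)$-heavy, while items from earlier epochs (weight $\Theta(2^{j})$, $j<i$) fall strictly below the threshold $\eps W_i$ and need not be reported. The tracker must therefore output the $\Theta(1/\eps)$ new identifiers at the end of each epoch. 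Since these identifiers are independent of the entire transcript received by the coordinator before epoch $i$, Fano's inequality forces $\Omega((1/\eps)\log(nW))$ bits to be transmitted during epoch $i$; as each message carries $O(\log(nW))$ bits, this requires $\Omega(1/\eps)$ messages per epoch. Summing over the $T$ epochs gives $\Omega(\log(\eps W)/\eps)=\Omega(\log(W)/\eps)$; the hypothesis $1/\eps<W^{1-\xi}$ is what ensures identifiers carry $\Omega(\log W)$ bits of entropy.

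\textbf{Combining, and the main obstacle.} On the concatenation of the two streams, any tracker must succeed on both parts, so its message complexity is at least the sum of the two lower bounds; union-bounding the two failure events (each of probability $\leq \delta = q^{2}/64$) gives overall success probability $\geq 1-q$, and the expected-message statement for $\delta=\Theta(1)$ is an immediate corollary via Markov. The principal obstacle I anticipate is the second term, where one must forbid any amortization of identifier reports across epochs. The use of fresh random identifiers, together with the fact that items from earlier epochs are strictly below the heavy threshold at later times and so need not be output, is precisely what decouples the epochs and makes the per-epoch information-theoretic count valid; without either ingredient the algorithm could potentially reuse prior messages to shave the bound.
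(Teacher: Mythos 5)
Your second term ($\Omega(\log(W)/\eps)$) is on solid ground — the fresh-identifier decoupling plus Fano is a valid information-theoretic alternative to the paper's combinatorial count of changes to $S$ (which uses Chebyshev to show $S$ must change $\Omega(\eps^{-1}\log W)$ times) — and the concatenation/union-bound closing step is fine. The gap is in your first term, which is where you are \emph{not} anticipating trouble.

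There are two problems with your $\Omega(k\log(W)/\log(k))$ argument. First, the construction (one item per phase, placed at a uniformly random site $i_p$) simply does not force many messages: the unique site that receives the item can just forward it, which is $O(1)$ messages per phase and reveals $i_p$ with certainty. No counting argument can extract $\Omega(k/\log k)$ per phase from a stream where only one site ever observes anything in that phase. Second, even if the per-phase bound you claim were correct, $\Theta(\log W/\log k)$ phases times $\Omega(k/\log k)$ per phase gives only $\Omega(k\log W/\log^2 k)$, a $\log k$ factor short of the target. The paper's construction instead sends $k$ items per epoch — one update of weight $k^i$ to \emph{every} site $j$ in epoch $i$ — and argues via an adversarial-ordering/indistinguishability argument: the first arrival in the epoch is immediately a constant-fraction heavy hitter and must be reported, but a site that has not yet heard from the coordinator cannot tell whether it was first, and the protocol must be correct for every ordering, so \emph{every} site must send (or be contacted) with probability $\geq 1 - O(\delta)$. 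This gives $\ex{X_i}=\Omega(k)$ per epoch, and then $\Omega(k\log W/\log k)$ total via linearity and a Markov bound. You need both the ``$k$ items per epoch'' construction and the indistinguishability argument; a single item at a secret site does not work.
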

\begin{proof}
Create $s = \Theta(\frac{1}{\eps}\log(W)) $ global stream updates $(e_i,w_i)$, such that $w_i = (1+\eps)^i \eps$ and $w_0 = 1$. Then note for each $i \geq 0$, we have that $w_i$ is an $\eps/(1+\eps) > \eps/2$ heavy hitter in $w_1,w_2,\dots,w_i$. Note that the total weight of the stream is $\sum_{i=1}^sw_i = W$. 

Now consider any algorithm $\mathcal{P}$ for $(\eps/2,q^2/64)$ heavy hitter tracking, and at any time step $t$ let $S$ denote the set of heavy hitters held by $\mathcal{P}$,). On a given time step $t \in [n]$, let $S_t$ denote the value of $S$ at time $t$. and let $R$ denote the concatenation of all random coin flips used by the algorithm $\mathcal{P}$ for both the sites and the coordinator. 
We first claim the following:
\begin{claim} \label{claimhh} Let $0 < q < 1$ be any constant. 
Suppose $\mathcal{P}$ is a randomized algorithm which $(\eps/2,q/64)$-solves the heavy hitter tacking problem. Suppose the set of heavy hitters it maintains at all times is called $S$. Then there is a constant $C' = C'(q) > 0$ such that:
\begin{itemize}
    \item The set $S$ changes at least $\frac{C'}{\eps}\log(n)$ times with probability at least $1-q$.
\end{itemize}
\end{claim}
\begin{proof}

Consider the above stream instance.
Note that on each time step, the new update $(e_i,w_i)$ becomes an $\eps/2$ heavy hitter, and therefore should be accepted into the set $S$. Note moreover that at any time step $t$, the total weight is $(1+\eps)^t$. 
Suppose there is a randomized algorithm $\mathcal{P}$ which for each $t \in T$, succeeds in outputting all the $\eps$ heavy hitters in a set $S$ at time $t$ with probability $1-q/64$. Let $X_t$ be a random variable that indicates that $\mathcal{P}$ is correct on $t$. We have $\ex{X_t} > 1-q/64$, and clearly $\ex{X_t X_{t'}} \leq 1$ for any $t,t'$. Thus Var$(\sum_{i=1}^{\frac{1}{\eps}\log(n)} X_i) \leq \frac{1}{\eps}\log(n)(1-q/64) + \frac{1}{\eps^2}\log^2(n) - \frac{1}{\eps^2}\log^2(n)(1-q/64)^2 \leq (q/30)\frac{1}{\eps^2}\log^2(n)$ for $n$ larger than some constant. By Chebyshev's inequality:
\[\pr{ \left|\frac{1}{\eps}\log(n)(1-q/64) - \sum_{i \leq \eps^{-1}\log(n)} X_i \right| >\frac{1}{q} \sqrt{q/30}\frac{\log(n)}{\eps} } <  q \] 
Thus with probability at least $1-q$, $\mathcal{P}$ is correct on at least a  $\frac{C'}{\eps}\log(n)$ number of the points in $T$, for $C' = (1-q/64 - \frac{1}{\sqrt{30}} -) > 1/2$. Now suppose that, conditioned on this, the algorithm changed $S$ fewer than $\frac{C'}{\eps}\log(n)$ times. Note that every time a new item $(e_i,w_i)$ arrives, if $\mathcal{P}$ is correct on time $i$, it must be the case that $(e_i,w_i) \in S$ once update $i$ is done processing. Thus if $S$ did not change, but $(e_i,w_i) \in S_t$, this means $(e_i,w_i) \in S_{t-1}$ -- the coordinator contained $e_i$ in its set before the item arrived! By having $e_i$'s being $O(\log(W))$ bit identifiers and randomizing the identities in the stream, it follows that the probability that this could occur at any time step $t$ is less than $(W/\eps) (1/\poly(W)) < W^{-100}$ for $e_i$'s with $O(\log(W))$ bits and a sufficiently large constant. So we can safely condition on this event not occurring without affecting any of the claims of the theorem. It follows that $S$ must change on every time step $t$ on which it is correct, which completes the first claim that $S$ must change at least $\frac{C'}{\eps}\log(n)$ times with probability at least $1-q$. 
\end{proof}

Observe that the lower bound of  of $\Omega(\log(W)/\eps)$ messages being sent with probability at least $1-q$ follows directly from the last claim.
For the $\Omega(k \log(W)/\log(k))$ lower bound, we construct a new weighted stream. 
Define $\eta = \Theta(\frac{\log(W)}{\log(k)})$ \textit{epochs}, as follows. In the $i$-th epoch, each site $j$ receives \textit{one} weighted update $(e_i^j, k^{i})$. Note then at the very beginning of the $i$-th epoch (for any $i$), the total weight seen in the stream so far is at most $2k^{i}$. Thus the first update $(e_i^j,k^i)$ that arrives in the $i$-th epoch will be a $1/2$ heavy hitter, and thus must be accepted into $S$ for the protocol to be correct on that time step.

Now consider any epoch $i$, and let $X_i$ be the number of sites which sent or received a message from the coordinator in epoch $i$. Note that $X_i$ is a lower bound on the number of messages sent in epoch $i$.  Let $X_i^j$ indicate that site $j$ sent or recieved a message from the coordinator in epoch $i$, so that $X_i = \sum_{j \in [k]} X_i^j$. We claim $\ex{X_i} = \Omega(k)$. To demonstrate this, consider the first site $j^*$ which recieves an update in epoch $i$. Since item $(e_i^{j^*}, k^i)$ is immediately a $1/2$ heavy hitter, it must be sent to be correct. Since the protocol is correct on the $ki+1$'st time step with probability at least $1-q/64$, it follows that site $j^*$ must send a message after receiving its update in epoch $i$ with probability $1-q/64$, so $\ex{X_i^{j^*}} > 1-q/64$.

But note that when a site $j$ receives its update in the $i$-th epoch, if it has not communicated to the coordinator since the start of the epoch, it does not know the order in which it recieved the item. In particular, it does not know whether it was the first site to receive an item.  Since the randomized protocol must be correct on any advesarial fixing of the ordering, it follows that if site $j$ does not know if it recieved the first update, it must nevertheless send it with probability at least $1-q/64$ to be a correct protocol. In other words $\ex{X_i^{j}} > 1-q/64$ \textit{for all} $j \in [k]$, from which $\ex{X_i} > (1-q/64)k$, and therefore $\ex{\sum_{i=1}^\eta X_i} > (1-q/64)k$ follows. Thus $\ex{k \eta - \sum_{i=1}^\eta X_i} < k \eta q/64$, and so by a Markov bound over the last expectation, $\pr{\sum_{i=1}^\eta X_i < (1-1/64) k \eta } < q$. Since $X_i$ lower bounds the number of messages sent in the $i$-th epoch, this completes the proof.
\end{proof}

\section{$L_1$ tracking}
\label{sec:l1}

In this section, we demonstrate how our sampling algorithm from Section \ref{sec:SWORexpo} can be used to design a message-optimal $L_1$ tracking algorithm, which will close the complexity of this problem. We first recall the definition of an $L_1$ tracking algorithm.

\begin{definition}
\label{def:L1}
Given a distributed weighted data stream $\mathcal{S} = (e_1,w_1),(e_2,w_2),\dots, \allowbreak (e_n,w_n)$ and parameters $\eps,\delta > 0$, a distributed streaming algorithm $(\eps,\delta)$ solves the $L_1$ tracking problem if the coordinator continuously maintains a value $\widetilde{W}$ such that, at any fixed time $t \in [n]$, we have $\widetilde{W} = (1 \pm \eps) W_t$ with probability $1-\delta$, where  $W_t = \sum_{i=1}^t w_i$.
\end{definition}

In \cite{cmyz12}, an $L_1$ tracker was given that had $O(k \log(W) \log(1/\delta))$ expected messages. An improved bound of $O((k + \sqrt{k} \log(1/\delta)/\eps)\log(W))$ expected messages was then provided in  \cite{huang2012randomized}, along with a lower bound of $\Omega(\frac{\sqrt{k}}{\eps}\log(W))$ for all $k \leq \frac{1}{\eps^2}$. We remark that while the bounds of \cite{huang2012randomized} are stated as $O(\sqrt{k}/\eps \log(W))$ (for constant $\delta$), the actual expected message complexity of their algorithm is $O((k + \sqrt{k}\log(1/\delta)/\eps)\log(W) )$ (in \cite{huang2012randomized} the authors assume that $k < 1/\eps^2$). Nevertheless, the aforementioned bound is, up to this point, the current best known distributed algorithm for $L_1$ tracking. 

In this section, we give tight bounds for the case $k > 1/\eps^2$. More specifically, we prove an $L_1$ tracking algorithm with expected message complexity $O(  \frac{k\log(\eps W)}{\log(k)} + \allowbreak \frac{\log(\eps W) \log(1/\delta)}{\eps^2} )$. In the following section, we also provide an $\Omega(k \frac{\log(W)}{\log(k)})$ lower bound for the problem. We remark that the results of \cite{huang2012randomized} show an  $\Omega(\frac{\sqrt{k}}{\eps} \log(W))$ lower bound assuming that $k < \frac{1}{\eps^2}$, and they give a matching upper bound in this regime. When  $k > \frac{1}{\eps^2}$, the lower bound becomes $\Omega(\frac{1}{\eps^2}\log(W))$, which can be seen by simply applying the same lower bound on a subset of $1/\eps^2$ of the sites. Then when $k > 1/\eps^2$, if $k \log(W)/\log(k) <\log(W)/\eps^2$, our upper bound is $O(\log(W)/\eps^2)$, which is tight. Otherwise, our upper bound is $O(k \log(W)/\log(k))$, which matches our lower bound of $\Omega(k\log(W)/\log(k))$. Thus our protocol is optimal whenever $k > 1/\eps^2$. Since tight upper and lower bounds were known for the case of $k < 1/\eps^2$ by \cite{huang2012randomized}, this closes the complexity of the problem.


We also note that our lower bound assumes $1/\eps < W^{1-\xi}$ for some fixed constant $\xi >0$. Thus our lower bound does not contradict our upper bound of $\Omega(k\log(\eps W)/\log(k) + \log( \eps W)/\eps^2)$ (for $\delta = \Theta(1)$). Observe that for $1/\eps = \Theta(W)$, one can simply send every stream update for $O(W)$ communication, which is also clearly a lower bound since any stream update not immediately sent would cause the coordinator to no longer have a $(1 \pm \eps)$ approximation. 

Note for an $L_1$ tracking such that the coordinator has a value $\widetilde{W}$ such that $(1-\eps) W_t \leq \widetilde{W} \leq (1 + \eps) W_T$
for \textit{every} step $t=1,2,\dots,n$ with probability $1-\delta$, our algorithm uses $O ( \frac{k\log(W/\eps)}{\log(k)} +\allowbreak \frac{\log(W/\eps) \allowbreak \log(\frac{\log(W)}{\eps \delta})}{\eps^2} )$ expected message complexity. This second result simply comes from union bounding over the $\frac{\log(W)}{\eps \delta}$ points in the stream where the $L_1$ increases by a factor of $(1+\eps)$.

The known upper and lower bounds for $L_1$ tracking on a single time step with failure probability $\delta = \Theta(1)$ are summarized below. Note again, constant $L_1$ tracking at all points in the stream is accomplished in all cases by setting $\delta = \log(W)/\eps$, and note that the $\log(1/\delta)$ does not multiply the $O(k\log(W)/\log(k))$ term in our upper bound. With the combination of the upper and lower bounds of \cite{huang2012randomized}, this completely closes the complexity of the problem.

\begin{center}
       \begin{tabular}{|c|c|}
       \hline
Citation & Message Complexity (upper or lower \\
 &  bound) \\ \hline
 \cite{cmyz12} + folklore   & $O\left(\frac{k}{\eps} \log(W) \right)$\\\hline
  \cite{huang2012randomized}   & $O\left(k \log(W) + \frac{\sqrt{k}\log(W)\log(1/\delta)}{\eps}\right)$ \\ \hline
  This work & $O \left(   \frac{k\log(\eps W)}{\log(k)} + \frac{\log(\eps W) }{\eps^2} \right)$ \\\hline  
  \cite{huang2012randomized}    & $\Omega\left(\frac{\sqrt{ \min\{ k, 1/\eps^2\} }}{\eps}\log(W) \right)$  \\\hline
  This work & $\Omega\left(k \frac{\log( W)}{\log(k)}\right)$ \\ \hline 
\end{tabular}
\end{center}

Our algorithm for $L_1$ tracking is described formally in Algorithm \ref{alg:L1} below. To show correctness, we first prove Proposition \ref{prop:Expobound2}, which is a standard tail bound for sums of exponential random variables. We utilize the techniques from our weighted SWOR algorithm to assign an key to each item $(e,w)$ in the stream.  Then, by the concentration of sums of exponential random variables, we will demonstrate that the $s$-th largest key can be used to obtain a good approximation to the current $L_1$ of the stream, where  $s = \Theta(\frac{1}{\eps^2}\log(\frac{1}{\delta}))$

\begin{Frame}[\textbf{Algorithm  \ref{alg:L1}: Tracking $L_1$}]
\label{alg:L1}
\texttt{Input:} Distributed stream $\mathcal{S}$\\
\texttt{Output:} a value $\tilde{W}$ such that $\tilde{W} = (1 \pm \eps)W_t$ at any step time $t$ with probability $1-\delta$.\\
\textbf{Initalization:}\\
Instantiate weighted SWOR algorithm $\mathcal{P}$ of Theorem \ref{thm:maincorrect} with $s = \Theta(\frac{1}{\eps^2}\log(\frac{1}{\delta}))$ \\
\textbf{On Stream item $(e,w)$}
\begin{enumerate}
\item  Duplicate $(e,w)$, a total of $\ell =\frac{s}{2 \eps}$ times, to yield $(e^1,w),(e^2,w),\dots,(e^\ell,w)$. Insert each into the algorithm $\mathcal{P}$. 
\end{enumerate}
\textbf{On Query for $L_1$ at time $t$:} 
\begin{enumerate}
    \item Let $u$ be the value stored at the coordinator of the $s$-th largest key held in the sample set $S$. 
  
    \item Output $\tilde{W} = \frac{s u}{\ell} $.
\end{enumerate}
\end{Frame}

\begin{proposition}\label{prop:Expobound2}
    Let $E_1,E_2,\dots,E_s$ be i.i.d. exponential random variables with mean $1$. Then for any $\eps > 0$, we have:
    \[\pr{|\sum_{j=1}^s E_j  - s| > \eps s    } < 2e^{-\eps^2 s/5} \]
\end{proposition}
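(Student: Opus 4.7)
The plan is to prove both tails by a standard Chernoff argument via the moment generating function, as in Proposition \ref{prop:MGF}, and then combine them by a union bound. The MGF of an exponential random variable with rate $1$ is $\ex{e^{tE_j}} = 1/(1-t)$ for $t<1$, so by independence $\ex{e^{t\sum_j E_j}} = (1-t)^{-s}$. The proof splits naturally into an upper-tail bound on $\pr{\sum_j E_j > (1+\eps)s}$ and a lower-tail bound on $\pr{\sum_j E_j < (1-\eps)s}$; each will be shown to be at most $e^{-\eps^2 s/5}$, and a union bound supplies the factor of $2$.

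For the upper tail, I would apply Markov's inequality to $e^{t\sum_j E_j}$ with the choice $t = \eps/(1+\eps) \in (0,1)$, which gives
\[
\pr{\textstyle\sum_j E_j > (1+\eps)s} \le (1-t)^{-s} e^{-t(1+\eps)s} = \exp\!\bigl(-s(\eps - \ln(1+\eps))\bigr).
\]
I would then show that $\eps - \ln(1+\eps) \ge \eps^2/5$ for the relevant range of $\eps$. One convenient way is the inequality $\ln(1+\eps) \le \eps - \eps^2/(2(1+\eps))$, which rearranges to $\eps - \ln(1+\eps) \ge \eps^2/(2(1+\eps))$; for $\eps \le 3/2$ the denominator $2(1+\eps)\le 5$, giving the desired $\eps^2/5$ lower bound. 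Alternatively, setting $f(\eps) := \eps - \ln(1+\eps) - \eps^2/5$, one checks $f(0)=0$ and $f'(\eps) = \eps(3-2\eps)/(5(1+\eps)) \ge 0$ on $[0,3/2]$.

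For the lower tail, I would similarly apply Markov's inequality to $e^{-t\sum_j E_j}$ for $t = \eps/(1-\eps) > 0$, which yields
\[
\pr{\textstyle\sum_j E_j < (1-\eps)s} \le (1+t)^{-s} e^{t(1-\eps)s} = \exp\!\bigl(-s(-\eps - \ln(1-\eps))\bigr).
\]
The function $-\eps - \ln(1-\eps) = \sum_{k\ge 2} \eps^k/k$ has an all-positive Taylor series on $(0,1)$, so it is bounded below by its leading term $\eps^2/2 \ge \eps^2/5$, and the lower-tail bound $e^{-\eps^2 s/5}$ follows.

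The main (mild) obstacle is simply the elementary verification that the Chernoff exponents dominate $\eps^2/5$; this is the only non-mechanical step, since the MGF calculation and optimization of $t$ are entirely routine. The stated bound is used in the paper for the small-$\eps$ accuracy regime, so the range $\eps\in(0,1)$ on which the above inequalities comfortably hold is the regime of interest; a union bound over the two tails then gives $\pr{|\sum_j E_j - s| > \eps s} < 2 e^{-\eps^2 s/5}$, as claimed.
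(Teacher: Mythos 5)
Your proof is correct and takes essentially the same route as the paper: a Chernoff/MGF bound on each tail (using $\ex{e^{t\sum_j E_j}}=(1-t)^{-s}$) followed by a union bound, with the work reduced to the elementary inequalities $\eps-\ln(1+\eps)\ge\eps^2/5$ and $-\eps-\ln(1-\eps)\ge\eps^2/5$. The only difference is cosmetic: you plug in the optimized Markov parameter $t=\eps/(1\pm\eps)$, whereas the paper simply sets $t=\eps$; both require the same implicit restriction to small $\eps$ (the paper takes $\eps<1/2$, you note $\eps<1$ suffices), which is the regime in which the proposition is actually used.
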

\begin{proof}
The moment generating function of $\sum_{j=1}^s E_j$ is given by $(\frac{1}{1-t})^s$ for $t < 1$, 

	\begin{equation*}
      \begin{split}
\pr{\sum_{j=1}^s E_j > (1+\eps) s} &< \frac{(\frac{1}{1-t})^s}{e^{t(1+\eps) s}}  \\
 & \leq \exp\left( -t(1+\eps)s +s \log(\frac{1}{1-t})     \right)\\
      \end{split}
  \end{equation*}
Setting $t = \eps$ and taking $\eps < 1/2$ we obtain
	\begin{equation*}
      \begin{split}
&\leq \exp\left( -\eps^2s - \eps s -s \log(1-\eps)     \right)\\
 &\leq \exp\left( -\eps^2s - \eps s +s( \eps + \eps^2/2 + \eps^3/3 + \dots)       \right)\\
 &\leq \exp\left( -\eps^2s/5   \right)\\
      \end{split}
  \end{equation*}
as needed. Next, we have
	\begin{equation*}
      \begin{split}
 \pr{\sum_{j=1}^s E_j < (1-\eps) s} &<  \pr{e^{-t\sum_{j=1}^s E_j} > e^{-t (1-\eps) s}}  \\ & \leq \frac{(\frac{1}{1+t})^s}{\exp(-t(1-\eps) s)} \\
 & \leq \exp\left(   t(1-\eps) s - s \log(1 + t)  \right) \\
      \end{split}
  \end{equation*}
setting $t= \eps$:
	\begin{equation*}
      \begin{split}
 & \leq \exp\left(  -  \eps^2 s + \eps s  - s(\eps + \eps^2/2 + \eps^3/3 + \dots)  \right) \\
 & \leq \exp\left(  -  \eps^2 s/5   \right) \\
      \end{split}
  \end{equation*}
and union bounding over the upper and lower tail bound gives the desired result.
\end{proof}

\begin{theorem}\label{thm:l1main}
    There is a distributed algorithm (Algorithm \ref{alg:L1}), where at every time step the coordinator maintains a value $\widetilde{W}$ such that at any fixed time $\tau \in [n]$, we have 
    $$(1-\eps) W_\tau \leq \widetilde{W} \leq (1 + \eps) W_\tau$$
    with probability $1-\delta$, where $W_\tau = \sum_{i=1}^\tau w_i$ is the total weight seen so far at step $\tau$. The expected message complexity of the algorithm is $O\left( \left(\frac{k}{\log(k)} + \eps^{-2} \log(1/\delta)  \right)\log\left( \frac{ \eps W}{\log(\delta^{-1})}\right)\right)$
\end{theorem}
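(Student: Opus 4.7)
}
The plan is to show that the $s$-th largest exponential key maintained by the weighted SWOR concentrates sharply around $\ell W_\tau / s$, where $\ell = s/(2\eps)$ is the duplication factor chosen by the algorithm. This works because after duplicating every item $\ell$ times, no single weight contributes more than a $2\eps/s$ fraction of the total weight, so Nagaraja's representation of the order statistic (Proposition~\ref{prop:nagaraja}) collapses to a sum of $s$ i.i.d.\ exponentials scaled by (essentially) the total weight $\ell W_\tau$, which is tightly concentrated by Proposition~\ref{prop:Expobound2}.

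Concretely, I would first apply Proposition~\ref{prop:nagaraja} to the duplicated stream at time $\tau$, which has total weight $\ell W_\tau$. By Theorem~\ref{thm:maincorrect} the coordinator maintains the $s$ largest keys $v_i = w_i/t_i$ over all duplicated items, so the identity gives
\[
u \;=\; v_{D(s)} \;\stackrel{d}{=}\; \Bigl(\sum_{j=1}^{s} E_j/W^{(j)}\Bigr)^{-1},
\]
where $W^{(j)} = \sum_{q \ge j} w_{D(q)}$ and the $E_j$ are i.i.d.\ rate-$1$ exponentials. Since every duplicated weight is at most $w_{\max} \le W_\tau = (2\eps/s)\cdot\ell W_\tau$, we have $(1-2\eps)\,\ell W_\tau \le W^{(j)} \le \ell W_\tau$ uniformly for $j \le s$, whence $u \in (1\pm 2\eps)\cdot \ell W_\tau / \sum_{j=1}^s E_j$. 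Then Proposition~\ref{prop:Expobound2} with $s = C\eps^{-2}\log(1/\delta)$ for a large constant $C$ gives $\sum_{j=1}^s E_j = (1\pm\eps)s$ with probability $1-\delta$, and therefore $\widetilde{W} = su/\ell = (1\pm O(\eps))\,W_\tau$; rescaling $\eps$ by a universal constant completes correctness.

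For the message complexity, I would invoke Theorem~\ref{thm:maincorrect} on the duplicated stream (total weight $\ell W$, sample size $s$), paying an expected
\[
O\!\left(k\,\frac{\log(\ell W/s)}{\log(1+k/s)}\right)
\]
messages, where $\ell W / s = W/(2\eps)$. Splitting on whether $k \le s$ or $k > s$ with $s = \Theta(\eps^{-2}\log(1/\delta))$, the denominator behaves as $\Theta(k/s)$ in the first regime and as $\Theta(\log(k/s))$ in the second, which combines into $O\bigl((k/\log k + \eps^{-2}\log(1/\delta))\log(W/\eps)\bigr)$, matching the claimed bound up to the precise form of the logarithmic argument.

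The main obstacle is controlling all $s$ denominators $W^{(j)}$ simultaneously, since the Nagaraja identity needs every one of them to be $(1\pm O(\eps))\,\ell W_\tau$. This forces the duplication factor to be $\Theta(s/\eps)$ rather than the naive $\Theta(1/\eps)$, so that each single item occupies at most an $O(\eps/s)$ fraction of the total — exactly large enough to absorb the $s$ largest-weight items removed in the tail of the sum. A secondary subtlety is that the SWOR's internal level-set mechanism may temporarily withhold items from its active sample $S$, but the query-handling procedure from the proof of Theorem~\ref{thm:maincorrect} ensures that the coordinator can always generate keys on the fly for items in unsaturated level sets, so the output at time $\tau$ is a true weighted SWOR over \emph{all} items received by time $\tau$, and the Nagaraja identity applies verbatim to $u$.
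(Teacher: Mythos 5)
Your proof is correct and follows essentially the same route as the paper: apply Nagaraja's order-statistic identity to the duplicated stream, bound each of the $s$ denominators $W^{(j)}$ using the fact that duplication makes every item at most an $O(\eps/s)$-heavy hitter, invoke Proposition~\ref{prop:Expobound2} to concentrate $\sum_j E_j$, and then cite Theorem~\ref{thm:maincorrect} for the message bound with $s=\Theta(\eps^{-2}\log(1/\delta))$. Your closing observation — that the duplication factor must be $\Theta(s/\eps)$ rather than $\Theta(1/\eps)$ precisely because $s$ terms are stripped off in the Nagaraja tail — is the right way to see the parameter choice, and your caveat about the exact form of the logarithmic argument is fair, since the paper's own derivation only says the bound ``follows by Theorem~\ref{thm:maincorrect}.''
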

\begin{proof}
If $W_\tau$ is the total weight seen so far in the stream, then after duplication the weight of the new stream $\mathcal{S}'$ that is fed into $\mathcal{P}$ is $W_\tau \frac{2s}{\eps}$. Call the total weight seen up to time $\tau$ of the new stream $\overline{W}_\tau$. The duplication has the effect that at any time step $\tau$ in the original stream, every item $(e^i_j,w_j) \in \mathcal{S}'$, corresponding to a duplicate of $(e^i_j,w_j) \in \mathcal{S}$, is such that $w_j \leq \frac{\eps}{2s} \overline{W}_\tau$. Thus once an update to $\mathcal{S}$ is finished being processed, no item in $\mathcal{S}'$ is more than an  $\frac{\eps}{2s}$ heavy hitter. Since there are $\ell > s$ duplications made to each update, the level sets used in $\mathcal{P}$ will be immediately saturated the moment the first update in $\mathcal{S}$ with a weight in that level set arrives. Thus there is never any intermediate point in the stream $\mathcal{S}$ where some of the weight of the stream is being withheld in the level sets used by $\mathcal{P}$. Thus the total weight of the stream seen so far at any given time has already been sent to the sample, so  each item $(e_i,w_i)$ seen so far in the stream has had a key $v_i$ generated for it. 

Now the coordinator holds the value $u$ such that $u$ is the $s$-th largest value in $\{v_1,v_2,\dots,v_{\ell \tau}\}$, where $\ell$ was the number of duplications. Here $v_i = w_i / t_i$, where $t_i$ is an exponetial variable and $w_i$ is a weight of one of the duplicates sent to the stream $\mathcal{S}'$. By results on the order statistics of collections of exponential variables (\cite{nagaraja2006order}), we have the distributional equality
\[u =	\left(\sum_{j=1}^s \frac{E_j}{\overline{W}_\tau - \sum_{q=1}^{j-1} w_{D(q)}}	 	\right)^{-1}\]
where $E_1,E_2,\dots,E_s$ are i.i.d. exponential random variables, and $D(q)$ are the random variable indices such that $v_{D(1)} \geq V_{D(2)} \geq \dots \geq V_{D(\tau \ell)}$ (the $D(q)$'s are known as anti-ranks). Since $w_{D(q)} \leq\frac{\eps}{2s} \overline{W}_\tau$, it follows that 

\begin{equation*}
      \begin{split}
     u &=	\left( (1 \pm \eps) \sum_{j=1}^s \frac{E_j}{\overline{W}_\tau}	 	\right)^{-1}\\
     & =	(1 \pm O(\eps)) \overline{W}_\tau \left( \sum_{j=1}^s E_j	 	\right)^{-1}\\
        & =	(1 \pm O(\eps)) \ell W_\tau \left( \sum_{j=1}^s E_j	 	\right)^{-1}\\
      \end{split}
  \end{equation*}

Now by Proposition \ref{prop:Expobound2}, we have $\pr{|\sum_{j=1}^s E_j - s | > \eps s} \allowbreak < 2e^{-\eps^2 s/5} \allowbreak < \delta$, where here we set $s = 10 \log(\delta^{-1})/\eps^2$. Conditioned on this not occurring, we have $u = 	(1 \pm O(\eps)) \ell W_\tau \frac{1}{s}$, thus $u\frac{s}{\ell} = (1 \pm O(\eps)) W_{\tau}$ as required, and the expected message complexity follows by Theorem \ref{thm:maincorrect} setting $s = \Theta(\frac{1}{\eps^2}\log(\frac{1}{\delta}))$.
\end{proof}

\begin{corollary}
  There is a distributed algorithm where at every time step the coordinator maintains a value $\widetilde{W}$ such that 
    $$(1-\eps) W_\tau \leq \widetilde{W} \leq (1 + \eps) W_\tau$$
    \textit{for every} $\tau \in [n]$  
    with probability $1-\delta$. The expected message complexity of the algorithm is $$O\left( \left(\frac{k}{\log(k)} + \eps^{-2}\log(\frac{\log(W)}{\delta \eps}) \right)\log\left( \frac{ \eps W}{\log(\delta^{-1})}\right)  \right).$$
\end{corollary}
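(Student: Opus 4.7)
The plan is to bootstrap the single-time guarantee of Theorem \ref{thm:l1main} into an all-times guarantee via a union bound over a small set of ``checkpoint'' times, exploiting monotonicity of both the true weight $W_\tau$ and the algorithm's output $\widetilde{W}(\tau)$. The key observation is that although there are up to $n$ time steps, the total weight $W_\tau$ takes only $O(\log(\eps W)/\eps)$ essentially distinct values when discretized at granularity $(1+\eps/4)$, and because $\widetilde{W}(\tau)$ is monotone in $\tau$, a guarantee at these checkpoints suffices to control $\widetilde{W}(\tau)$ at every intermediate time.

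First I would establish monotonicity: since all weights are positive, $W_\tau$ is non-decreasing in $\tau$, and since $\widetilde{W}(\tau) = s u_\tau/\ell$ where $u_\tau$ is the $s$-th largest key among all generated keys so far, inserting additional items into the multiset of keys can only increase (or leave unchanged) the $s$-th order statistic, so $\widetilde{W}(\tau)$ is also non-decreasing in $\tau$. Next I would define the discretization. Let $L_i = (1+\eps/4)^i$ for $i=0,1,\dots,T$ where $T = \lceil \log_{1+\eps/4}(W)\rceil = O(\log(\eps W)/\eps)$. For each $i$ let $\tau_i^{+} = \min\{\tau : W_\tau \geq L_i\}$ be the first time the weight reaches $L_i$, and $\tau_i^{-} = \tau_i^{+} - 1$ the last time strictly below $L_i$ (ignoring indices where no such $\tau$ exists). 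This yields $O(T)$ distinct checkpoint times.

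Now I would run Algorithm \ref{alg:L1} with accuracy parameter $\eps/4$ and failure probability $\delta' = \delta/(2T) = \Theta(\delta \eps / \log(\eps W))$. By Theorem \ref{thm:l1main} applied at each checkpoint and a union bound over the $O(T)$ checkpoints, the event $\mathcal{E}$ that $\widetilde{W}(\tau_i^{\pm}) = (1 \pm \eps/4) W_{\tau_i^{\pm}}$ holds simultaneously for all $i$ has probability at least $1-\delta$. Condition on $\mathcal{E}$ and fix any time $\tau$; let $i$ be such that $\tau_i^{+} \leq \tau \leq \tau_{i+1}^{-}$, so $W_\tau \in [L_i, L_{i+1})$. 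Monotonicity of $\widetilde{W}$ together with the checkpoint guarantees then gives
\begin{align*}
\widetilde{W}(\tau) &\geq \widetilde{W}(\tau_i^{+}) \geq (1-\eps/4)\, W_{\tau_i^{+}} \geq (1-\eps/4)\,L_i \geq \frac{1-\eps/4}{1+\eps/4}\,W_\tau \geq (1-\eps)\,W_\tau,\\
\widetilde{W}(\tau) &\leq \widetilde{W}(\tau_{i+1}^{-}) \leq (1+\eps/4)\, W_{\tau_{i+1}^{-}} < (1+\eps/4)\,L_{i+1} = (1+\eps/4)^2\, L_i \leq (1+\eps/4)^2\,W_\tau \leq (1+\eps)\,W_\tau,
\end{align*}
as required.

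Finally, substituting $\delta' = \Theta(\delta\eps/\log(\eps W))$ into the message complexity of Theorem \ref{thm:l1main} simply replaces the factor $\log(1/\delta)$ by $\log(\log(\eps W)/(\delta \eps))$, yielding precisely the claimed bound. The only genuinely delicate point is verifying that $\widetilde{W}(\tau)$ is monotone; I expect this to be the ``hardest'' step, though it really just reduces to recognizing that the $s$-th largest of a growing multiset of keys can only grow. With monotonicity in hand, the argument is essentially an $\eps$-net over weight levels rather than over time steps, which is what saves us from paying a factor of $n$ in the union bound and keeps the dependence on the stream length only polylogarithmic in $W$.
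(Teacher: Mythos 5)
Your proof is correct and follows the same high-level route as the paper: union-bound the single-time guarantee of Theorem~\ref{thm:l1main} over the $O(\log(W)/\eps)$ geometrically-spaced checkpoints at which the true weight crosses a power of $(1+\Theta(\eps))$, after sharpening the failure probability to $\delta' = \Theta(\delta\eps/\log W)$. The paper's own proof is a single sentence that stops at the union bound over checkpoints; you have correctly identified and filled the gap it leaves implicit, namely why a guarantee at the checkpoints extends to \emph{every} intermediate $\tau$. Your bridging step is the right one: because every key is generated immediately (the $\ell > s$ duplications saturate all level sets at once, so nothing is withheld), the estimator $\widetilde{W}(\tau) = s u_\tau/\ell$ is the $s$-th order statistic of a growing multiset of keys and hence non-decreasing, and $W_\tau$ is non-decreasing since weights are positive; running at accuracy $\eps/4$ then lets you sandwich $\widetilde{W}(\tau)$ between the two adjacent checkpoints. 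The only cosmetic discrepancies are the paper's typo writing $\delta' = \log(W)/(\delta\eps)$ instead of its reciprocal, and the $\log(1/\delta)$ versus $\log(1/\delta')$ inside the outer $\log(\eps W / \log(\cdot))$ factor, which differ only by an additive $\log\log$ term absorbed by the big-$O$.
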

\begin{proof}
The result follows from running Theorem \ref{thm:l1main} with $\delta' = \log(W)/(\delta\eps)$, and union bounding over the $\log(W)/\eps$ time steps $t$ in the stream where $W_t$ increases by a factor of $(1+\eps)$.
\end{proof}

\subsection{Lower Bound for $L_1$ Tracking}
We now demonstrate an $\Omega(k \log(W)/\log(k) + \log(W)/\eps))$ lower bound on the expected message complexity of any distributed $L_1$ tracker. 
We remark again that our lower bound does not contradict out upper bound of $O(k\log(\eps W)/\log(k) + \log( \eps W)/\eps^2)$, since it requires $1/\eps < W^{1-\xi}$ for some constant $\xi$. 

Note that the lower bound holds for both weighted and unweighted streams, as the hard example in the proof is a stream where all weights are equal to $1$. We remark that the $\Omega(\log(W)/\eps)$ portion of the bound is only interesting when $k < \frac{1}{\eps}$, and in this regime a better lower bound of $\Omega(\frac{\sqrt{k}}{\eps} \log(W))$ is given by \cite{huang2012randomized}. We include this portion of the bound in the Theorem for completeness, but remark that the main contribution of the Theorem is the lower bound of $\Omega(k \log(W)/\log(k))$.  

\begin{theorem} \label{thm:l1lowerbound}
Fix any constant $0 < q < 1$. Then any algorithm which $(\eps,\delta) = (\eps,q^2/64)$ solves the $L_1$ tracking problem (Definition \ref{def:L1}), must send at least $\Omega(k \log(W)/\log(k)\allowbreak  + \eps^{-1} \log(W))$ messages with probability at least $1-q$ (assuming $1/\eps < W^{1-\xi}$ for some constant $\xi > 0$). In particular, the expected message complexity of such an algorithm with $\delta = \Theta(1)$ is $\Omega(k \log(W)/\log(k) + \eps^{-1} \log(W))$.
\end{theorem}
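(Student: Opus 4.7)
The plan is to establish the two summands of the bound separately via two different hard instances, taking the maximum as the overall lower bound; the proof closely parallels that of Theorem~\ref{thm:hhlowerbound}, with the only substantive change being that the trigger ``a new heavy hitter arrives'' is replaced by ``the $L_1$ jumps by a multiplicative factor exceeding $(1+\eps)/(1-\eps)$, forcing the coordinator's estimate $\widetilde{W}$ to move across disjoint validity intervals''. In each half I would identify $T$ stream events at which any correct tracker must receive at least one new message with high probability, lower bound the expected message count by $T$, then upgrade to a $1-q$ probability statement using the same Markov-style tail computation as in Theorem~\ref{thm:hhlowerbound}.

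For the $\Omega(\eps^{-1}\log W)$ summand, the hard instance is a single-site stream with $w_1=1$ and $w_t = \eps(1+\eps)^{t-2}$ for $t\ge 2$, giving $W_t = (1+\eps)^{t-1}$ and total length $s = \Theta(\eps^{-1}\log W)$. I would partition $[s]$ into $T = \Theta(\eps^{-1}\log W)$ disjoint windows, each spanning enough steps that $W_t$ multiplies by more than $(1+\eps)/(1-\eps)$. Coupling the real stream against a ``paused'' stream that agrees up to the start of the window but receives no further updates inside it, I would argue that if no message is exchanged during the window on the real stream then the coordinator's output distribution at the end of the window is identical on the two streams (the site's and coordinator's views are identical by induction through the window), so the tracker is wrong on at least one of them; this event has probability at most $2\delta$ by a union bound. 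Hence each window fails to trigger a message with probability at most $2\delta$, and a Markov bound on the number of empty windows (as in Claim~\ref{claimhh}) gives $\Omega(\eps^{-1}\log W)$ messages with probability at least $1-q$.

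For the $\Omega(k\log W/\log k)$ summand I would reuse the epoch construction of Theorem~\ref{thm:hhlowerbound} essentially verbatim: $\eta = \Theta(\log W/\log k)$ epochs, in each of which every site receives exactly one update of weight $k^i$ under an adversarially chosen ordering. Since the total weight just before epoch $i$ is at most $2k^i$, the first epoch-$i$ arrival multiplies $W_t$ by at least $3/2$, which exceeds $(1+\eps)/(1-\eps)$ for all $\eps$ bounded sufficiently below $1/2$ (for the remaining range of $\eps$, blowing up the per-epoch weights by a constant factor rescues the argument at the cost of only a constant factor in the number of epochs). Hence the site receiving the first epoch-$i$ update must send or receive a message during the epoch with probability at least $1-q/64$. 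The main obstacle, inherited directly from the heavy-hitters proof, is the symmetry argument: a site that has not yet communicated in epoch $i$ cannot distinguish from its local view (a single arriving weight-$k^i$ item) whether it was first, middle, or last in the adversarial ordering, so by a standard hybrid argument the same $1-q/64$ bound applies to \emph{every} site that receives an epoch-$i$ update. This yields $(1-q/64)k$ expected messages per epoch, and the Markov tail calculation from the last paragraph of the proof of Theorem~\ref{thm:hhlowerbound} carries over unchanged to give $\Omega(k\log W/\log k)$ messages with probability $1-q$. Combining the two summands yields the stated bound.
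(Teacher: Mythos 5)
Your proposal is correct in spirit and structurally parallels the paper's proof (two hard instances, one per summand, plus a Chebyshev/Markov upgrade from expectation to a $1-q$ probability bound), but it diverges from the paper in both halves. For the $\Omega(\eps^{-1}\log W)$ term, the paper's Claim~\ref{claiml1} works directly with the single-site geometric stream and counts how many times the coordinator's estimate $u$ must \emph{change}: across the $\Theta(\eps^{-1}\log W)$ time steps where the true $L_1$ crosses a power of $(1+\eps)$, any fixed value of $u$ can be $(1\pm\eps)$-valid for at most $O(1)$ of them, so $u$ changes $\Omega(\eps^{-1}\log W)$ times, and each change is charged to a message. Your coupling against a paused stream reaches the same conclusion but makes explicit \emph{why} a change in $u$ requires a message, which the paper leaves implicit (it relies on the adversary's control over timing so that the coordinator cannot advance its estimate by counting rounds); be careful, though, that your claim ``the site's and coordinator's views are identical by induction through the window'' is slightly too strong as stated -- the site's view \emph{does} differ between the real and paused executions, and what you actually need (and can get, under the adversarial-timing model) is only that the coordinator's view is unchanged, which requires arguing the site sends no message in the paused window either. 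For the $\Omega(k\log W/\log k)$ term, you reuse the weighted heavy-hitter instance (one weight-$k^i$ update per site per epoch), whereas the paper deliberately uses an \emph{unweighted} instance ($2k^i$ weight-$1$ updates per site per epoch) precisely so it can remark that the lower bound applies to plain count-tracking, not just weighted $L_1$; your construction loses that byproduct but otherwise the core symmetry/anti-knowledge argument (a site that hasn't communicated cannot tell whether it was first in the adversarial order, so every site must speak with probability $1-q/64$) is the same as the paper's.
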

\begin{proof}
We proceed much in the way of Theorem \ref{thm:hhlowerbound}. 
We define $\eta = \Theta(\frac{\log(W)}{\log(k)})$ epochs as follows: at the end of the the $i-th$ epoch, for $i = 0,1,2,\dots,\eta -1$, there will have been exactly $k^i$ global stream updates processed since the start of the stream, each with weight $1$. Thus each epoch $i$ is deterministically defined, and contains $k^{i+1} - k^{i}$ global updates. Let the unweighted updates be $e_1,e_2,\dots,e_n$ (we can make these weighted by simply adding a weight of $1$. Here $n = W$, so the stream is both weighted and unweighted. In each epoch, we partition the updates arbitrarily over the sites $k$.

Now consider any algorithm $\mathcal{P}$ for $L_1$ tracking, and at any time step $t$ let $u$ denote the value held by $\mathcal{P}$ which tracks the $L_1$ (total weight seen so far in the stream). On a given time step $t \in [n]$, let $u_t$ denote the value of $u$ at time $t$, and let $R$ denote the concatination of all random coin flips used by the algorithm $\mathcal{P}$ for both the sites and the coordinator. 
We first claim the following:
\begin{claim} \label{claiml1} 
Let $0 < q < 1$ be any constant. 
Suppose $\mathcal{P}$ is a randomized algorithm which $(\eps,q/64)$-solves the $L_1$ tracking problem. Suppose the value it maintains for the $L_1$ approximation at all times is called $u$. Then there is a constant $C' = C'(q) > 0$ such that the following holds:
\begin{itemize}
    \item The value $u$ changes at least $\frac{C'}{\eps}\log(W)$ times with probability at least $1-q$.
\end{itemize}

\end{claim}
\begin{proof}
Consider the time steps $T= \{1,\texttt{rnd}(1+\eps),\texttt{rnd}((1+\eps)^2),\dots,\dots,n\}$, where the $L_1$ changes be a factor of $(1+\eps)$, where $\texttt{rnd}$ rounds the vaue to the nearest integer (note we can assume $n$ is a power of $(1+\eps)$ by construction). Suppose there is a randomized algorithm $\mathcal{P}$ which for each $t \in T$, succeeds in outputting a $(1\pm\eps)$ approximation to $t$ at time $t$ with probability $1-q^2/64$. Assuming that $1/\eps < W^{1-\xi}$ for some constant $\xi > 0$, we have $|T| > \frac{1}{\eps} \log(W)$ distinct such points. Let $X_t$ be a random variable that indicates that $\mathcal{P}$ is correct on $t$. We have $\ex{X_t} > 1-q/64$, and clearly $\ex{X_t X_{t'}} \leq 1$ for any $t,t'$. Thus Var$(\sum_{i=1}^{\frac{1}{\eps}\log(W)} X_i) \leq \frac{1}{\eps}\log(W)(1-q/64) + \frac{1}{\eps^2}\log^2(W) - \frac{1}{\eps^2}\log^2(W)(1-q/64)^2 \leq (q/30)\frac{1}{\eps^2}\log^2(n)$ for $n$ larger than some constant. By Chebyshev's inequality:
\[\pr{ \left|\frac{1}{\eps}\log(W)(1-q/64) - \sum_{i \leq \eps^{-1}\log(W)} X_i \right| >\frac{1}{\sqrt{q}} \sqrt{q/30}\frac{\log(W)}{\eps} } <  q \] 
Thus with probability at least $1-q$, $\mathcal{P}$ is correct on at least a  $\frac{C'}{\eps}\log(W)$ number of the points in $T$, for $C' = (1-q/64 - \frac{1}{\sqrt{30}} -) > 1/2$. Now suppose that, conditioned on this, the algorithm changed $u$ less than $\frac{C'}{\eps}\log(W)$ times. But each time $u$ changes, it can be correct for at most one of the values in $T$, which contradicts the fact that $\mathcal{P}$ is correct on at least a  $\frac{C'}{\eps}\log(W)$ of these points. So with probability at least $1-q$, the value of $u$ must change at least $\frac{C'}{\eps}\log(W)$ times, as needed. 

\end{proof}
Note that the lower bound of $\Omega(\log(W)/\eps)$ messages being sent with probability at least $1-q$ follows directly from the last claim. Note moreover, that our lower bound reduces to $\Omega(\log(W)/\eps)$ when $k < 8/\eps$, so we can now assume that $k >8/\eps$.

Now consider any epoch $i$, and let $X_i$ be the number of sites which sent or received a message from the coordinator in epoch $i$. Let $X_i^j$ indicate that site $j$ sent or recieved a message from the coordinator in epoch $i$, so that $X_i = \sum_{j \in [k]} X_i^j$. We claim $\ex{X_i} = \Omega(k)$.  To demonstrate this, first note that at the beginning of an epoch, the current $L_1$ of the stream is $k^i$. First condition on the event $\mathcal{E}_i$ that the estimate of the algorithm is correct at the beginning of the epoch. Note $\pr{\mathcal{E}_i} > 1-q/64$. By the law of total expectation: $\ex{X_i} \geq \ex{X_i \; | \; \mathcal{E}_i} (1 - q/64)$. 

Now consider the stream of updates $\sigma_i^j$ in epoch $i$ which send exactly $2k^i$ consecutive updates to each site $j$. Consider any stream $\sigma_i$ constructed by composing $\sigma_i^j$'s for any arbitrary permutation of the $j$'s.  Note that if site $j^*$ recieved the first set of updates in $\sigma_i$ and does not send a message after these updates, the algorithm will have an incorrect $L_1$ estimate at time $3k^i$. Since this does not occur with probability at least $1-q/64$, it follows that $\ex{X_i^{j^*}} \geq (1-q/64)$. But note that this must hold for all sites $j$, and once a site $j$ receives its set of $2k^i$ updates, since it could have been the first one to recieve the updates (unless it hears otherwise from the coordinator). In other words, since the randomized protocol must maintain the same guarantee for any ordering of the stream, it follows that every site $j$ must also send a message to the coordinator with probability at least $1-q/64$ after it sees its $2k^i$ updates (unless the coordinator sends a message to it first). Thus $\ex{X_i^{j}} \geq (1-q/64)$ for all $j$, which completes the claim that $\ex{X_i} \geq (1-q/64)k$, giving $\ex{\sum_{i=1}^\eta X_i} > k\eta(1-q/64)$. Thus $\ex{k \eta - \sum_{i=1}^\eta X_i} < k \eta q/64$, and so by a Markov bound over the last expectation, $\pr{\sum_{i=1}^\eta X_i < (1-1/64) k \eta } < q$. Since $X_i$ lower bounds the number of messages sent in the $i$-th epoch, this completes the proof.

\end{proof}

\section{Conclusions}
\label{sec:conc}
We presented message-efficient algorithms for maintaining a weighted random sample from a distributed stream. Our algorithm for weighted SWOR is optimal in its message complexity, space, as well as processing time. We also presented an optimal algorithm for $L_1$ tracking and the first distributed algorithms for tracking heavy hitters with residual error. One open question is whether we can obtain matching upper and lower bounds for weighted (or unweighted) sampling with replacement. While we can reduce weighted sampling with replacement to unweighted, known upper bounds are still a $\log(s)$ factor larger than the lower bounds.
Another problem is to extend our algorithm for weighted sampling to the sliding window model of streaming, where only the most recent data is taken into account for the sample. Finally, for the residual heavy hitters problem, can one remove the $\log(1/\epsilon)$ factor from our bounds?
\\\\
{\bf Acknowledgments}:
Rajesh Jayaram and David P. Woodruff thank the Simons Institute for the Theory of Computing where part of this work was done. They also acknowledge support by the National Science Foundation under Grant No. CCF-1815840. Srikanta Tirthapura acknowledges support by the National Science Foundation under grants 1527541 and 1725702.




\bibliographystyle{ACM-Reference-Format}

\bibliography{sampling}

\begin{thebibliography}{10}

\bibitem{BlinkDB}
Sameer Agarwal, Barzan Mozafari, Aurojit Panda, Henry Milner, Samuel Madden,
  and Ion Stoica.
\newblock Blinkdb: Queries with bounded errors and bounded response times on
  very large data.
\newblock In {\em {ACM EuroSys}}, pages 29--42, 2013.

\bibitem{andoni2010streaming}
Alexandr Andoni, Robert Krauthgamer, and Krzysztof Onak.
\newblock Streaming algorithms from precision sampling.
\newblock {\em arXiv preprint arXiv:1011.1263}, 2010.

\bibitem{Arackaparambil:2009}
Chrisil Arackaparambil, Joshua Brody, and Amit Chakrabarti.
\newblock Functional monitoring without monotonicity.
\newblock In {\em {ICALP}}, pages 95--106, 2009.

\bibitem{BDM02}
Brian Babcock, Mayur Datar, and Rajeev Motwani.
\newblock Sampling from a moving window over streaming data.
\newblock In {\em {SODA}}, pages 633--634, 2002.

\bibitem{Babcock:2003}
Brian Babcock and Chris Olston.
\newblock Distributed top-k monitoring.
\newblock In {\em {ACM SIGMOD}}, pages 28--39, 2003.

\bibitem{BICS10}
Radu Berinde, Piotr Indyk, Graham Cormode, and Martin~J. Strauss.
\newblock Space-optimal heavy hitters with strong error bounds.
\newblock 2009.

\bibitem{BOV15}
Vladimir Braverman, Rafail Ostrovsky, and Gregory Vorsanger.
\newblock Weighted sampling without replacement from data streams.
\newblock {\em IPL}, 115(12):923--926, 2015.

\bibitem{BOZ12}
Vladimir Braverman, Rafail Ostrovsky, and Carlo Zaniolo.
\newblock Optimal sampling from sliding windows.
\newblock {\em Journal of Computer and System Sciences}, 78:260 -- 272, 2012.

\bibitem{ChenZ17}
Jiecao Chen and Qin Zhang.
\newblock Improved algorithms for distributed entropy monitoring.
\newblock {\em Algorithmica}, 78(3):1041--1066, 2017.

\bibitem{CT15}
Yung-Yu Chung and Srikanta Tirthapura.
\newblock Distinct random sampling from a distributed stream.
\newblock In {\em IPDPS}, pages 532--541, 2015.

\bibitem{CTW16}
Yung{-}Yu Chung, Srikanta Tirthapura, and David~P. Woodruff.
\newblock A simple message-optimal algorithm for random sampling from a
  distributed stream.
\newblock {\em {IEEE} Trans. Knowl. Data Eng.}, 28(6):1356--1368, 2016.

\bibitem{Cormode:2005}
Graham Cormode and Minos Garofalakis.
\newblock Sketching streams through the net: Distributed approximate query
  tracking.
\newblock In {\em VLDB}, pages 13--24. VLDB Endowment, 2005.
\newblock URL: \url{http://dl.acm.org/citation.cfm?id=1083592.1083598}.

\bibitem{Cormode:2011}
Graham Cormode, S.~Muthukrishnan, and Ke~Yi.
\newblock Algorithms for distributed functional monitoring.
\newblock {\em ACM Trans. Algorithms}, 7(2):21:1--21:20, 2011.

\bibitem{cmyz12}
Graham Cormode, S.~Muthukrishnan, Ke~Yi, and Qin Zhang.
\newblock Continuous sampling from distributed streams.
\newblock {\em JACM}, 59(2):10:1--10:25, 2012.

\bibitem{DLT03}
Nick~G. Duffield, Carsten Lund, and Mikkel Thorup.
\newblock Estimating flow distributions from sampled flow statistics.
\newblock In {\em {SIGCOMM}}, pages 325--336, 2003.

\bibitem{DLT04}
Nick~G. Duffield, Carsten Lund, and Mikkel Thorup.
\newblock Flow sampling under hard resource constraints.
\newblock In {\em {SIGMETRICS}}, pages 85--96, 2004.

\bibitem{DLT07}
Nick~G. Duffield, Carsten Lund, and Mikkel Thorup.
\newblock Priority sampling for estimation of arbitrary subset sums.
\newblock {\em J. {ACM}}, 54(6):32, 2007.

\bibitem{ES06}
Pavlos~S. Efraimidis and Paul~G. Spirakis.
\newblock Weighted random sampling with a reservoir.
\newblock {\em IPL}, 97(5):181 -- 185, 2006.

\bibitem{Gemulla:2008}
Rainer Gemulla and Wolfgang Lehner.
\newblock Sampling time-based sliding windows in bounded space.
\newblock In {\em {ACM SIGMOD}}, pages 379--392, 2008.

\bibitem{Gibbons01}
Phillip~B. Gibbons.
\newblock Distinct sampling for highly-accurate answers to distinct values
  queries and event reports.
\newblock In {\em {VLDB}}, pages 541--550, 2001.

\bibitem{GT01}
Phillip~B. Gibbons and Srikanta Tirthapura.
\newblock Estimating simple functions on the union of data streams.
\newblock In {\em {ACM SPAA}}, pages 281--291, 2001.

\bibitem{GT02}
Phillip~B. Gibbons and Srikanta Tirthapura.
\newblock Distributed streams algorithms for sliding windows.
\newblock In {\em {ACM SPAA}}, pages 63--72, 2002.

\bibitem{huang2012randomized}
Zengfeng Huang, Ke~Yi, and Qin Zhang.
\newblock Randomized algorithms for tracking distributed count, frequencies,
  and ranks.
\newblock In {\em {ACM PODS}}, pages 295--306, 2012.

\bibitem{jayaram2018perfect}
Rajesh Jayaram and David~P Woodruff.
\newblock Perfect lp sampling in a data stream.
\newblock In {\em {IEEE FOCS}}, pages 544--555, 2018.

\bibitem{Jowhari:2011}
Hossein Jowhari, Mert Sa\u{g}lam, and G\'{a}bor Tardos.
\newblock Tight bounds for lp samplers, finding duplicates in streams, and
  related problems.
\newblock In {\em {PODS}}, pages 49--58, 2011.

\bibitem{Keralapura:2006}
Ram Keralapura, Graham Cormode, and Jeyashankher Ramamirtham.
\newblock Communication-efficient distributed monitoring of thresholded counts.
\newblock In {\em {ACM SIGMOD}}, pages 289--300, 2006.

\bibitem{knuth1997art}
D.~Knuth.
\newblock {\em The art of computer programming: sorting and searching},
  volume~3.
\newblock 1997.

\bibitem{Knuth}
D.~Knuth.
\newblock {\em The Art of Computer Programming, Volume 2 (3rd Ed.):
  Seminumerical Algorithms}.
\newblock 1997.

\bibitem{Manjhi:2005}
Amit Manjhi, Vladislav Shkapenyuk, Kedar Dhamdhere, and Christopher Olston.
\newblock Finding (recently) frequent items in distributed data streams.
\newblock In {\em {IEEE ICDE}}, pages 767--778, 2005.

\bibitem{nagaraja2006order}
HN~Nagaraja.
\newblock Order statistics from independent exponential random variables and
  the sum of the top order statistics.
\newblock {\em Advances in Distribution Theory, Order Statistics, and
  Inference}, pages 173--185, 2006.

\bibitem{TW11}
Srikanta Tirthapura and David~P. Woodruff.
\newblock Optimal random sampling from distributed streams revisited.
\newblock In {\em DISC}, pages 283--297, Berlin, Heidelberg, 2011.
  Springer-Verlag.
\newblock URL: \url{http://dl.acm.org/citation.cfm?id=2075029.2075065}.

\bibitem{TW19-arxiv}
Srikanta Tirthapura and David~P. Woodruff.
\newblock Optimal random sampling from distributed streams revisited.
\newblock {\em CoRR}, 2019.
\newblock URL: \url{https://arxiv.org/abs/1903.12065}.

\bibitem{Vitter85}
J.~S. Vitter.
\newblock Random sampling with a reservoir.
\newblock {\em ACM Transactions on Mathematical Software}, 11(1):37--57, 1985.

\bibitem{WoodruffZ17}
David~P. Woodruff and Qin Zhang.
\newblock When distributed computation is communication expensive.
\newblock {\em Distributed Computing}, 30(5):309--323, 2017.

\bibitem{XTB08}
Bojian Xu, Srikanta Tirthapura, and Costas Busch.
\newblock Sketching asynchronous data streams over sliding windows.
\newblock {\em Distributed Computing}, 20(5):359--374, 2008.

\bibitem{Yi:2013}
Ke~Yi and Qin Zhang.
\newblock Optimal tracking of distributed heavy hitters and quantiles.
\newblock {\em Algorithmica}, 65(1):206--223, 2013.

\end{thebibliography}


\end{document}